\let\mythebibliography\thebibliography
\let\myendthebibliography\endthebibliography
\let\thebibliography\mythebibliography
\let\endthebibliography\myendthebibliography
\newtheorem{theorem}{Theorem}
\newtheorem{lemma}{Lemma}[theorem]
\newcommand{%
    \scalebox{}{\input{}}
}[2]{%
    \scalebox{#2}{\input{#1}}
}
\definecolor{matheonblue}{rgb}{0.15,0.35,0.65}
\definecolor{sidinggreen}{RGB}{201,253,0}
\colorlet{segmentyellow}{yellow!18}
\colorlet{redsegmentarea}{red!20!segmentyellow}
\colorlet{bluesegmentarea}{blue!45!segmentyellow}
\colorlet{blacksegmentarea}{black!60!segmentyellow}
\definecolor{waterblue}{rgb}{0.19,0.62,1}
\tikzset{segment/.style={color = segmentyellow}}
\tikzset{siding/.style={color = sidinggreen}}
\xdef\procheight{.5}
\xdef\timeunit{.5}
\xdef\procdist{1}
\xdef\xdelt{.06}
\newcommand{\updstripjob}[7]{
    \path[fill = #1, fill opacity = .3]
        (#2 - #7*\procdist + \xdelt,#3) -- (#2 + \xdelt -#7*\procdist ,#3 + #4) -- (#2,#3 + #4 + #5)
        -- (#2,#3 + #5) -- cycle;
    \path[draw = #1!60!black, very thick]
      (#2 -#7*\procdist + \xdelt ,#3) -- 
      (#2 -#7*\procdist + \xdelt ,#3 + #4);
    \path[draw = #1!60!black]
      (#2 + \xdelt -#7*\procdist ,#3 + #4) -- (#2,#3 + #4 + #5);

}
\newcommand{\downdstripjob}[7]{
    \path[fill = #1, fill opacity = .3]
        (#2 + #7*\procdist - \xdelt,#3) -- (#2 - \xdelt + #7*\procdist ,#3 + #4) -- (#2,#3 + #4 + #5)
        -- (#2,#3 + #5) -- cycle;
    \path[draw = #1!60!black, very thick]
      (#2 + #7*\procdist - \xdelt ,#3) -- 
      (#2 + #7*\procdist - \xdelt ,#3 + #4);
    \path[draw = #1!60!black]
      (#2 - \xdelt + #7*\procdist ,#3 + #4) -- (#2,#3 + #4 + #5);
}
\newcommand{\upjob}[6]{
  \updstripjob{#1}{#2}{#3}{#4}{#5}{#6}{1}
}
\newcommand{\downjob}[6]{
  \downdstripjob{#1}{#2}{#3}{#4}{#5}{#6}{1}
}
\newcommand{\gnode}[5]{
  \node[draw = #1!60!black, fill = #1, fill opacity = .3, circle]
          (#2) at (#3,#4){#5};
}
\newcommand{\subfigure}[2][]{\subfloat[#1]{#2}}
\newcommand{\myline}[1]{\raisebox{1.5mm}{\underline{\hspace{#1}}}}
\newcommand{\N}{\ensuremath{\mathbb{N}}}
\newcommand{\sat}{\textsc{Sat}\xspace}
\newcommand{\rsat}[2]{\ensuremath{(#1,#2)}-\sat}
\newcommand{\dir}{\ensuremath{\mathrm{d}}}
\newcommand{\rb}{\ensuremath{\mathrm{r}}}
\newcommand{\lb}{\ensuremath{\mathrm{l}}}
\newcommand{\dirb}[2]{\ensuremath{#1_{#2}}}
\newcommand{\rightb}[1]{\ensuremath{\dirb{#1}{\rb}}}
\newcommand{\leftb}[1]{\ensuremath{\dirb{#1}{\lb}}}
\newcommand{\set}[1]{\ensuremath{#1}}
\newcommand{\dirbset}[2]{\ensuremath{\set{#1}^{#2}}}
\newcommand{\rightbset}[1]{\ensuremath{\dirbset{#1}{\rb}}}
\newcommand{\leftbset}[1]{\ensuremath{\dirbset{#1}{\lb}}}
\newcommand{\jobs}{\ensuremath{\set{J}}}
\newcommand{\rightbjs}{\ensuremath{\rightbset{J}}}
\newcommand{\leftbjs}{\ensuremath{\leftbset{J}}}
\newcommand{\edges}[1]{\ensuremath{\set{E}_{#1}}}
\newcommand{\proc}[1]{\ensuremath{p_{#1}}}
\newcommand{\transit}[1]{\ensuremath{\tau_{#1}}}
\newcommand{\rel}[1]{\ensuremath{r_{#1}}}
\newcommand{\start}[1]{\ensuremath{S_{#1}}}
\newcommand{\compl}[1]{\ensuremath{C_{#1}}}
\newcommand{\cupdot}{\ensuremath{\mathaccent\cdot\cup}}
\newcommand{\cmax}{\ensuremath{\compl{\max}}}
\newcommand{\eps}{\ensuremath{\varepsilon}}
\newcommand{\cs}{\ensuremath{\sigma}}
\newcommand{\ctnum}{\ensuremath{\kappa}}
\newcommand{\NP}{\mathsf{NP}}
\newcommand{\APX}{\mathsf{APX}}
\newcommand{\tindex}[2]{\ensuremath{#1^{#2}}}
\newcommand{\tstart}[1]{\ensuremath{A_{#1}}}
\newcommand{\truejs}[1]{\ensuremath{\set{T}^{#1}}}
\newcommand{\falsejs}[1]{\ensuremath{\set{F}^{#1}}}
\newcommand{\indefjs}{\ensuremath{\set{Q}}}
\newcommand{\blockingjs}[1]{\ensuremath{\set{B}_{#1}}}
\newcommand{\dummyjs}[1]{\ensuremath{\set{H}_{#1}}}
\newcommand{\varn}{\ensuremath{|X|}}
\newcommand{\cln}{\ensuremath{|C|}}
\newcommand{\job}[2]{\ensuremath{#1_{#2}}}
\newcommand{\truejob}[2]{\job{t}{#1}^{#2}}
\newcommand{\falsejob}[2]{\job{f}{#1}^{#2}}
\newcommand{\indefjob}[2]{\job{q}{#1}^{#2}}
\newcommand{\blockingjob}[2]{\tindex{\job{b}{#1}}{#2}}
\newcommand{\dummyjob}[2]{\tindex{\job{h}{#1}}{#2}}
\newcommand{\mtrue}{\mathrm{t}}
\newcommand{\mfalse}{\mathrm{f}}
\newcommand{\idle}[2]{\ensuremath{\Xi[#1,#2]}}
\newcommand{\noun}[1]{$\textsc{#1}$}
\newcommand{\I}{\ensuremath{\mathcal{I}}\xspace}
\renewcommand{\O}{\ensuremath{\mathcal{O}}\xspace}
 \title{Scheduling Bidirectional Traffic on a Path}
 \author{Yann Disser \and Max Klimm\thanks{This research was carried out in the framework of {\sc Matheon} supported by Einstein Foundation Berlin.} \and Elisabeth L\"ubbecke$^*$}
\date{\small Department of Mathematics, Technische Universit\"at Berlin\\
\texttt{\{disser,klimm,eluebbecke\}@math.tu-berlin.de}}
\title{Scheduling Bidirectional Traffic on a Path}
\begin{document}

\maketitle
\begin{abstract}
We study the fundamental problem of scheduling bidirectional traffic along a path composed of multiple segments.
The main feature of the problem is that jobs traveling in the same direction can be scheduled in quick succession on a 
segment, while jobs in opposing directions cannot cross a segment at the same time. 
We show that this tradeoff makes the problem significantly harder than the related flow shop problem, by proving that it is $\NP$-hard even for identical jobs.
We complement this result with a PTAS for a single segment and non-identical jobs.
If we allow some pairs of jobs traveling in different directions to cross a segment concurrently, the  problem becomes $\APX$-hard even on a single segment and with identical jobs.
We give polynomial algorithms for the setting with restricted compatibilities between jobs on a single and any constant number of segments, respectively.
\end{abstract}


\section{Introduction}
The scheduling of bidirectional traffic on a path is essential when operating single-track infrastructures such as single-track railway lines, canals, or communication channels. Roughly speaking, the schedule governs when to move jobs from one node of the path to another along the segments of the path. The goal is to schedule all jobs such that the sum of their arrival times at their respective destinations is minimized.
A central feature of real-world single-track infrastructures is that after one job enters a segment of the path, further jobs moving in the \emph{same} direction can do so with relatively little headway, while traffic in the \emph{opposite} direction usually has to wait until the whole segment is empty again (cf.~Fig.~\ref{fig:canal}a for a schematic illustration).

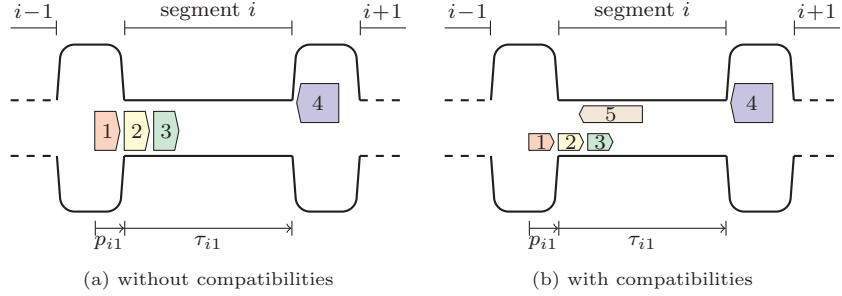
\begin{figure}[tb]
\centering
\subfigure[\scriptsize without compatibilities]{
\centering
\begin{tikzpicture}[x=0.615cm,y=0.82cm,scale=.9]
\footnotesize
\draw[thick,dashed] (-1.2,1) -- (-0.1,1);
\draw[thick,rounded corners=5] (-0.1,1) -- (0,0) -- (1.4,0) -- (1.5,1);
\draw[thick] (1.5,1) -- (5.5,1);
\draw[thick,rounded corners=5] (5.5,1) -- (5.6,0) -- (7.0,0) -- (7.1,1);
\draw[thick,dashed] (7.1,1) -- (8.2,1);
\draw[thick,dashed] (-1.2,2) -- (-0.1,2);
\draw[thick,rounded corners=5] (-0.1,2) -- (0,3) -- (1.4,3) -- (1.5,2);
\draw[thick] (1.5,2) -- (5.5,2);
\draw[thick,rounded corners=5] (5.5,2) -- (5.6,3) -- (7.0,3) -- (7.1,2);
\draw[thick,dashed] (7.1,2) -- (8.2,2);

\filldraw[fill=red!20] (0.8,1.1) -- (1.3,1.1) -- (1.4,1.45) -- (1.3,1.8) -- (0.8,1.8) -- (0.8,1.1);
\node at (1.1,1.45) {$1$};

\filldraw[fill=yellow!20] (1.5,1.1) -- (2.0,1.1) -- (2.1,1.45) -- (2.0,1.8) -- (1.5,1.8) -- (1.5,1.1);
\node at (1.8,1.45) {$2$};

\filldraw[fill=green!20] (2.2,1.1) -- (2.7,1.1) -- (2.8,1.45) -- (2.7,1.8) -- (2.2,1.8) -- (2.2,1.1);
\node at (2.5,1.45) {$3$};


\filldraw[fill=blue!20] (6.6,2.3) -- (5.7,2.3) -- (5.6,1.95) -- (5.7,1.6) -- (6.6,1.6) -- (6.6,2.3);
\node at (6.1, 1.95) {$4$};

\draw[|->|] (1.5,-0.3) -- node[below] {$\tau_{i1}$} (5.5,-0.3);

\draw[|->] (0.8,-0.3) -- node[below] {$p_{i1}$} (1.5,-0.3);

\draw[|-|] (1.5,3.3) -- node[above] {segment~$i$} (5.5,3.3);

\draw[-|] (-1.2,3.3) -- node[above] {$i\!-\!1$} (-0.1,3.3);

\draw[|-] ( 7.1,3.3) -- node[above] {$i\!+\!1$} (8.2,3.3);
\end{tikzpicture}
}
\hspace{0.05cm}
\subfigure[\scriptsize with compatibilities]{
\centering
\begin{tikzpicture}[x=0.615cm,y=0.82cm,scale=.9]
\footnotesize
\draw[thick,dashed] (-1.2,1) -- (-0.1,1);
\draw[thick,rounded corners=5] (-0.1,1) -- (0,0) -- (1.4,0) -- (1.5,1);
\draw[thick] (1.5,1) -- (5.5,1);
\draw[thick,rounded corners=5] (5.5,1) -- (5.6,0) -- (7.0,0) -- (7.1,1);
\draw[thick,dashed] (7.1,1) -- (8.2,1);
\draw[thick,dashed] (-1.2,2) -- (-0.1,2);
\draw[thick,rounded corners=5] (-0.1,2) -- (0,3) -- (1.4,3) -- (1.5,2);
\draw[thick] (1.5,2) -- (5.5,2);
\draw[thick,rounded corners=5] (5.5,2) -- (5.6,3) -- (7.0,3) -- (7.1,2);
\draw[thick,dashed] (7.1,2) -- (8.2,2);

\filldraw[fill=red!20] (0.8,1.1) -- (1.3,1.1) -- (1.4,1.25) -- (1.3,1.4) -- (0.8,1.4) -- (0.8,1.1);
\node at (1.1,1.25) {$1$};

\filldraw[fill=yellow!20] (1.5,1.1) -- (2.0,1.1) -- (2.1,1.25) -- (2.0,1.4) -- (1.5,1.4) -- (1.5,1.1);
\node at (1.8,1.25) {$2$};

\filldraw[fill=green!20] (2.2,1.1) -- (2.7,1.1) -- (2.8,1.25) -- (2.7,1.4) -- (2.2,1.4) -- (2.2,1.1);
\node at (2.5,1.25) {$3$};

\filldraw[fill=brown!20] (3.5,1.9) -- (2.1,1.9) -- (2.0,1.75) -- (2.1,1.6) -- (3.5,1.6) -- (3.5,1.9);
\node at (2.75, 1.75) {$5$};

\filldraw[fill=blue!20] (6.6,2.3) -- (5.7,2.3) -- (5.6,1.95) -- (5.7,1.6) -- (6.6,1.6) -- (6.6,2.3);
\node at (6.1, 1.95) {$4$};

\draw[|->|] (1.5,-0.3) -- node[below] {$\tau_{i1}$}  (5.5,-0.3);

\draw[|->] (0.8,-0.3) --  node[below] {$p_{i1}$} (1.5,-0.3);

\draw[|-|] (1.5,3.3) -- node[above] {segment~$i$} (5.5,3.3);

\draw[-|] (-1.2,3.3) -- node[above] {$i\!-\!1$} (-0.1,3.3);

\draw[|-] ( 7.1,3.3) -- node[above] {$i\!+\!1$} (8.2,3.3);
\end{tikzpicture}
}
\caption{\label{fig:canal}
Bidirectional scheduling of ship traffic through a canal, with and without compatibilities.
The processing time $p_{ij}$ of job~$j$ is the time needed to enter segment~$i$ with sufficient security headway, i.e., the delay before other jobs in the same direction may enter the segment.
The travel time~$\tau_{ij}$ is the time needed to traverse the entire segment once entered. 
In both (a) and (b), jobs~$1,2,3$ can enter the segment in quick succession, while job 4 has to wait until they left the segment.
In~(b), job 5 is compatible with jobs~$1,2,3$ so that they may cross concurrently. The time to cross turnouts is assumed to be negligible.\\[-1cm]
}
\end{figure}


Formally, in the bidirectional scheduling problem we are given a path of consecutive segments connected at nodes, and a set of jobs, each with a release date and a designated start and destination node. The time job~$j$ needs to traverse segment~$i$ is governed by two quantities: its \emph{processing time}~$p_{ij}$ and its \emph{transit time}~$\tau_{ij}$.
While the former prevents the segment from being used by any other job (running in \emph{either} direction), the latter only blocks the segment from being used by jobs running in \emph{opposite} direction. 
For example, this allows us to model settings with bidirectional train traffic on a railway line split into single-track segments that are connected by turnouts (cf.~\citet[Section~2]{LusbyLER2011}).
In this setting, jobs correspond to trains, the processing time of a job is the time needed for the train to fully enter the next segment, and the transit time is the time to traverse the segment (and entirely move into the next turnout).
While a train is entering a single-track segment of the line, no other train may do so.
The next train in the same direction can enter immediately afterwards, whereas trains in opposite direction have to wait until the segment is clear again in order to prevent a collision.

Fig.~\ref{fig:path-time} shows the path-time-diagram of a feasible schedule for two segments and four jobs.
Jobs are represented by parallelograms of the same color.
The processing time of a job on a segment is reflected by the height of the corresponding parallelogram, while the transit time is the remaining time ($y$-distance) to the lowest point of the parallelogram.
In a feasible schedule, jobs may not intersect, and, in particular, a job can only begin being processed at a segment once it has fully exited the previous segment.
Note that in the example it makes sense for the two rightbound jobs to switch order while waiting at the central node.

\begin{figure}[tb]
 \begin{center}
 ~\\[1cm]
   \begin{tikzpicture}[scale=.8]\label{pic:all-part1}
   \pgftransformxscale{.8*\procheight}
   \pgftransformyscale{-.3*\timeunit}
  


  \xdef\length{23}
  \xdef\transI{5}
  \xdef\transII{7}
   \draw  (-1-\transI,\length) -- (-1-\transI,0) -- 
          (-1,0) -- (-1,\length);
   \draw  (1+\transII,\length)  -- (1+\transII,0) --
          (1,0)  -- (1,\length);

  \node[anchor=south] (i1) at (-3.5, 0){$i=1$};
  \node[anchor=south] (i2) at (4.5, 0){$i=2$};

  \draw[->] (9,.75*\length) -- node[anchor=west]{time} +(0,.25*\length);
  

  \updstripjob{red}{-1}{0}{2}{\transI}{}{\transI}
  \updstripjob{blue}{-1}{2}{.5}{\transI}{}{\transI}

  \downdstripjob{green} {-6}{13}{4}{\transI}{}{\transI}
  \path (-1,13) -- node[font=\small,pos=.65]{$j$} (-1-\transI,17+\transI);

  \downdstripjob{black}{1}{0}{2}{\transII}{}{\transII}
  \downdstripjob{green}{1}{2}{4}{\transII}{}{\transII}
  \path (1+\transII,2+4) -- node[font=\small]{$j$} (1,2+\transII);

  \updstripjob{blue}{8}{13}{.5}{\transII}{}{\transII}
  \updstripjob{red}{8}{13.5}{2}{\transII}{}{\transII}

  \begin{scope}
    \draw (1+\transII,2) -- ++(1.6,0) node[font=\small,anchor=west] {$\rel{j}$} ;
    \draw (1+\transII,6) -- ++(.8,0);
    \draw (1+\transII,6+\transII) -- ++(.8,0);
    \draw[<->] (1.4+\transII,2) -- node[font=\small,anchor=west] {$\proc{2j}$} ++(0,4);
    \draw[<->] (1.4+\transII,6) -- node[font=\small,anchor=west] {$\transit{2j}$} ++(0,\transII);
    \draw[dashed] (-1-\transI,6+\transII) -- ++(2+\transI +\transII,0);    
    
    \draw[dashed] (-1-\transI,10+\transII) -- ++(\transI,0);   
    \draw (-1-\transI,6 +\transII) -- ++(-.8,0);
    \draw (-1-\transI,10 +\transII) -- ++(-.8,0);
    \draw[<->] (-1.4-\transI,6 +\transII) -- node[font=\small,anchor=east] {$\proc{1j}$} ++(0,4);
    \draw[<->] (-1.4-\transI,10 +\transII) -- node[font=\small,anchor=east] {$\transit{1j}$} ++(0,\transI);
    \draw (-1-\transI,10+\transI +\transII) -- ++(-1.6,0) node[font=\small,anchor=east] {$\compl{j}$};
  \end{scope}


\end{tikzpicture}\\[-1.5cm]
 \end{center}
 \caption{Representation of a schedule on two segments ($i=1,2$) and four jobs as a path-time-diagram. In this example, all jobs are processed immediately at their release date. Job~$j$ is released at time $r_j$ at the right end of segment~2 and needs to reach the left end of segment~1. Since it never has to wait, its completion time is smallest possible: $\smash{C_j = r_j + p_{2j} + \tau_{2j} + p_{1j} + \tau_{1j}}$.~\\[-.8cm]}
 \label{fig:path-time}
\end{figure}
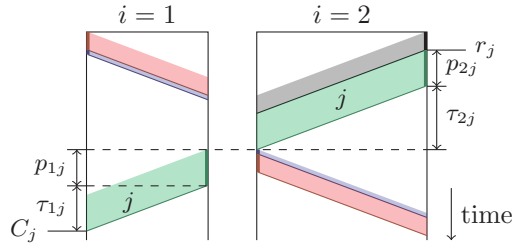

We also study a generalization of the model to situations where some of the jobs are allowed to pass each other when traveling in different directions (cf.~Fig.~\ref{fig:canal}b). This is a natural assumption, e.g., when scheduling the ship traffic on a canal, where smaller ships are allowed to pass each other while larger ships are not (cf.~\citet{LuebbLM2014}). 
In practice, the rules that decide which ships are allowed to pass each other are quite complex and depend on multiple parameters of the ships such as length, width, and draught (e.g., cf.~\cite{SeeSchStrO2013}). We model these complex rules in the most general way by a bipartite compatibility graph for each segment, where vertices correspond to jobs and two jobs running in different directions are connected by an edge if they can cross the segment concurrently. 

\subsubsection*{Our results.}
Table~\ref{tab:overview} gives a summary of our results.
We first show that scheduling bidirectional traffic is hard, even without processing times and with identical transit times~(Section~\ref{sec:unbounded}). 
The proof is via a non-standard reduction from \noun{MaxCut}. The key challenge is to use the local interaction of the jobs on the path to model global interaction between the vertices in the \noun{MaxCut}. We overcome this issue by introducing polynomially many vertex gadgets encoding the partition of each vertex and synchronizing these copies along the instance.
We complement this result with a polynomial time approximation scheme (PTAS) for a single segment and arbitrary processing times~(Section~\ref{sec:PTAS}) using the $(1+\epsilon)$-rounding technique of~\citet{Afrati1999}.

We then show that bidirectional scheduling with arbitrary compatibility graphs is $\APX$-hard already on a single segment and with identical processing times~(Section~\ref{sec:arbitrary-conflicts}). 
The proof is via a reduction from a variant of \noun{Max-3-Sat} which is $\NP$-hard to approximate within a factor smaller than 1016/1015, as shown by Berman et al.~\cite{BermanKS2003}. As a byproduct, we obtain that also minimizing the makespan is $\APX$-hard in this setting.
We again complement our hardness result by polynomial algorithms for identical jobs on constant numbers of segments and with a constant number of compatibility types~(Section~\ref{sec:constant_segments}).

\newcolumntype{P}[1]{>{\centering\arraybackslash}p{#1}}
\newcommand{\scs}{\scriptsize}

\subsubsection*{Significance.}

With this paper we initiate the mathematical study of optimized dispatching of traffic in networks with bidirectional edges, e.g.\ train networks, ship canals, communication channels, etc.
In all of these settings, traffic in one direction limits the possible throughput in the other direction.
While in the past decades a wealth of results has been established for the unidirectional case (i.e., classical scheduling, and, in particular, flow shop models), surprisingly, and despite their practical importance, bidirectional infrastructures have not received a similar attention so far.

The bidirectional scheduling model that we propose captures the essence of bidirectional traffic by distinguishing processing and transit times. 
This simple framework already allows to exhibit the computational key challenges of this setting. 
In particular, we show that bidirectional scheduling is already hard for identical jobs on a path, which is in contrast to the unidirectional case.
We observe another increase in complexity when allowing specific types of traffic to use an edge concurrently in both directions.
In practice, this is reasonable e.g.\ for ship traffic in a canal, where small vessels may pass each other.
In that sense, we show that scheduling ship traffic is already hard on a single edge and, thus, considerably harder than scheduling train traffic.

While bidirectional scheduling is hard in general, we show that certain features of real-world scenarios can make the problem tractable, e.g., a small number of turnouts along a single path and/or a small number of different vessels. 
In this work we restrict ourselves to simple paths, but we hope that our results are a first step towards understanding traffic in general bidirectional networks.

\begin{table}[tb]
\caption{Overview of our results for bidirectional scheduling.\newline
$^1$~even if $p=0$, $\transit{i} = 1$, $^2$~only if $\proc{}=1, \transit{i} \leq \text{const}$, $^3$~even if $\transit{i} = \proc{} = 1$.}
\label{tab:overview}

\begin{minipage}{\textwidth}
\begin{center}
\scriptsize
\begin{tabular*}{\linewidth}{
@{}l
@{\extracolsep{\fill}} P{2.cm}
@{\extracolsep{1ex}}   P{2.cm}
@{\extracolsep{1ex}}   P{2.cm}
}
\toprule
& \multicolumn{3}{c}{\myline{2.5cm} Number $m$ of segments \myline{2.5cm}}\\
compatibilities & $m=1$ & $m$ const. & $m$ arbitrary\\
\midrule

\multicolumn{3}{l}{\bf Different jobs $\proc{ij} = \proc{j}$, $\tau_{ij}=\tau_i$}\\
%
 & \multicolumn{1}{c}{\cellcolor{yellow!30}\scriptsize{PTAS }\scs{[Thm.~\ref{thm:ptas-oneseg-nocompatibilities}]}} & \multicolumn{1}{c}{\multirow{2}{*}{\cellcolor{red!30}}} & \multicolumn{1}{c}{\cellcolor{red!30}} \\
\multirow{-2}{*}{none/all compatible} & \multicolumn{2}{c}{\cellcolor{red!30}\cellcolor{red!30}$\NP$-hard \scs{\citep{LenstraKB1977}}} & \multicolumn{1}{c}{\multirow{-2}{*}{\cellcolor{red!30}$\NP$-hard$^1$~\scs{[Thm.~\ref{thm:hard_m}]}}} \\

 \\[-2ex]
 \midrule
 
\multicolumn{3}{l}{\bf Identical jobs $\proc{ij} = \proc{}$, $\tau_{ij}=\tau_i$}\\
none compatible & \multicolumn{1}{c}{\cellcolor{green!30}} & \multicolumn{1}{c}{\cellcolor{yellow!30}} & \multicolumn{1}{c}{\cellcolor{red!30}}\\[1ex]
const.\ \# types & \multicolumn{1}{c}{\multirow{-2}{*}{\cellcolor{green!30} polynomial \scs{[Thm.~\ref{thm:cct-singleseg-poly}]}}} &
\multicolumn{1}{c}{\multirow{-2}{*}{\cellcolor{yellow!30} polynomial$^2$~\scs{[Thm.~\ref{thm:identical-constantseg-makespan}]}}} &
\multicolumn{1}{c}{\multirow{-2}{*}{\cellcolor{red!30} $\NP$-hard$^1$~\scs{[Thm.~\ref{thm:hard_m}]}}}\\
\\[-2ex]
 arbitrary & \multicolumn{3}{c}{\cellcolor{red!30}$\APX$-hard$^3$~\scs{[Thm.~\ref{thm:graph-apx-hardness-sum}]}} \\
 \\[-2ex]
\bottomrule
\end{tabular*}
\end{center}
\end{minipage}
\end{table}

\subsubsection*{Related work.}
Scheduling problems are a fundamental class of optimization problems with a multitude of known hardness and approximation results (cf. \citet{LawlerLRS1993} for a survey). To the best of our knowledge, the bidirectional scheduling model that we propose and study in this paper has not been considered in the past nor is it contained as a special case in any other scheduling model. We give an overview of known results for related models.

For a single segment and jobs traveling from left to right, bidirectional scheduling reduces to the classical single machine scheduling problem, which \citet{LenstraKB1977} showed to be hard when minimizing total completion time. \citet{Afrati1999} gave a PTAS with generalizations to multiple identical or a constant number of unrelated machines. \citet{ChekuriKhanna2001} further generalized the result to related machines.
We give a different generalization for bidirectional scheduling.
For unrelated machines \citet{HoogeveenSW1998} showed that the completion time cannot be approximated efficiently within arbitrary precision, unless $\mathsf{P} = \NP$.

Bidirectional scheduling also has similarities to scheduling of two job families with a setup time that is required between jobs of different families. 
The general comments in \citet{PottsKov2000} on dynamic programs for such kinds of problems apply in part to our technique for Theorem~\ref{thm:cct-singleseg-poly}.

When all jobs need to be processed on all segments in the same order and all transit times are zero, bidirectional scheduling reduces to flow shop scheduling. \citet{GareyJS1976} showed that it is $\NP$-hard to minimize the sum of completion times in flow shop scheduling, even when there are only two machines and no release dates. They showed the same result for minimizing the makespan on three machines. 
\citet{HoogeveenSW1998} showed that there is no PTAS for flow shop scheduling without release dates, unless $\mathsf{P} = \NP$. 
In contrast, \citet{BrucknerKW2005} showed that flow shop problems with unit processing times can be solved efficiently, even when all jobs require a setup on the machines that can be performed by a single server only.

Job shop scheduling is a generalization of flow shop scheduling that allows jobs to require processing by the machines in any (not necessarily linear) order, cf.~\citet[Section~14]{LawlerLRS1993} for a survey. In this setting, the minimization of the sum of completion times was proven to even be $\mathsf{MAX}$-$\mathsf{SNP}$-hard by~\citet{HoogeveenSW1998}. \citet{QueyranneS00} gave a $\O((\log(m\mu)/\log\log(m\mu))^2)$-approximation for the weighted case with release dates, where~$\mu$ denotes the maximum number of operations per job. 
\citet{FishkinJM2003} gave a PTAS for a constant number of machines and operations per job.
It is worth noting that job shop scheduling does not contain bidirectional scheduling as a special case, since it does not incorporate the distinction between processing and transit times for jobs passing a machine in different directions. 

Job shop scheduling problems with unit jobs are strongly related to packet routing problems where general graphs are considered, see the discussion in seminal paper by \citet{LeightonMR1994}. They proved that the makespan of any packet routing problem is linear in two trivial lower bounds, called the congestion and the dilation. For more recent progress in this direction, see, e.g., \citet{Scheideler1998} and \citet{PeisWiese2011}. All these works, however, consider minimizing the makespan and assume that the orientation of the graph is fixed. \citet{AntonBCFMNP2014} also consider average flow time on a directed line. They give lower bounds for competitive ratios in the online setting and~$\O(1)$ competitive algorithms with resource augmentation for the maximum flow time.


\section{Preliminaries}
In the bidirectional scheduling problem, we are given a set $M = \{1,\dots,m\}$ of segments which we imagine to be ordered from left to right. Further, we are given two disjoint sets of $\rightbjs$ and $\leftbjs$ of \emph{rightbound} and \emph{leftbound} jobs, respectively, with $\jobs = \rightbjs \cup \leftbjs$ and $n = |\jobs|$.
Each job is associated with a \emph{release date} $\rel{j} \in \N$, a \emph{start segment} $s_j$ and a \emph{target segment} $t_j$, where $s_j \leq t_j$ for rightbound jobs and $s_j \geq t_j$ for leftbound jobs. A rightbound job $j$ needs to cross the segments $s_j, s_j+1,\dots,t_j-1,t_j$, and a leftbound job needs to cross the segments $s_j,s_j-1,\dots,t_j+1,t_j$. We denote by~$M_j$ the set of segments that job~$j$ needs to cross. Each job~$j$ is associated with a processing time $\proc{j} \in \N$ and each segment~$i$ is associated with a transit time $\transit{i} \in \N$. Note that we restrict ourselves to identical processing times for a single job and identical transit times for a single segment. We call $\proc{j} + \transit{i}$ the \emph{running time} of job~$j$ on segment~$i$. 

A \emph{schedule} is defined by fixing the start times $\start{ij}$ for each job $j$ on each segment $i \in M_j$. The \emph{completion time} of job $j$ on segment $i$ is then defined as $\compl{ij} = \start{ij} + \proc{j} + \transit{i}$. The overall completion time of job $j$ is $\compl{j} = \compl{t_jj}$. A schedule is feasible if it has the following properties.
\begin{compactenum}
\item Release dates are respected, i.e., $r_j \leq \start{s_jj}$ for each~$j \in \jobs$.

\item Jobs travel towards their destination, i.e., 
$\compl{ij} \leq \start{i+1,j}$ (resp. $\compl{ij} \leq \start{i-1,j}$) for rightbound (resp. leftbound) jobs $j$ and $i \in M_j \setminus \{t_j\}$.

\item Jobs $j, j'$ traveling in the same direction are not processed on segment~$i \in M_j \cap M_{j'}$ concurrently, 
i.e.,
$[\start{ij},\start{ij}+\proc{j})\cap[\start{ij'},\start{ij'}+\proc{j'}) = \emptyset$.

\item \label{it:opposing_jobs} Jobs~$j, j'$ traveling in different directions are neither processed nor in transit on segment~$i \in M_j \cap M_{j'}$ concurrently, i.e., $[\start{ij}, \compl{ij})\cap[\start{ij'}, \compl{ij'})=\emptyset$.
\end{compactenum}

 
%
%


Our objective is to minimize the \emph{total 
completion time}~$\sum C_j=\sum_{j\in \jobs} C_j$.

Other natural objectives are the minimization of the \emph{makespan}~$\cmax=\max\{C_j \mid j\in 
\jobs\}$ or the \emph{total waiting time}~$\sum W_j=\sum_{j\in \jobs} W_j$ where the 
individual waiting time of a job~$j$ is~$W_j = \compl{j} - \sum_{i \in M_j} (\proc{j} + \transit{i}) - r_j$. Note that minimizing the total waiting time is equivalent to minimizing the total completion time.



We also consider a generalization of the model, where some of the jobs traveling in different directions are allowed to pass each other. Formally, for each segment $i$, we are given a bipartite \emph{compatibility graph} $G_i = (\rightbjs \cupdot \leftbjs,\edges{i})$ with $\edges{i} \subseteq \rightbjs \times \leftbjs$. Two jobs $j,j'$ that are connected by an edge in~$G_i$ are allowed to run on segment~$i$ concurrently, i.e., condition  \ref{it:opposing_jobs} above need not be satisfied. Specifically, jobs~$j,j'$ may be processed or be in transit simultaneously.

All proofs omitted in the following sections can be found in the appendix.

\section{Hardness of bidirectional scheduling}\label{sec:unbounded}

First, we show that scheduling bidirectional traffic is hard, even when all processing times are zero and all transit times coincide. 
In other words, we eliminate all interaction between jobs in the same direction and show that hardness is merely due to the decision when to switch between left- and rightbound operation of each segment. This is in contrast to one-directional (flow shop) scheduling with identical processing times, which is trivial.
Formally, we show the following result.

\begin{restatable}{theorem}{unbounded}
\label{thm:hard_m}
The bidirectional scheduling problem is $\NP$-hard even if~$p_j=0$ and~$\tau_i=1$ for each~$j\in\jobs$ and $i\in M$.\label{thm:unbounded_hardness}
\end{restatable}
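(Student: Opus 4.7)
The natural approach is a reduction from \noun{MaxCut}: given a graph $G = (V, E)$ and a target cut size $k$, I would build a bidirectional scheduling instance on a path with polynomially many segments whose optimal total completion time is at most some threshold $T(k)$ if and only if $G$ admits a cut of size at least~$k$. Since $p_j = 0$, jobs in the same direction never obstruct one another, and the only binding constraint is that opposing directions cannot occupy a segment during the same unit-length time window. A schedule on each segment therefore reduces to a sequence of per-unit direction choices, which is the natural place to inject a Boolean encoding.

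For the vertex encoding, I would associate each $v \in V$ with polynomially many \emph{vertex gadgets} spread along the path, each occupying a dedicated segment and time window. A gadget consists of a small bundle of opposing jobs whose relative order (rightbound before leftbound, or vice versa) realizes a binary assignment $x_v \in \{0, 1\}$. In isolation a gadget is symmetric, so both orders have identical local cost and the assignment is free. Consistency between copies of the same $v$ is enforced by \emph{linking jobs} that traverse several consecutive copies in sequence: if two consecutive copies of $v$ disagree, the linking job is forced to wait, producing a penalty of order $\Omega(|V|\cdot|E|)$ that dominates any savings elsewhere. A minimum-cost schedule therefore assigns a single consistent value to every copy of each vertex, exactly as the authors describe by "synchronizing these copies along the instance."

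Each edge $\{u,v\} \in E$ is then tested at some location where a copy of $u$'s gadget and a copy of $v$'s gadget interact through a short tester job whose timing and routing I would calibrate so that its completion time grows by exactly one unit when the two gadgets are set to the same value (the edge is \emph{not} cut), and is unaffected otherwise. Summing over all edges gives a total completion time of the form $T_0 + (|E| - \mathrm{cut}(x))$, so the scheduling optimum and the \noun{MaxCut} optimum differ only by a fixed additive constant, establishing the reduction.

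The main obstacle will be the delicate calibration of the gadgets. The synchronization penalty must dominate the combined contribution of all $|E|$ edge testers simultaneously, yet the whole construction has to remain polynomial in $|V| + |E|$; at the same time every edge of $G$ must actually find a compatible pair of copies on the path at which to be tested, which forces the copies of the $|V|$ vertices to be interleaved and phase-aligned along the path. Making this combinatorially feasible under the highly restrictive $p_j = 0$, $\tau_i = 1$ regime, where all scheduling cost stems from direction switches rather than from processing differences, is precisely what distinguishes the required reduction from a standard local-gadget construction.
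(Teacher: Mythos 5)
Your proposal follows essentially the same route as the paper: a reduction from \noun{MaxCut} with vertex gadgets whose direction-order encodes a binary assignment, traveling jobs that synchronize the polynomially many copies of each vertex, edge gadgets whose cost differs by a constant depending on whether the two endpoint states agree, and a weight hierarchy making synchronization penalties dominate. The one step you flag as the main obstacle --- bringing the two endpoints of each edge into adjacent time slots --- is exactly what the paper resolves with its \emph{transposition gadget}, which swaps the order of two consecutive vertex gadgets from one vertex segment to the next, so that a linear number of segments per edge suffices.
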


We reduce from the \noun{MaxCut} problem which is contained in Karp's list of~21 $\NP$-complete problems~\cite{Karp1972}. Given an undirected graph $G=(V,E)$ and some $k\in\mathbb{N}$ we ask for a partition $V=V_{1} \cupdot V_{2}$ with $|E\cap(V_{1}\times V_{2})|\geq k$.


For a considered instance \I of \noun{MaxCut} we construct an instance of the bidirectional scheduling problem which can be scheduled without exceeding some specific waiting time if and only if \I admits a solution. The translation to sum of completion times is then straightforward. 


\begin{figure}[t]
\centering
\includegraphics[width=.3\textwidth]{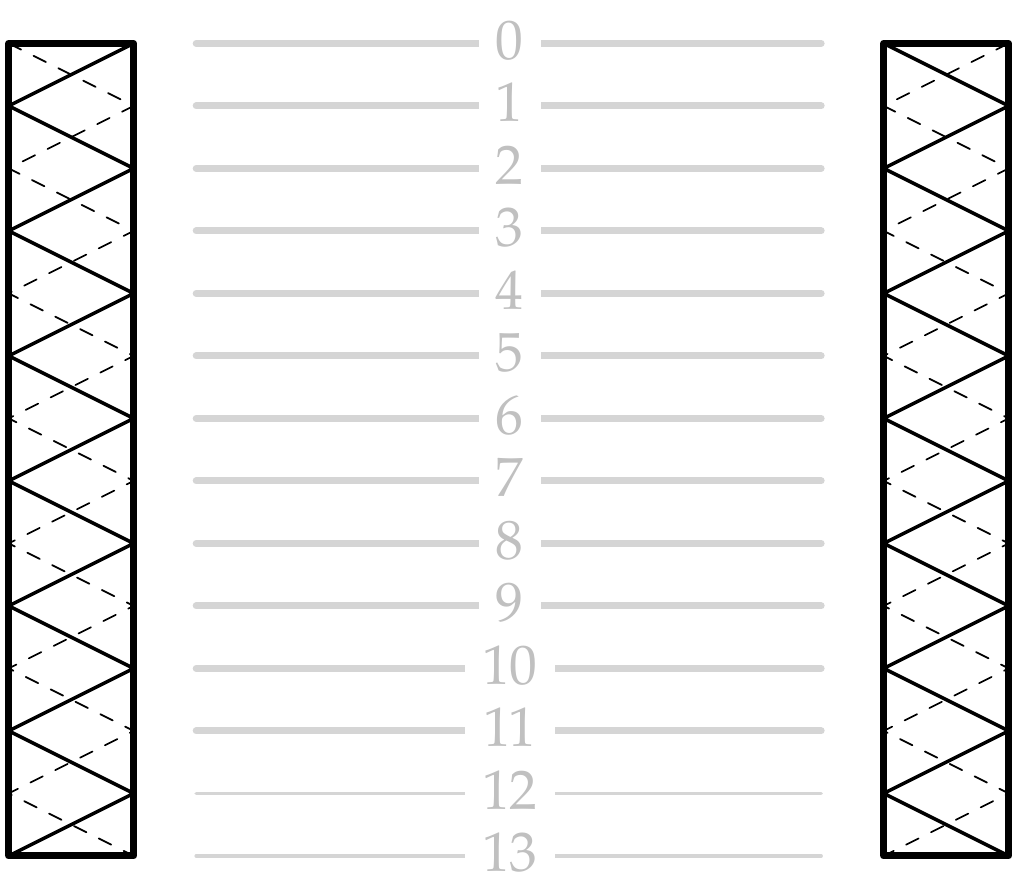}
\protect \caption { 
Illustration of the vertex gadget in the leftbound (left) and the rightbound (right) state. 
At each time~$t=0,\dots,11$ multiple right- and leftbound jobs are released.
Since all jobs have processing time 0, jobs in the same direction can be processed simultaneously.
The only two sensible schedules differ in whether leftbound jobs are processed at even or odd times.
\label{fig:vertex_gadget}\\[-0.8cm] }
\end{figure}

A cornerstone of our construction is the \emph{vertex gadget} that occupies a fixed time interval on a single segment and can only be (sensibly) scheduled in two ways~(cf.~Fig.~\ref{fig:unbounded_sketch}), which we interpret as the choice whether to put the corresponding vertex in the first or second part of the partition, respectively. We introduce multiple \emph{vertex segments} that each have exactly one vertex gadget for each vertex in \I and add further gadgets that ensure that the state of all vertex gadgets for the same vertex is the same across all segments. These gadgets allow us to synchronize vertex gadgets on consecutive vertex segments in two ways. We can either simply synchronize vertex gadgets that occupy the same time interval on the two vertex segments (\emph{copy gadget}), or we can synchronize pairs of vertex gadgets occupying the same consecutive time intervals on the two vertex segments by linking the first gadget on the first segment with the second one on the second segment and vice-versa, i.e., we can transpose the order 
of two consecutive gadgets from one vertex segment to the next (\emph{transposition gadget}).

We construct an edge gadget for each edge in \I that incurs a small waiting time if two vertex gadgets in consecutive time intervals and segments are in different states and a slightly higher waiting time if they are in the same state. By tuning the multiplicity of each job, we can ensure that only schedules make sense where vertex gadgets are scheduled consistently. Minimizing the waiting time then corresponds to maximizing the number of edge gadgets that link vertex gadgets in different states, i.e., maximizing the size of a cut.

In order to fully encode the given \noun{MaxCut} instance \I, we need to introduce an edge gadget for each edge in \I. However, edge gadgets can only link vertex gadgets in consecutive time intervals. We can overcome this limitation by adding a sequence of vertex segments and transposing the order of two vertex gadgets from one segment to the next as described before. With a linear number of vertex segments we can reach an order  where the two vertex gadgets we would like to connect with an edge gadget are adjacent. At that point, we can add the edge gadget, and then repeat the process for all other edges in \I (cf.~Fig.~\ref{fig:unbounded_sketch}).

We can reformulate Theorem~\ref{thm:unbounded_hardness} for nonzero processing times, simply by making the transit time large enough that the processing time does not matter.


\begin{restatable}{corollary}{corUnboundedWithProcessing}
  The bidirectional scheduling problem is $\NP$-hard even if~$p_j=1$ and~$\transit{i}=\transit{}$ for each~$j\in\jobs$ and $i\in M$.
\end{restatable}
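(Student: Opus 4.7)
The plan is to lift the construction from the proof of Theorem~\ref{thm:unbounded_hardness} to the new parameter regime by a time-scaling argument. Starting from a \noun{MaxCut} instance, apply the reduction of Theorem~\ref{thm:unbounded_hardness} to produce an instance $\mathcal{I}$ with $p_j = 0$ and $\tau_i = 1$. From $\mathcal{I}$, build an instance $\mathcal{I}'$ on the same set of jobs and segments with $p'_j = 1$, $\tau'_i = \tau$, and release dates $r'_j = \alpha\, r_j$, where $\tau$ is a sufficiently large polynomial (say $\tau \ge n^3$) and $\alpha := \tau + n$. The intuition is that each unit of time in $\mathcal{I}$ is inflated into a slot of width $\alpha$ in $\mathcal{I}'$ that is wide enough to accommodate any block of up to $n$ same-direction jobs back-to-back, each of processing length $1$ followed by transit length $\tau$.

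For the forward direction I would show that every feasible schedule $\sigma$ for $\mathcal{I}$ with waiting time $W$ lifts to a feasible schedule $\sigma'$ for $\mathcal{I}'$ with waiting time at most $\alpha W + O(n^2 m)$. On every segment, partition the jobs into maximal same-direction \emph{blocks}; since $p_j = 0$, all members of a block may share a common start time $S_B$ in $\sigma$. In $\sigma'$, place the $k$-th member of each block $B$ at time $\alpha S_B + (k-1)$ to accommodate the unit processing time. Opposite-direction feasibility between consecutive blocks at times $S_{B_1} < S_{B_2}$ is preserved because the last job of $B_1$ finishes by time $\alpha S_{B_1} + |B_1| + \tau \le \alpha S_{B_1} + \alpha \le \alpha S_{B_2}$, using $|B_1| \le n$ and $\alpha = \tau + n$. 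Cross-segment constraints are handled by performing the lift in the order of each job's path and absorbing the intra-block offsets in the slack provided by $\alpha > \tau + 1$.

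For the reverse direction I would show that every feasible $\sigma'$ for $\mathcal{I}'$ projects back to a feasible $\sigma$ for $\mathcal{I}$ with $W(\sigma) \le W(\sigma')/\alpha + O(nm)$. The projection partitions $\sigma'$ itself into maximal same-direction blocks on each segment, assigns to each block $B$ the start time $\lfloor S'_B / \alpha \rfloor$, and collapses all its members to that common time in $\sigma$. The opposite-direction headway of at least $\tau + 1$ in $\mathcal{I}'$ separates block starts by at least one unit after scaling, and release dates are preserved by construction. Choosing $\tau$ large enough that $O(n^2 m)/\alpha < 1$ guarantees that the additive errors in both directions are absorbed, so the integer-valued yes/no gap of the decision problem underlying Theorem~\ref{thm:unbounded_hardness} translates to a distinguishable gap for $\mathcal{I}'$.

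The main obstacle is verifying that the intra-block offsets introduced in the forward direction do not violate any cross-segment transit constraint; this is handled segment by segment in the direction of each job's travel, using that $\alpha$ exceeds the sum of one block length and one transit time. A secondary obstacle is the reverse projection, where one must argue that without loss of generality the optimal schedule for $\mathcal{I}'$ has a tight block structure aligned to the $\alpha$-grid; this follows from the standard observation that within any maximal same-direction block the jobs can be rescheduled to back-to-back starts without increasing the cost.
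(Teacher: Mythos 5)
Your overall strategy---inflating the transit times so that unit processing times become negligible and scaling the release dates accordingly---is the same as the paper's, and your forward direction (lifting a schedule of waiting time $W$ for $\mathcal{I}$ to one of waiting time $\alpha W+O(n^2m)$ for $\mathcal{I}'$ by serializing each same-direction block) matches the paper's argument, which bounds the extra waiting by one unit per pair of jobs per segment, i.e.\ by less than $n^2m=\tau$. The genuine gap is in your reverse direction. First, the projection is not well defined as a feasible schedule: two opposite-direction blocks in $\sigma'$ need only be separated by $\tau+1<\alpha$, so $\lfloor S'_{B_1}/\alpha\rfloor$ and $\lfloor S'_{B_2}/\alpha\rfloor$ can coincide, putting incompatible jobs into the same unit slot of $\mathcal{I}$; nothing forces an optimal schedule for $\mathcal{I}'$ to align its block starts to the $\alpha$-grid (the ``back-to-back within a block'' observation does not give grid alignment of the blocks themselves). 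Second, and more fundamentally, your claimed bound $W(\sigma)\le W(\sigma')/\alpha+O(nm)$ cannot close the reduction: the additive $O(nm)$ error lives in the \emph{original} instance's time units, while the yes/no gap inherited from the \noun{MaxCut} reduction is only a constant number of waiting-time units (as small as $2$, coming from a single edge gadget). Choosing $\tau$ large does not help, because this error term is not divided by $\alpha$; you would need the reverse-direction error to be $o(1)$ in original units, which the floor-based projection does not deliver.

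The paper avoids the projection entirely via a monotonicity argument: the minimum waiting time can only increase when processing times are raised from $0$ to $1$ (transit times held fixed at $\tau$), and the instance with $p=0$ and $\tau_i=\tau$ is an exact rescaling of the original, so a schedule for $\mathcal{I}'$ with waiting time below $W\tau$ would already yield a schedule for $\mathcal{I}$ with waiting time below $W$. If you want to keep your two-sided scheme, you should replace your explicit rounding by this comparison with the intermediate $p=0$, large-$\tau$ instance, or otherwise prove a reverse bound whose additive error vanishes after dividing by $\alpha$.
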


\begin{figure}[tb]
\centering
\includegraphics[width=.5\textwidth]{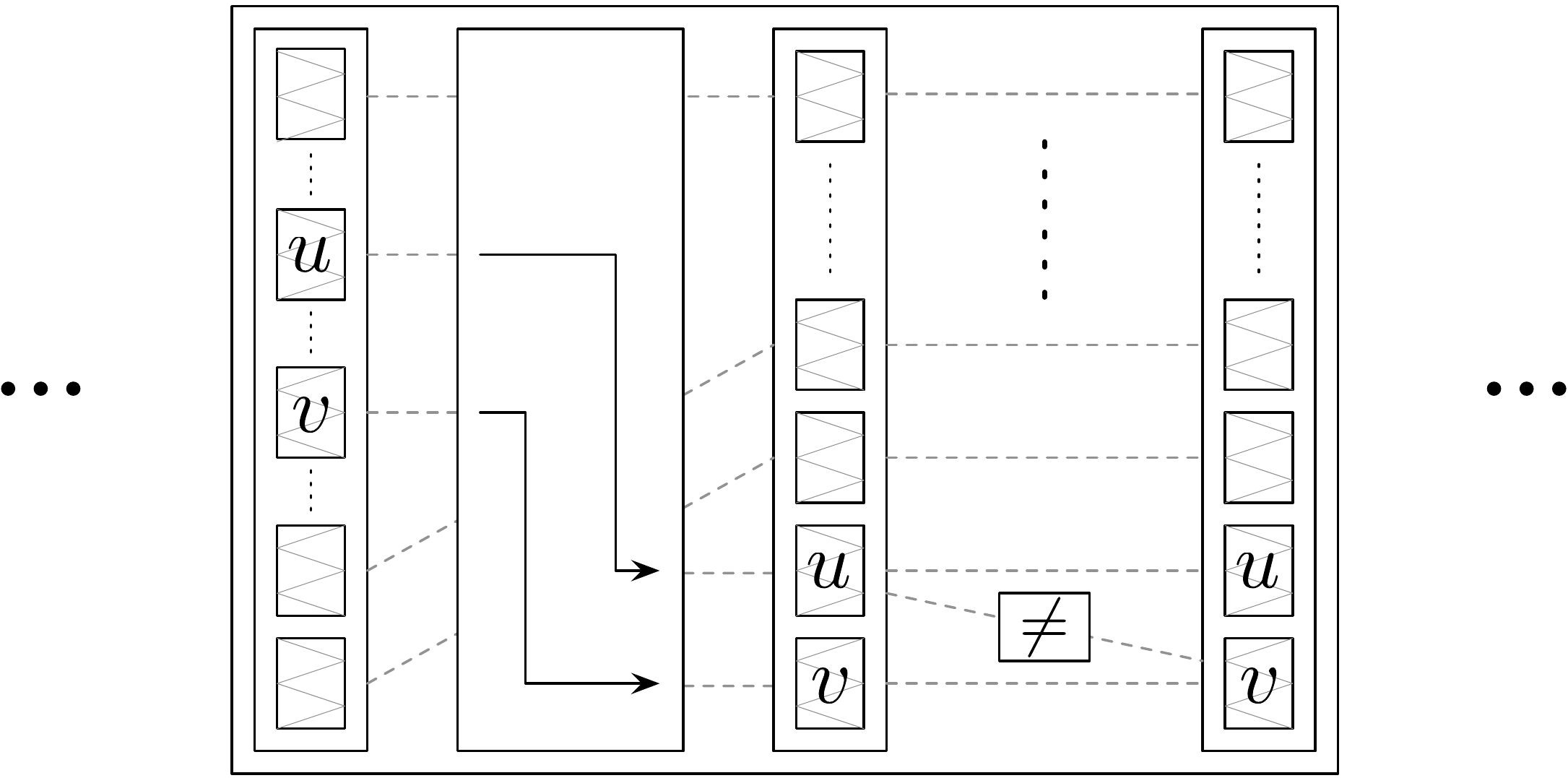}
\protect
\caption
{
  Illustration of our hardness construction for a single edge~$e=\{u,v\}$. First, a sequence of segments is used to change the order of vertex gadgets, such that the vertex gadgets corresponding to~$u$ and~$v$ occupy consecutive time intervals. Then, an edge gadget is added that incurs an increased waiting time if the vertex gadgets for~$u$ and~$v$ are in the same state.\label{fig:unbounded_sketch}\\[-0.8cm]
}
\end{figure}

\section{A PTAS for bidirectional scheduling}
\label{sec:PTAS}

We give a polynomial time approximation scheme (PTAS), i.e., a polynomial $(1+\eps)$-approximation algorithm for each~$\eps>0$, for bidirectional scheduling on a single segment with general processing times.
This problem is hard even if all jobs have the same direction~\citep{LenstraKB1977}. We extend the machine scheduling PTAS of \citet{Afrati1999} to the bidirectional case, provided that the jobs are either all pairwise in conflict or pairwise compatible.
The main issue when trying to adopt the technique of~\cite{Afrati1999} is to account for the different roles of processing and transit times for the interaction of jobs in the same and different directions.

\begin{restatable}{theorem}{thmptasOnesegNoCompatibilities}
\label{thm:ptas-oneseg-nocompatibilities}
The bidirectional scheduling problem on a single segment and with compatibility graph $G_1\in\{K_{\rightb{n},\leftb{n}},\emptyset\}$ admits a PTAS.
\end{restatable}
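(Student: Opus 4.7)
The plan is to adapt the geometric-rounding PTAS of~\cite{Afrati1999} for~$\threef{1}{r_j}{\sum C_j}$. The compatible case $G_1=K_{\rightb{n},\leftb{n}}$ is easy: every pair of opposite-direction jobs may use the segment concurrently (for both processing and transit), so feasibility condition~(4) becomes vacuous and the instance decouples into two independent single-machine subproblems, one over~$\rightbjs$ and one over~$\leftbjs$, each of the form~$\threef{1}{r_j}{\sum C_j}$ with a uniform additive shift of~$\tau_1$ on every completion time. Running the PTAS of~\cite{Afrati1999} on each subproblem and summing yields a $(1+\eps)$-approximation for the whole problem; the shift of~$n\tau_1$ is absorbed because the optimum is itself at least~$n\tau_1$.

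The substance of the proof is the incompatible case $G_1=\emptyset$. Any feasible schedule then decomposes into a sequence of maximal monodirectional blocks, and two consecutive blocks of opposite direction must be separated by an idle gap of at least~$\tau_1$, while consecutive jobs inside the same block only need to respect each other's processing time. The problem is therefore equivalent to scheduling all of~$\jobs$ on a single machine with respect to~$\sum C_j$, subject to a setup of~$\tau_1$ whenever two consecutive jobs travel in opposite directions. I would then invoke the standard Afrati-style rounding with~$R=1+\eps$: round release dates up to powers of~$R$, processing times down to powers of~$R$, and partition time into the geometric frames $I_x=[R^x,R^{x+1})$. At a $(1+O(\eps))$-loss one may, as in~\cite{Afrati1999}, restrict attention to schedules in which every job completes within a constant (in~$\eps$) number of frames after its rounded release date, and in which every frame~$I_x$ contains at most~$O(1/\eps)$ large jobs, i.e., those with $\proc{j}\ge\eps R^x$. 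The additional structural lemma that the bidirectional setting demands asserts that the number of direction switches inside any~$I_x$ can be bounded by a function of~$\eps$ alone: either $\tau_1$ is comparable to~$|I_x|$ and the bound follows because each switch wastes~$\tau_1$ of chargeable idle time, or $\tau_1\ll|I_x|$ and same-direction jobs may be batched inside~$I_x$ at an extra per-job cost of~$O(\eps|I_x|)$.

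Given these restrictions, I would design a dynamic program sweeping the frames left to right. Its state records the set of still-pending large jobs (polynomially many since only~$O(1/\eps)$ of them interact per frame), the rounded aggregate size of pending small jobs of each direction, and the direction at which the previous frame ended. Transitions over a single frame enumerate the constantly many orderings of large jobs together with the direction-switch pattern and greedily append pending small jobs of matching direction inside each monodirectional block. This runs in polynomial time for every fixed~$\eps$, with accumulated loss $(1+\eps)^{O(1)}=1+O(\eps)$. The main obstacle is precisely the new structural lemma bounding direction switches per frame; charging the $\tau_1$ setup gaps carefully against either the waiting time they force or the frame length they consume is what distinguishes our analysis from the one in~\cite{Afrati1999} and what makes the DP polynomial while keeping the loss within~$(1+\eps)$.
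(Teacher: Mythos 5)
Your overall architecture is the same as the paper's: geometric rounding of release and processing times to powers of $1+\eps$ with $r_j\geq\eps(p_j+\tau_1)$, a constant bound on the number of intervals a job can span, a safety net forcing every job to finish within constantly many intervals of its release, and a dynamic program sweeping the intervals. Your treatment of the compatible case by decoupling into two independent single-machine instances is correct and is in fact simpler than the paper's unified argument, and your reformulation of the incompatible case as single-machine scheduling with a $\tau_1$ setup between consecutive opposite-direction jobs is accurate.

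There is, however, a concrete gap in the interface of your dynamic program. Recording only ``the direction at which the previous frame ended'' does not suffice to define the transitions: a job starting near the end of frame $I_x$ has running time up to $|I_x|/\eps^2$ and therefore blocks the segment (for the opposite direction, and its processing blocks even the same direction) well into the following frames, so the DP must carry the \emph{time} at which each direction becomes available again. This is exactly what the paper's notion of a frontier $F=(f_{\lb},f_{\rb})$ encodes, and the nontrivial content is showing — via a time-stretching argument — that one may restrict to constantly many candidate frontier values per direction at a $(1+\eps)$-loss (Lemma~\ref{lem:constant_interface}, second property). Your proposal omits both the quantity and the discretization argument. Relatedly, the structural lemma you identify as the main obstacle (bounding direction switches per frame) is the piece the paper replaces by packaging small jobs into packs of size $\Theta(\eps^2|I_x|)$, which bounds the total number of jobs, and hence switches, per block outright; your batching argument in the regime $\tau_1\ll|I_x|$ would need the same time-stretching machinery to reorder jobs without pushing those that straddle frame boundaries into infeasibility, so it is not a shortcut. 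With the frontier (or an equivalent timed interface) added and the switch/packing lemma actually proved, your plan goes through, but as written the DP is not well-defined.
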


The first part of the proof in~\cite{Afrati1999} is to restrict to processing times and release dates of the form~$(1+\eps)^x$ for some~$x\in\N$ and~$r_j\geq\eps(p_j+\tau_1)$.
Allowing fractional processing and release times we can show that any instance can be adapted to have these properties, without making the resulting schedule worse by a factor of more than~$(1+\eps)$.
We may thus partition the time horizon into intervals~$I_x = [(1+\eps)^x,(1+\eps)^{x+1}]$, such that every job is released at the beginning of an interval.
Since jobs are not released too early, we may conclude that the maximum number of intervals~$\cs$ covered by the running time of a single job is constant.
This allows us to group intervals together in blocks~$B_t = \{I_{t\cs}, I_{t\cs+1},\dots, I_{(t+1)\cs-1}\}$ of~$\cs$ intervals each, such that every job scheduled to start in block~$B_t$ will terminate before the end of the next block~$B_{t+1}$.

To use the fact that each block only interacts with the next block in our dynamic program,
we need to specify an interface for this interaction.
For that purpose we introduce the notion of a \emph{frontier}.
A block \emph{respects an incoming frontier}~$F=(f_{\lb},f_{\rb})$ if no leftbound (rightbound) job scheduled to start in the block starts earlier than~$f_{\lb}$ ($f_{\rb}$).
Similarly, a block \emph{respects an outgoing frontier}~$F=(f_{\lb},f_{\rb})$ if no leftbound or rightbound job scheduled to start in the block would interfere with a leftbound (rightbound) job starting at time~$f_{\lb}$ ($f_{\rb}$).
The symmetrical structure of the compatibility graph ($K_{\rightb{n},\leftb{n}}$ or $\emptyset$) allows us to use this simple interface.
We introduce a dynamic programming table with entries~$T[t,F,U]$ that are designed to hold the minimum total completion time of scheduling all jobs in~$U\subseteq \jobs$ to start in block~$B_t$ or earlier, such that~$B_t$ respects the outgoing frontier~$F$.
We define~$C(t,F_1,F_2,V)$ to be the minimum total completion time of scheduling all jobs in~$V$ to start in $B_t$ with $B_t$ respecting the incoming frontier~$F_1$ and the outgoing frontier~$F_2$ (and~$\infty$ if this is impossible).
We have the following recursive formula for the dynamic programming table:
\[
  T[t,F,U] = \min\nolimits_{F',V \subseteq U}\{T[t-1,F',U \setminus V]+C(t,F',F,V)\}.
\]

To turn this into an efficient dynamic program, we need to limit the dependencies of each entry and show that~$C(\cdot)$ can be computed efficiently. The number of blocks to be considered can be polynomially bounded by~$\log D$, where~$D=\max_j r_j + n\cdot(\max_j p_j + \tau_1)$ is an upper bound on the makespan.
The following lemma shows that we only need to consider polynomially many other entries to compute~$T[t,F,U]$ and we only need to evaluate~$C(\cdot)$ for job sets of constant size, which we can do in polynomial time by simple enumeration. 

\begin{restatable}{lemma}{lemConstantInterface}
\label{lem:constant_interface}
  There is a schedule with a sum of completion times within a factor of~$(1+\eps)$ of the optimum and with the following properties:
  \begin{compactenum}
    \item The number of jobs scheduled in each block is bounded by a constant.
    \item Every two consecutive blocks respect one of constantly many frontiers.
  \end{compactenum}
\end{restatable}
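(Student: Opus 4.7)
The plan is to prove both properties via the geometric-rounding arguments classical for scheduling PTASes, extended to track the bidirectional structure. Throughout, I write $L_t$ for the time span of block $B_t$ and note that $L_t=\Theta((1+\eps)^{t\cs})$ while consecutive block lengths differ by a constant factor $(1+\eps)^\cs$ depending only on $\eps$.

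For property 1, I would classify each job $j$ assigned to block $B_t$ as \emph{large} if $\proc{j}+\transit{1}\ge \eps'L_t$ for some $\eps'$ polynomial in $\eps$, and \emph{small} otherwise. A large job occupies at least an $\eps'$-fraction of the time that $B_t$ (plus its constant-factor spillover into $B_{t+1}$, which is the total time available thanks to the choice of $\cs$) can cover, so only $O(1/\eps')$ large jobs fit per block. For small jobs I would use a batching argument: within $B_t$, all small leftbound jobs can be permuted into a single contiguous run, and likewise for the small rightbound jobs, at the cost of delaying each small job by at most $O(\eps'L_t)$. Since every such job has completion time at least $L_{t-1}=\Omega(L_t)$, the aggregate multiplicative loss is $1+O(\eps)$. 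I then replace each batch by a single representative job whose processing time is the batch total, rounded up to a power of $(1+\eps)$; the effective number of scheduled entities per block is thereafter the large jobs plus at most two batches, hence a constant.

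For property 2, I would exploit the rounding already established in the earlier part of the PTAS proof, which forces release dates and processing times to lie on the grid of powers of $(1+\eps)$; a further $(1+\eps)$ loss lets me round every start time up to the same grid. An outgoing frontier $(f_{\lb},f_{\rb})$ only needs to record the latest left- and rightbound start time in $B_t$ that could still block a job of the opposite direction in $B_{t+1}$; after rounding, each coordinate is one of the $\cs+O(1)$ grid points lying in $B_t$ together with the constant-size buffer that a single job's running time can extend into the start of $B_{t+1}$. This yields at most $O(\cs^2)$ distinct frontiers per block boundary, which is constant for fixed $\eps$.

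The hard part will be verifying that the batching step preserves feasibility when large jobs of the opposite direction are interleaved, since such jobs forbid concurrent transit on the segment. My plan is to keep the relative order of the large jobs fixed, identify the $O(1/\eps')$ idle windows that they leave between consecutive direction switches, and then greedily assign each small-job batch into one of these windows. Feasibility then follows because the number of windows is a constant, and the total small-job processing time in each direction, even after rounding up, still fits into a single window thanks to the $O(\eps L_t)$ slack that the initial rounding of start times has made available inside the block.
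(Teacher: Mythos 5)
Your proposal has the right general shape (small/large classification, batching, a discretized frontier) but several of its key steps do not go through, and they fail precisely on the bidirectional feature that distinguishes this lemma from the Afrati et al.\ template. First, classifying a job as large when $\proc{j}+\transit{1}\ge\eps'L_t$ does not support the packing bound ``only $O(1/\eps')$ large jobs fit per block'': jobs travelling in the \emph{same} direction exclude one another only during their processing time, not during transit, so if $\transit{1}\ge\eps'L_t$ then \emph{every} job is large by your definition while arbitrarily many of them (with tiny $\proc{j}$) can still start inside $B_t$. The paper therefore classifies by $\proc{j}$ alone relative to the \emph{release} interval ($j$ is small iff $\proc{j}\le\frac{\eps^2}{4}|I_x|$), bounds the number of large jobs that can actually be scheduled per release interval, bundles the small jobs of each release interval into packs of size $\Theta(\eps^2|I_x|)$, and then needs a separate ``safety net'' lemma (every job finishes within constantly many intervals of its release) to prevent jobs from accumulating across blocks. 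Your single-batch-per-direction-per-block consolidation also fails on its own terms: the total small-job volume in a block can be $\Theta(L_t)$, so an individual small job can be delayed by $\Theta(L_t)$, not $O(\eps'L_t)$ --- a constant-factor rather than $(1+\eps)$-factor loss unless you add an SPT/counting argument on the number of started jobs, which is what the paper does; and a single idle window between direction switches need not have room for the entire batch, since $O(\eps L_t)$ of rounding slack cannot absorb $\Theta(L_t)$ of displaced volume.

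For property 2, rounding \emph{every} start time up to a power of $(1+\eps)$ at a $(1+\eps)$ loss is not feasible: an interval $I_x$ contains a single grid point of that form, while a schedule may start many same-direction jobs with positive processing times inside $I_x$ whose total processing is up to $|I_x|$; they cannot all move to the same grid point, and the $\eps|I_x|$ of room created by one time-stretch does not let you push them all past $R_{x+1}$. The paper instead modifies only the at most one job per direction that straddles the block boundary: a time-stretch on completion times creates idle time of $\eps|I_{c(j)-2}|$ before such a job, which suffices to decrease its start until its completion aligns with a multiple of $\eps^2|I_{c(j)}|$. The frontier grid is therefore the $1/\eps^2$-fractions of the $\cs$ intervals of a block, giving $(\cs/\eps^2)^2$ frontiers rather than the $O(\cs^2)$ you obtain from the (infeasible) coarse grid.
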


\begin{proof}[sketch]
  Partitioning the released jobs of each interval direction-wise by processing time into \emph{small} and \emph{large} jobs and bundling small jobs into packages of roughly the same size allows us to bound the number of released jobs per interval by a constant, similarly as in~\cite{Afrati1999}. Furthermore, we establish that we may assume jobs to remain unscheduled only for constantly many blocks.
  
  For the second property, we stretch all time intervals by a factor of~$(1+\eps)$, which gives enough room to decrease the start times of those jobs interfering with two blocks such that an~$1/\eps^2$-fraction of an interval separates jobs starting in two consecutive blocks. Thus, we only need to consider~$\frac{\cs}{\eps^2}$ possible frontier values per direction, or a total of~$\smash{\bigl(\frac{\cs}{\eps^2}\bigr)^2}$ possible frontiers.
\end{proof}

\ifthenelse{\boolean{ptas-more}}{
We now generalize our dynamic program to a constant number of segments assuming that the transit times of any two segments differ by at most a constant factor. To this end, we split our jobs into parts, one for each segment the job needs to be processed on, with the additional constraint that no part may be scheduled before any part of the same job on earlier segments.
We are able to generalize Lemma~\ref{lem:constant_interface} to this setting, using that each part of a job runs in at most two blocks and partitioning jobs into small and large for each direction and combination of start and target segments.
The interface between consecutive time blocks needs to be extended to a frontier on each segment.
In addition, a part running in block~$B_t$ imposes a lower bound on the start time of the next part of the same job running in block~$B_{t+1}$.
Since the number of parts running in block~$B_t$ is bounded by a constant~$b$, the interface still has constant size.
We assume that jobs are ordered and write~$\vec F = (F_1,\dots,F_m)$, $\vec{\theta} = (\theta_1,\dots,\theta_b)$. 
We can define our table with entries~$T[t,\vec{F},U,V,\vec{\theta}]$ containing the minimum sum of  completion times (of completely finished jobs) when scheduling the parts in $U$ to start in block~$B_t$ or earlier, such that: $B_t$ respects the outgoing frontier~$F_i$ on segment~$i$, the parts in $V \subseteq U$ are scheduled to start in block~$B_t$, and the~$l$-th part in~$V$ stops running by time~$\theta_l$.
Similarly,~$C(t,\vec{F}',\vec{F},V,\vec\theta',\vec\theta)$ is the minimum sum of completion times for scheduling the parts in~$V$ in block~$B_t$, respecting frontiers~$\vec{F}',\vec{F}$ on the segments, such that the~$l$-th part in $V$ does not start running before time~$\theta_l'$ and stops running by time~$\theta_l$ (if possible, and~$\infty$ otherwise).
The recursive formula restricted to subsets that respect the order in which parts need to be processed becomes
\[
  T[t,\vec{F},U,V,\vec{\theta}] = \min_{\substack{\vec F',V' \subseteq U\setminus V,\vec\theta' \\ |V'|\textrm{ is consistent}}}\{T[t-1,\vec{F}',U \setminus V, V',\vec\theta'] + C(t,\vec F',\vec F,V,\vec\theta',\vec\theta)\}.
\]

We obtain the following result.\enlargethispage{1ex}

\begin{theorem}
\label{thm:ptas}
The bidirectional scheduling problem on a constant number of segments and with compatibility graphs~$G_i\in\{K_{\rightb{n},\leftb{n}},\emptyset\}$ for each $i\in M$ admits a PTAS assuming that the transit times of any two segments differ by at most a constant factor.
\end{theorem}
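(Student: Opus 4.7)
My plan is to reduce the multi-segment PTAS to the single-segment machinery of Theorem~\ref{thm:ptas-oneseg-nocompatibilities}, exploiting the assumption that the number of segments is constant and that all transit times lie within a constant factor of one another. The natural conceptual device is to split each job~$j$ into $|M_j|$ \emph{parts}, one per segment it traverses. Each part behaves like a classical single-segment job with its own processing and transit time, plus a precedence constraint to the previous part of the same job and a release date inherited from the completion of that previous part. Since the transit times on any two segments are within a constant factor, the Afrati-style $(1+\eps)$-rounding of processing and release times, the partition of the time axis into geometric intervals~$I_x = [(1+\eps)^x,(1+\eps)^{x+1})$, and the grouping of intervals into blocks~$B_t$ of $\cs$ intervals each (so that a single part fits into at most two consecutive blocks) all go through with an only constantly larger~$\cs$.

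The core step is to extend Lemma~\ref{lem:constant_interface} to this setting. For the boundedness of parts per block, I would partition parts not only by direction but also by the pair $(s_j,t_j)$ of start/target segment; since $m$ is a constant there are $O(m^2)$ such classes, so the small/large classification of Afrati and the packaging of small parts still yield only constantly many released parts per interval, hence constantly many parts $b$ running in any block. For the frontiers, I would stretch the time axis by a factor $(1+\eps)$ exactly as in the single-segment proof, which lets us round the start times of boundary-crossing parts so that on each segment only $O(\cs/\eps^2)$ frontier values per direction remain relevant. A combined frontier $\vec{F}=(F_1,\dots,F_m)$ then takes at most $(\cs/\eps^2)^{2m}$ values, which is still constant.

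With this structural lemma in hand the dynamic program writes itself. I would define entries $T[t,\vec{F},U,V,\vec{\theta}]$ as stated in the excerpt, where $\vec\theta=(\theta_1,\dots,\theta_b)$ records the end-of-run times of the parts in~$V$ currently executing in~$B_t$; these are precisely the data a successor block needs in order to respect the precedence constraints between consecutive parts of the same job. The function $C(t,\vec{F}',\vec{F},V,\vec\theta',\vec\theta)$ can be evaluated in polynomial time by brute-force enumeration over the constantly many feasible schedules of the $O(1)$ parts in $V$ inside~$B_t$, checking feasibility of the compatibility constraint on each segment (which, since each $G_i$ is either $K_{\rightb n,\leftb n}$ or~$\emptyset$, reduces to a simple per-segment interval-disjointness test per direction). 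The recursion restricted to subsets consistent with the ordering of parts,
\[
  T[t,\vec{F},U,V,\vec{\theta}] = \min_{\vec{F}',V'\subseteq U\setminus V,\vec{\theta}'} \bigl\{T[t-1,\vec{F}',U\setminus V,V',\vec{\theta}'] + C(t,\vec{F}',\vec{F},V,\vec{\theta}',\vec{\theta})\bigr\},
\]
then has polynomially many entries (there are $O(\log D)$ blocks, constantly many choices of $\vec F$, $V$, $\vec\theta$, and $2^n$ subsets~$U$, which we reduce to polynomial via the standard Afrati observation that the identity of $U$ can be encoded by the count of unscheduled jobs per class since within each class jobs are interchangeable).

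The main obstacle I anticipate is precisely the bookkeeping in $\vec\theta$: a part of job~$j$ in block~$B_t$ fixes a hard release date for the next part of~$j$ which may sit in~$B_{t+1}$ on a different segment, and we must make sure (a) this release date is representable with constantly many rounded values so that the table stays polynomial, and (b) the frontier definitions on the \emph{next} segment correctly absorb these induced release dates without inflating their count. I would handle (a) by rounding each~$\theta_l$ to the $(1+\eps)$-grid within its containing interval, paying at most another $(1+\eps)$ factor in the objective, and (b) by verifying that the frontier stretching argument and the bound on parts per block also apply to such induced release dates, since they are themselves start times of parts already counted in the class analysis. Once these two points are checked, the PTAS guarantee follows from multiplying the constantly many $(1+\eps)$ losses and absorbing them into a final rescaling of~$\eps$.
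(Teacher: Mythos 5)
Your overall architecture coincides with the paper's: jobs are split into per-segment parts, parts are classified by direction and source/target pair, the block interface is extended to one frontier per segment plus a vector $\vec\theta$ of end-of-run times for the constantly many parts active in a block, and the table is compressed by counting unscheduled jobs per class. Your anticipated obstacle (a)/(b) about rounding the induced release dates $\theta_l$ is also the right concern and is resolved the way you suggest.

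There is, however, one step where you assert that the single-segment argument ``still goes through'' when it in fact does not: the packaging of small jobs. On one segment, gluing an SPT-ordered run of tiny jobs into an unsplittable package of size $\Theta(\eps^2|I_x|)$ costs only a local shift. On several segments it does not, because a convoy of tiny jobs pipelines --- the first tiny job may start on segment $i+1$ while later ones are still being processed on segment $i$ --- whereas the merged package must finish entirely on segment $i$ before it can start on segment $i+1$. One must therefore prove that, after the rearrangement, enough small-job volume of each class still arrives \emph{in time} at every downstream segment and interval; the paper does this with an inductive ``overage'' accounting over pairs of segments and intervals, $\Theta(m^2)$ time-stretches to create the needed slack, and a further $\Theta(m^3)$ stretches to repair start times that temporarily precede the completion on the previous segment. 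This is the most involved lemma of the multi-segment extension and your proposal skips it entirely. A second, smaller point: you attribute the bounded-transit-ratio assumption to keeping $\cs$ constant, but $\cs$ (the number of intervals a single part spans) is unaffected by it. The assumption is actually needed to bound the number of intervals between the segment-wise release dates of consecutive parts of the same job --- without it, a part on a segment with very large transit time is forced (by the normalization $\start{ij}\geq\eps(\proc{j}+\transit{i})$) to start many intervals after the job's global release, and the per-interval counting argument would no longer bound the number of parts active in a given block.
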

}{}

\section{Hardness of custom compatibilities}
\label{sec:arbitrary-conflicts}
\newcommand{\tpart}[1]{\ensuremath{P_{#1}}}

In Section~\ref{sec:unbounded}, we showed that bidirectional scheduling is hard  on an unbounded number of machines, even for identical jobs.
As the main result of this section, we show that for arbitrary compatibility graphs the problem is $\APX$-hard already on a single segment and with unit processing and transit times. 
For ease of exposition, we first show that the minimization of the makespan is $\NP$-hard. Later we extend this result towards minimum completion time and $\APX$-hardness.

\begin{restatable}{theorem}{corGraphHardnessSum}
\label{thm:graph-hardness-sum}
The bidirectional scheduling problem on a single segment and with an arbitrary compatibility graph is $\NP$-hard even if~$\proc{j}=\transit{1}=1$ for each~$j\in\jobs$.
\end{restatable}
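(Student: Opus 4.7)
The plan is to reduce from \textsc{Graph Coloring}, which is $\NP$-hard already for $q=3$. Given an instance $(H=(V_H,E_H), q)$ of \textsc{Graph Coloring}, I construct a bidirectional scheduling instance on a single segment with $\tau_1 = 1$ as follows: for each vertex $v \in V_H$, introduce a rightbound job $v^R$ and a leftbound job $v^L$, each with $p = 1$ and release date $0$. The compatibility graph $G_1$ is defined by letting $u^R$ and $v^L$ be adjacent in $G_1$ if and only if $u = v$ or $\{u,v\} \notin E_H$. The decision question becomes: is the minimum makespan at most $|V_H|+q$?

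The central claim is that the minimum makespan of the constructed instance equals $|V_H|+\chi(H)$. For the upper bound, take any proper $\chi(H)$-coloring with color classes $I_1,\dots,I_{\chi(H)}$. Schedule the jobs in $\chi(H)$ sequential batches, where batch $i$ contains all jobs $\{v^R, v^L : v \in I_i\}$; since $I_i$ is independent in $H$, these jobs form a biclique in $G_1$, so the $|I_i|$ rightbound jobs can be processed serially one per unit interval concurrently with the $|I_i|$ leftbound jobs, giving batch length $|I_i|+1$. Summing yields a makespan of $\sum_i (|I_i|+1) = |V_H|+\chi(H)$.

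For the lower bound, consider any feasible schedule with makespan $M$. Assuming (WLOG) integral start times, each of the $|V_H|$ rightbound jobs occupies its own processing interval followed by a transit interval, so $M \geq |V_H|+1$. Moreover, at each unit interval $[t,t+1)$, the set of active jobs must form a biclique in $G_1$, hence the vertex projection of the active R- and L-jobs must avoid $H$-edges across directions. A charging argument then shows that whenever the sequence of R-processing vertices transitions between two vertices adjacent in $H$, the biclique condition forces the incoming vertex's L-job to be scheduled in a separate phase, costing an extra transit interval. Assigning one unit to each R-job and one extra transit unit to each of the $\chi(H)$ independent sets arising in such a phase decomposition yields $M \geq |V_H|+\chi(H)$.

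Since \textsc{Graph Coloring} is $\NP$-hard, so is deciding whether the minimum makespan of the constructed instance is at most $|V_H|+q$, establishing the theorem. The main obstacle is making the lower bound additive: the packing bound $M\geq|V_H|+1$ and the covering bound $M\geq\chi(H)$ are both elementary, but combining them into $M\geq|V_H|+\chi(H)$ requires a careful accounting of the mandatory transit/gap intervals at phase boundaries. A clean way to formalize this is via a potential function that charges each R-processing step one unit and each phase transition one extra unit for the unavoidable transit interval that cannot host an incompatible R-job from the next phase.
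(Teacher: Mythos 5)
Your reduction does not work: the central claim that the minimum makespan of the constructed instance equals $|V_H|+\chi(H)$ is false, and the ``phase decomposition'' lower bound cannot be repaired. The flaw is that incompatibilities only act \emph{across} directions, so two rightbound jobs $u^R,v^R$ with $\{u,v\}\in E_H$ may be processed back to back at no cost; the leftbound jobs can then be ordered so that each $v^L$ is never within one time unit of an incompatible $u^R$. Concretely, take $H=C_5$ with vertices $1,\dots,5$ in cyclic order, so $\chi(H)=3$ and your claim predicts makespan $8$. Start both $v^R$ and $v^L$ at time $\sigma(v)$ with $\sigma(1)=0,\ \sigma(3)=1,\ \sigma(5)=2,\ \sigma(2)=3,\ \sigma(4)=4$. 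Every incompatible pair $(u^R,v^L)$ has $|\sigma(u)-\sigma(v)|\geq 2$, so their occupation intervals $[\sigma(u),\sigma(u)+2)$ and $[\sigma(v),\sigma(v)+2)$ are disjoint, and the makespan is $6=|V_H|+1$. In general your instance measures something closer to an ordering/bandwidth parameter of the complement of $H$ (with $R=L$ a bijection onto $\{0,\dots,n-1\}$, feasibility is exactly a Hamiltonian path in $\overline{H}$), not the chromatic number, so the decision threshold $|V_H|+q$ does not encode $q$-colorability. The step ``the biclique condition forces the incoming vertex's L-job into a separate phase'' is where the argument breaks: the L-jobs need not follow the R-order at all.

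Two further remarks. First, even if the makespan identity held, the theorem concerns the sum of completion times; you would still need to transfer hardness from makespan to total completion time (the paper does this by appending a long tail of incompatible blocking jobs so that missing the target makespan ruins the completion-time objective). Second, the paper's actual proof takes a very different route: a gadget reduction from $(\leq\!3,3)$-\textsc{Sat} using a rigid frame of mutually incompatible blocking and dummy jobs that carve out gaps which only specific ``variable jobs'' can fill, with four time phases enforcing a consistent truth assignment, one satisfying literal per clause, and a counting argument on leftover jobs. If you want to salvage a coloring-flavoured reduction, you would need additional released jobs (release dates and blocking structure) to destroy the freedom of reordering the two directions independently --- the two-jobs-per-vertex construction alone is too weak.
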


We give a reduction from an $\NP$-hard variant of \sat (cf.~\cite{GareyJohnson1979}); \rsat{\leq\!3}{3} considers a formula with a set of clauses~$\set{C}$ of size three over a set of variables $\set{X}$, where each variable appears in at most three clauses and asks if there is a truth assignment of~$\set{X}$ satisfying~$\set{C}$. Note the difference to the polynomially solvable \rsat{3}{3}, where each variable appears in \emph{exactly} three clauses~\cite{Tovey1984}.


For a given~\rsat{\leq\!3}{3} formula we construct a bidirectional scheduling instance that can be scheduled within some specific makespan~$T$ if and only if the given formula is satisfiable. Our construction is best explained by partitioning the time horizon~$[0,T]$ into four parts (cf.~Fig.~\ref{fig:graph-hardness} along with the following).

\begin{figure}[t]
  \centering
    \subfigure[\scriptsize $\tpart{1}$ and $\tpart{2}$: variable assignment]{
\begin{tikzpicture}
\footnotesize
   \pgftransformxscale{.6*\procheight}
   \pgftransformyscale{-.4*\timeunit}


  \begin{scope}[xshift=-3.5cm]

    \draw  (-1,0) -- +(0,14) (1,0) -- +(0,14);
    \draw[draw=black!30](0,0) -- +(0,14);
    

    \begin{scope}
      \clip (-4.5,0) rectangle (2.7,14.);
      \upjob{black}{0}{-1}{1}{1}{}
      \downjob{black} {0}{-1}{1}{1}{}
    \end{scope}

    \upjob{green} {-1.5}{0}{1}{1}{}
    \upjob{green} {-1.5}{1}{1}{1}{}
    \upjob{black} {0}{2}{1}{1}{}
    \upjob{red}   {0}{3}{1}{1}{}
    \upjob{red}   {0}{4}{1}{1}{}
    \upjob{black} {0}{5}{1}{1}{}


    \downjob{black} {0}{0}{1}{1}{}
    \downjob{black} {0}{2}{1}{1}{}
    \downjob{red}   {0}{1}{1}{1}{}
    \downjob{black} {0}{3}{1}{1}{}
    \downjob{black} {0}{5}{1}{1}{}
    \downjob{green} {1.5}{4}{1}{1}

    \node[anchor=east] (xi) at (-2.2, 3){$x_i$};

    \draw[dashed] (-3,6) -- (3,6);


    \upjob{green} {0}{6+0}{1}{1}{}
    \upjob{green} {0}{6+1}{1}{1}{}
    \upjob{black} {0}{6+2}{1}{1}{}
    \upjob{red}   {-1.5}{6+3}{1}{1}{}
    \upjob{red}   {-1.5}{6+4}{1}{1}{}
    \upjob{black} {0}{6+5}{1}{1}{}

    \downjob{black} {0}{6+0}{1}{1}{}
    \downjob{black} {0}{6+2}{1}{1}{}
    \downjob{red}   {1.5}{6+1}{1}{1}{}
    \downjob{black} {0}{6+3}{1}{1}{}
    \downjob{black} {0}{6+5}{1}{1}{}
    \downjob{green} {0}{6+4}{1}{1}{}

    \node[anchor=east] (xip1) at (-2.2, 8.5){$x_{i+1}$};
    \draw[dashed] (-3,12) -- (3,12);

  \end{scope}
  

  \begin{scope}[xshift=3.5cm]\label{pic:all-part2}


    \draw  (-1,0) -- +(0,14) (1,0) -- +(0,14);
    \draw[draw=black!30](0,0) -- +(0,14);

    \upjob{black}   {0}{0}{1}{1}{}
    \upjob{black}   {0}{1}{1}{1}{}
    \downjob{black} {0}{1}{1}{1}{}
    \downjob{black} {0}{0}{1}{1}{}

    \upjob{black}   {0}{2}{1}{1}{}
    \upjob{black}   {0}{3}{1}{1}{}
    \downjob{black} {0}{3}{1}{1}{}
    \downjob{green} {0}{2}{1}{1}{}

    \node[anchor=west] (xi) at (1.5, 2){$x_{i}$};

    \draw[dashed] (-1.5,4) -- (1.5,4);

    \upjob{black}   {0}{4+0}{1}{1}{}
    \upjob{black}   {0}{4+1}{1}{1}{}
    \downjob{black} {0}{4+1}{1}{1}{}
    \downjob{black} {0}{4+0}{1}{1}{}

    \upjob{black}   {0}{4+2}{1}{1}{}
    \upjob{black}   {0}{4+3}{1}{1}{}
    \downjob{black} {0}{4+3}{1}{1}{}
    \downjob{red}   {0}{4+2}{1}{1}{}

    \node[anchor=west] (xip1) at (1.5, 6){$x_{i+1}$};
    \draw[dashed] (-1.5,8) -- (1.5,8);

    \upjob{black}   {0}{8+0}{1}{1}{}
    \upjob{black}   {0}{8+1}{1}{1}{}
    \downjob{black} {0}{8+1}{1}{1}{}
    \downjob{black} {0}{8+0}{1}{1}{}

    \upjob{black}   {0}{8+2}{1}{1}{}
    \upjob{black}   {0}{8+3}{1}{1}{}
    \downjob{black} {0}{8+3}{1}{1}{}

    \node[anchor=west] (dots) at (1.5, 9.8){$\,\,\vdots$};
    \draw[dashed] (-1.5,12) -- (1.5,12);
  \end{scope}
\end{tikzpicture}
}
\hspace*{.2cm}
\subfigure[\scriptsize $\tpart{3}$: clauses]{
\begin{tikzpicture}\label{pic:all-part3}
   \pgftransformxscale{.6*\procheight}
   \pgftransformyscale{-.4*\timeunit}

  \draw[white] (-3.5,0) -- (3.5,0);

   \draw  (-1,0) -- +(0,14) (1,0) -- +(0,14);
   \draw[draw=black!30](0,0) -- +(0,14);

  \upjob{green} {0}{0}{1}{1}{}
  \upjob{black} {0}{1}{1}{1}{}
  \downjob{black} {0}{0}{1}{1}{}
  \downjob{black} {0}{1}{1}{1}{}

  \node[anchor=west] (ck) at (1.5, 1){$c_{k}$};
  \draw[dashed] (-1.5,2) -- (1.5,2);

  \upjob{red}   {0}{2}{1}{1}{}
  \upjob{black} {0}{3}{1}{1}{}
  \downjob{black} {0}{2}{1}{1}{}
  \downjob{black} {0}{3}{1}{1}{}

  \node[anchor=west] (ckp1) at (1.5, 3){$c_{k+1}$};
  \draw[dashed] (-1.5,4) -- (1.5,4);

  \upjob{black}   {0}{4+1}{1}{1}{}
  \downjob{black} {0}{4+0}{1}{1}{}
  \downjob{black} {0}{4+1}{1}{1}{}

  \node[anchor=west] (dots) at (1.5, 5){$\,\vdots$};
  \draw[dashed] (-1.5,6) -- (1.5,6);

  \upjob{black}   {0}{4+3}{1}{1}{}
  \downjob{black} {0}{4+2}{1}{1}{}
  \downjob{black} {0}{4+3}{1}{1}{}

  \draw[dashed] (-1.5,8) -- (1.5,8);

  \fill[white] (-1.5,9.5) rectangle (1.5,14);
  \node[anchor=north] (dots2) at (0, 9.5){$\,\vdots$};

\end{tikzpicture}
}
\hspace*{.2cm}
\subfigure[\scriptsize $\tpart{4}$: leftover jobs]{
\begin{tikzpicture}\label{pic:all-part4}
   \pgftransformxscale{.6*\procheight}
   \pgftransformyscale{-.4*\timeunit}


  \draw[white] (-3.5,0) -- (3.5,0);
  \draw  (-1,0) -- +(0,14) (1,0) -- +(0,14);
  \draw[draw=black!30](0,0) -- +(0,14);

  \upjob{red}   {0}{0}{1}{1}{}
  \upjob{green} {0}{1}{1}{1}{}
  \upjob{green} {0}{2}{1}{1}{}

  \downjob{black} {0}{0}{1}{1}{}
  \downjob{black} {0}{1}{1}{1}{}
  \downjob{black} {0}{2}{1}{1}{}

  \downjob{black} {0}{3}{1}{1}{}
  \downjob{black} {0}{4}{1}{1}{}
  \downjob{black} {0}{5}{1}{1}{}

  \fill[white] (-1.5,7.5) rectangle (1.5,14);
  \node[anchor=north] (dots2) at (0, 7.5){$\,\vdots$};

\end{tikzpicture}
}
  \caption{Illustration (colored) of the four parts of our construction. Time is directed downwards, rightbound (leftbound) jobs are depicted on the left (right) of each figure.\\[-.6cm]}
  \label{fig:graph-hardness}
\end{figure}
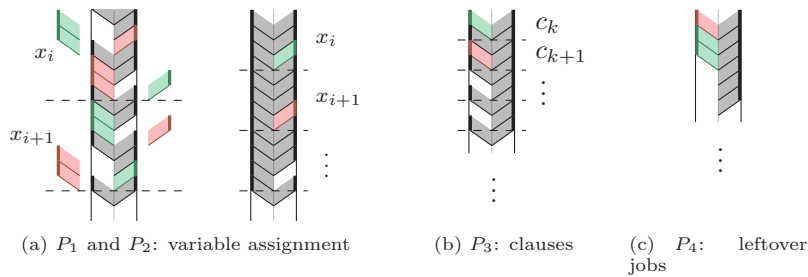

We use a frame of blocking jobs that need to be scheduled at their release date. We can enforce this by making sure that at least one blocking job is released at (almost) each unit time step and that blocking jobs that are not supposed to run concurrently are incompatible.
We release variable jobs that have to be scheduled into gaps between the blocking jobs.
More precisely, in the first part of the construction we release 6~jobs within a separate time interval for each variable. 
Two of these jobs are leftbound and need to be scheduled within the first two parts of the construction, which implies that one of the two remaining pairs of rightbound jobs must be scheduled after the second part.
If the first pair is delayed we interpret this as an assignment of \emph{true} to the variable and otherwise as \emph{false}.

The third part of the construction has a gap for each clause, with compatibilities ensuring that only variable jobs can be scheduled into the gap which satisfy the clause.
Since each literal can only appear in at most two clauses, there are enough variable jobs to satisfy all clauses if the formula is satisfied.
Finally, the last part has~$2|X|-|C|$ gaps that fit any variable job.
In order to schedule all variable jobs before the end of the last part, we thus need to schedule a variable job into each gap of a clause.
This is possible if and only if the given~\rsat{\leq\!3}{3} formula is satisfiable.
We can easily extend our result to completion or waiting times by adding many blocking jobs after the last part, such that violating the makespan also ruins the the total completion time.

With a slight adaption of the construction and more involved arguments, we can even show $\APX$-hardness of the problem. We reduce from a specific variant of \noun{Max-3-Sat}, where each literal occurs exactly twice, and which is $\NP$-hard to approximate within a factor of $1016/1015$, see Berman et al.~\cite{BermanKS2003}. 

\begin{restatable}{theorem}{GraphAPXHardnessSum}
\label{thm:graph-apx-hardness-sum}
The bidirectional scheduling problem on a single segment and with an arbitrary compatibility graph is $\APX$-hard even if~$\proc{j}=\transit{1}=1$ for each~$j\in\jobs$.
\end{restatable}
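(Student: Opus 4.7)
The plan is to strengthen the reduction of Theorem~\ref{thm:graph-hardness-sum} into a gap-preserving one. I would reduce from the restricted \noun{Max-3-Sat} variant of Berman, Karpinski and Scott~\cite{BermanKS2003}, in which every literal occurs exactly twice (so every variable appears in exactly four clauses) and where it is $\NP$-hard to distinguish satisfiable instances from those in which every assignment leaves at least a $1/1016$ fraction of the clauses unsatisfied.

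First, I would keep the four-part time horizon $\tpart{1},\tpart{2},\tpart{3},\tpart{4}$, the blocking-job frame, and the variable/clause compatibilities from the proof of Theorem~\ref{thm:graph-hardness-sum}. Since each literal now appears exactly twice, for every variable $x$ we create four variable jobs in $\tpart{1}$ (two corresponding to positive, two to negative literal occurrences), and a truth assignment to $x$ is encoded by which pair of variable jobs is scheduled late. The clause gaps in $\tpart{3}$ accept only variable jobs coming from a satisfying literal of the clause, while $\tpart{4}$ provides $2|X|-|C|$ universal gaps. The counting argument of Theorem~\ref{thm:graph-hardness-sum} then implies that a satisfying assignment yields a schedule in which every variable job fits into $\tpart{3}\cup\tpart{4}$.

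The key new ingredient is an amplifier that turns unfilled clause gaps into a significant increase in total completion time. To this end I would append to the construction a long tail of $N$ additional blocking jobs released at consecutive unit time steps and made pairwise incompatible, so that they must run strictly sequentially without idle time. If $k$ clause gaps in $\tpart{3}$ remain unfilled, then at least $k$ variable jobs must be deferred into the tail, each pushing every later blocking job back by one time unit and thus adding $\Omega(N)$ to the sum of completion times. Writing $\mathrm{OPT}_0=\Theta(N^2)$ for the baseline cost of the tail itself, this gives a schedule of cost $\mathrm{OPT}_0$ on satisfiable instances versus at least $\mathrm{OPT}_0 + \Omega(\varepsilon |C|\cdot N)$ on instances where at most $(1-\varepsilon)|C|$ clauses can be simultaneously satisfied. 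Choosing $N = \Theta(|C|)$ converts this into a multiplicative gap of $1+\Omega(\varepsilon)$ in the scheduling objective, so the $1016/1015$ inapproximability of~\cite{BermanKS2003} transfers directly.

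The principal obstacle is to rule out \emph{cheating} schedules that deviate from the variable/clause correspondence, for instance by reshuffling blocking jobs, leaving the frame early, or sneaking variable jobs into the tail in a pattern that does not correspond to any assignment. To forbid this I would tighten the compatibilities of the blocking frame in $\tpart{1}$--$\tpart{4}$ so that any violation of its rigid structure costs $\omega(N^2)$ and is thus strictly worse than any assignment-based schedule, exactly as in the proof of Theorem~\ref{thm:graph-hardness-sum}. Once the frame is pinned down, the number of variable jobs pushed into the tail equals the number of unfilled clause gaps in $\tpart{3}$, which in turn equals the number of clauses unsatisfied by the encoded assignment. This yields the desired gap-preserving reduction and establishes $\APX$-hardness.
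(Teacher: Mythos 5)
Your overall architecture matches the paper's: reduce from the Berman--Karpinski--Scott variant of \noun{Max-3-Sat}, reuse the four-part frame of Theorem~\ref{thm:graph-hardness-sum}, and append a long tail of $\Theta(|C|)$ consecutively released blocking jobs so that each deferred variable job costs an additional $\Omega(N)$ in total completion time; the resulting gap calculation (baseline $\Theta(N^2)$ plus $\Omega(\tilde{c}N)$ for $\tilde{c}$ unsatisfied clauses, with $N=\Theta(|C|)$) is exactly the paper's.

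However, there is a genuine gap in your soundness argument. You propose to rule out cheating schedules by making ``any violation of the rigid structure cost $\omega(N^2)$,'' but this cannot work: the relevant deviations are \emph{soft}. A schedule may, for a single variable, defer one true and one false rightbound variable job (an inconsistent assignment), which only delays $O(1)$ jobs into the tail and hence costs only $O(N)$ --- the same order as one honestly unsatisfied clause --- while potentially allowing \emph{more} clause gaps in $\tpart{3}$ to be filled. The structural lemmas of the $\NP$-hardness proof do not help here, because they are stated conditionally on the schedule meeting the exact makespan $\cmax$ and say nothing about schedules that exceed it slightly. What is needed, and what the paper supplies, is a quantitative accounting showing that every inconsistently set variable itself produces at least as many unmatched (and mutually incompatible) jobs as the clause gaps it could additionally fill. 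The paper does this by (i) adding two extra \emph{virtual} rightbound jobs per variable so that the honest schedule is perfectly tight, and (ii) a matched/unmatched-jobs counting argument over the sets $X_{k,l}$ of variables with $k$ true and $l$ false variable jobs matched to clause jobs, which yields at least $\tilde{c}$ pairwise incompatible unmatched jobs in \emph{every} integral schedule. Without an argument of this kind, your reduction only bounds the cost of schedules that respect the intended correspondence, and the inapproximability factor does not transfer.
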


\section{Dynamic programs for restricted compatibilities}
\label{sec:constant_segments}
After establishing the hardness of bidirectional scheduling with a general compatibility graph in the last section, in this section we turn to the case of a constant number of different compatibility types. 
Due to the identical processing times, the jobs in each direction can be scheduled in the order of their release dates. The only decision left is when to switch between left- and rightbound operation of the segments. This decision is hard in the general case~(Theorem~\ref{thm:unbounded_hardness}), but we are able to formulate a dynamic program for any constant number of segments. 

Our result generalizes to the case when some jobs of different directions are compatible
as long as the number of \emph{compatibility types} is constant, where two jobs~$j_1, j_2$ in the same direction are defined to have the same compatibility type if the set of jobs compatible with~$j_1$ is equal to the set of jobs  compatible with~$j_2$ on each segment. Formally,~$j_1$ and~$j_2$ have the same compatibility type if $\bigl\{j : \{j_1,j\} \in E_i\bigr\} = \bigl\{j : \{j_2,j\} \in E_i\bigr\}$ for the compatibility graphs $G_i = (\leftbjs \cupdot \rightbjs, E_i)$ of each segment~$i$.

For a single segment we partition~$\jobs$ into~$\ctnum$ subsets of jobs~$\jobs^1, \dots, \jobs^{\ctnum}$ where all jobs of~$J^c$, $c\in 1,\dots, \ctnum$, have the same compatibility type~$c$, and let $n_c = |J^c|$. Since the jobs of each subset only differ in their release dates, they can again be scheduled in the order of their release dates. This observation allows us to define a dynamic program that decides how to merge the job sets~$\jobs^1, \dots, \jobs^{\ctnum}$ such that the resulting schedule has minimum total completion time.

\begin{restatable}{theorem}{singleSegPoly}
\label{thm:cct-singleseg-poly}
The bidirectional scheduling problem can be solved in polynomial time if~$m=1$,~$\ctnum$ is constant and~$\proc{j}=\proc{}$ for each~$j\in\jobs$.
\end{restatable}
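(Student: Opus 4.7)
The plan is to build a dynamic program that exploits the constancy of $2\ctnum$, the number of distinct (direction, compatibility type) classes. The first observation is a canonical-order reduction: by a standard pairwise exchange argument, there is an optimal schedule in which, for each fixed direction and compatibility type $c$, the jobs of that class are processed in nondecreasing order of release dates. This works because all processing times equal $\proc{}$ and all jobs in one class have identical compatibility sets, so swapping two adjacent jobs of the same class to restore release-date order preserves feasibility and does not increase any completion time. Consequently, once the number of jobs of each class scheduled so far is fixed, the identities of those jobs are determined.

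Next, I would decompose any schedule into \emph{blocks}, maximal time intervals during which the sets $R\subseteq\{1,\dots,\ctnum\}$ of rightbound and $L\subseteq\{1,\dots,\ctnum\}$ of leftbound types currently using the segment are constant; between consecutive blocks a gap of at most $\transit{1}$ may be forced whenever the next block contains a type incompatible with the one currently in the pipe. Pairwise compatibility across directions restricts the signature $(R,L)$ to one of at most $2^{2\ctnum}$ choices, a constant. Within a block, jobs of each active class start $\proc{}$ apart in release-date order, so a block is fully determined by its signature, its start time, and the vector of job counts per active class.

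The DP table is $T[k,t]$, where $k=(k^c_d)_{c\in\{1,\dots,\ctnum\},\,d\in\{\rb,\lb\}}$ records how many jobs of each class are already scheduled and $t$ is the time at which the segment is next free, storing the minimum partial sum of completion times achievable. A left-shifting argument shows that the relevant values of $t$ lie in the polynomially-sized set $\{\rel{j}+a\proc{}+b\transit{1}:j\in\jobs,\,0\le a,b\le n\}$, since each block-start time can be pushed left until it is pinned by either the previous block's end (plus a possible $\transit{1}$) or the largest release date among jobs in the block. Transitions enumerate the signature $(R,L)$ of the next block together with the per-class counts of jobs processed in it (at most $n^{O(\ctnum)}$ possibilities), which fixes the block's start, duration, and contribution to $\sum C_j$ in closed form. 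The total running time is $n^{O(\ctnum)}$, polynomial for constant $\ctnum$. The main obstacle I anticipate is the left-shifting argument justifying the restriction of admissible $t$ to a polynomial set: it has to cope with cascading interactions between blocks and with compatible mixes that may reduce or remove an enforced $\transit{1}$ switching gap.
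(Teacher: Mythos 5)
Your overall strategy is in the same family as the paper's: fix the within-class order by release dates via an exchange argument, run a dynamic program over the vector of per-class counts, and discretize time to the set $\{\rel{j}+a\proc{}+b\transit{1}\}$ --- the paper uses exactly this set (of size $\O(n^3)$) and the same canonical-order observation. However, there is a genuine gap in your state design. A single scalar $t$ ``at which the segment is next free'' is not a sufficient interface between blocks, because the earliest feasible start time of the next job depends on \emph{which class} that job belongs to: a job in the same direction as the previous one only has to wait for its processing to end (a gap of $\proc{}$), an incompatible opposite-direction job has to wait for the transit of every incompatible in-transit job to end, and a compatible opposite-direction job may not have to wait at all. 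At a block boundary several classes may simultaneously have jobs in transit, and their residual transit times differ depending on when the last job of each class started --- information that neither $t$ nor the previous block's signature $(R,L)$ determines. The paper resolves precisely this by carrying one time bound $t_{c'}$ \emph{per compatibility type} in the DP state, i.e.\ entries $T[i_1,t_1,\dots,i_\ctnum,t_\ctnum;c]$, and updating all $\ctnum$ bounds after each scheduled job via a pairwise function $\theta(j_1,t_1,j_2,t_2)$ that encodes the direction/compatibility relation. Your state would need the analogous per-class enrichment (e.g.\ the last start time of each class, or equivalently the earliest admissible next start per class), which still keeps the table polynomial for constant $\ctnum$.

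A secondary, smaller issue: your claim that a block's duration and contribution to $\sum C_j$ follow ``in closed form'' from its signature and per-class counts glosses over release dates forcing idle processing time inside a maximal constant-signature interval, and over the interleaving of two concurrent processing chains when $R$ and $L$ are mutually compatible. The paper avoids blocks entirely by recursing one job at a time (always on the earliest-scheduled job among the remaining suffix), which makes the transition trivially correct at the cost of a factor $\ctnum$ in the branching. The left-shifting concern you flag at the end is legitimate but is settled by the very time set you propose: every start time can be pushed down until it is pinned by a release date plus an integral combination of $\proc{}$ and $\transit{1}$ offsets, which is why $\{\rel{j}+a\proc{}+b\transit{1} : a,b\le n\}$ suffices.
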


We now consider a constant number of segments~$m>1$. The main complication in this setting is that decisions on one segment can influence decisions on other segments, and, in general, every job can influence every other job in this way. In particular, we need to keep track of how many jobs of each type are in transit at each segment, and we can thus not easily adapt the dynamic program for a single segment. We propose a different dynamic program that relies on all transit times being bounded by a constant and can be adapted for assumptions  complementary to Theorem~\ref{thm:unbounded_hardness}.

\begin{restatable}{theorem}{thmIdenticalConstantSegMakespan}
\label{thm:identical-constantseg-makespan}
The bidirectional scheduling problem can be solved in polynomial time if~$m$ and~$\ctnum$ are constant and either~$\proc{j}=1$ for each~$j\in\jobs$ and $\transit{i}$ is constant for each~$i\in M$ or~$\proc{j}=0$ for each~$j\in\jobs$ and $\transit{i}=1$ for each~$i\in M$.
\end{restatable}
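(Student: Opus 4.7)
The plan is to solve the makespan version of bidirectional scheduling via a dynamic program that sweeps time in unit steps and whose state has only polynomially many values. The key structural observation is that, under either hypothesis of the theorem, the ``local picture'' on any one segment at any moment involves only a constant number of jobs, and those jobs are distinguishable only by a constant amount of information. I would then obtain the minimum completion time objective by the standard trick of appending a block of dummy jobs after the deadline so that violating a guessed makespan also destroys the sum objective, or alternatively by an additional counter in the state.

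First I would partition $\jobs$ into \emph{kinds}: a kind is a tuple (direction, compatibility type, start segment, target segment). Since $\ctnum$ and $m$ are constant, there are $K=O(1)$ kinds. Two jobs of the same kind differ only in their release date, so an exchange argument shows that there is an optimal schedule in which, for every kind $k$, the kind-$k$ jobs are dispatched in non-decreasing order of $r_j$. Consequently, the information ``which kind-$k$ jobs have been dispatched on their start segment so far'' is captured by a single integer $n_k\in\{0,1,\ldots,|\jobs_k|\}$. Writing $\vec n=(n_1,\ldots,n_K)$, there are at most $(n+1)^K=\text{poly}(n)$ possible values of $\vec n$.

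Next I would bound the state of each segment. In case $p_j=1$, $\tau_i=O(1)$, successive same-direction jobs start at least one time unit apart and each job occupies the segment for $1+\tau_i=O(1)$ time units; hence at most $O(1)$ rightbound and $O(1)$ leftbound jobs are simultaneously on segment $i$. Case $p_j=0$, $\tau_i=1$ is even tighter. Each on-segment job is described by its kind and its phase (time units elapsed since its start time on this segment), both drawn from constant sets, so the per-segment state $\sigma_i$ lives in a constant set $\Sigma_i$; the joint state $\sigma=(\sigma_1,\ldots,\sigma_m)\in\Sigma$ has $|\Sigma|=O(1)$. The full DP state is $(t,\vec n,\sigma)$. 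A polynomial bound $T^*$ on the optimal makespan (e.g.\ $T^*=\max_j r_j+n\cdot m\cdot(1+\max_i\tau_i)$) gives a state space of size $T^*\cdot n^{O(1)}\cdot O(1)=\text{poly}(n)$.

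Transitions from $(t,\vec n,\sigma)$ to $(t+1,\vec n',\sigma')$ are driven by a constant-size choice of (a) which not-yet-on-segment jobs to dispatch into their start segment at time $t+1$ (only kinds $k$ with $n_k<|\jobs_k|$ and with the $(n_k{+}1)$st kind-$k$ job released by time $t+1$ are eligible, which is determined by $t$ and $\vec n$ thanks to FIFO), and (b) which on-segment jobs that have finished their current segment should enter the next one. Feasibility of a transition is a purely local check: same-direction jobs on a segment must be staggered by $p_j$; opposing-direction jobs cohabiting a segment must be pairwise compatible, which depends only on their compatibility types and is therefore encoded in $\sigma$; and a job can advance to segment $i\pm 1$ only if its phase on its current segment has reached $p_j+\tau_i$. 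The minimum makespan is the smallest $t$ for which some reachable state has $n_k=|\jobs_k|$ for all $k$ and $\sigma$ empty.

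The main obstacle is the release-date bookkeeping: release dates are in $\N$ and need not be polynomially bounded, so iterating $t$ one unit at a time could be too slow. I would deal with this by preprocessing: the only interesting time points are the $O(n)$ release dates together with $O(1)$ units of ``follow-up'' slack after each, since between two consecutive release events the transition structure is time-homogeneous and the DP can be fast-forwarded by a single transition that advances the ongoing on-segment jobs. After this compression the time axis has only $\text{poly}(n)$ grid points, the state space has $\text{poly}(n)$ entries, each has $O(1)$ outgoing transitions, and the DP runs in polynomial time, proving the theorem.
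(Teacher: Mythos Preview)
Your approach is essentially the paper's: a dynamic program whose state records, for each of the constantly many job kinds, how far the jobs have progressed, together with a constant-size description of what is currently in transit on each segment. The paper likewise partitions jobs by $(c,s,t)$, observes that only $\tau_i+1$ positions per segment matter, and bounds the state space by $\prod_i n^{\kappa m^2}\cdot\prod_i 2^{\kappa m^2(\tau_i+1)}$.

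There is, however, a genuine gap in your state. Your vector $\vec n$ records only how many jobs of each kind have been dispatched onto their \emph{start} segment, and $\sigma$ records only jobs currently \emph{in transit}. Nothing in $(t,\vec n,\sigma)$ tells you how many kind-$k$ jobs are waiting at an intermediate node between two segments, nor how many have already reached their target. In particular, your termination condition ``$n_k=|\jobs_k|$ for all $k$ and $\sigma$ empty'' is satisfied by a configuration in which every job has left its start segment and is now idling at some interior node, which is not a completed schedule. The fix is exactly what the paper does: track, for each kind $k$ and each segment $i$ on its path, a counter $n_{k,i}$ for the number of kind-$k$ jobs that have entered segment $i$ (equivalently, the number available at each node per kind). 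This is still only $n^{O(1)}$ values since $m$ and the number of kinds are constant, and your FIFO argument extends segment by segment to justify that these counters suffice.

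Two smaller points. First, your remark that the case $p_j=0$, $\tau_i=1$ is ``even tighter'' is misleading: with $p_j=0$ arbitrarily many same-direction jobs can occupy a segment simultaneously, so your $O(1)$-jobs-per-segment bound fails. The correct observation (which the paper uses) is that at every integer time all segments are empty, so $\sigma$ is trivial and only the availability counters matter. Second, for the sum-of-completion-times objective and for large release dates, the paper is more direct than your reductions: it stores the \emph{partial} completion time (completed jobs' $C_j$ plus current time for each unfinished job) as the DP value, and it makes ``wait until the next release date'' a single transition alongside ``advance one unit with at least one job moving''. This keeps the state graph acyclic and polynomial without your separate fast-forwarding compression or dummy-job trick.
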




\bibliographystyle{plainnat}
\bibliography{mrabbrev2012,literature}

\newpage

\appendix

\section{Proofs of Section~\ref{sec:unbounded}:\newline Hardness of bidirectional scheduling}\label{appendix:unbounded}

In this section, we give a detailed proof of the hardness of the bidirectional scheduling problem for a constant number of segments and identical processing and transit times.
We describe our reduction from \noun{MaxCut}. Let an instance $\I=(G_\I,k)$ of \noun{MaxCut} be given, with $G=(V_\I,E_\I)$, $|V_\I|=n_\I$, and $|E_\I|=m_\I$. We introduce a set of jobs on polynomially many segments that can be scheduled with a total waiting time of $W$ if and only if $\I$ admits a solution. Our construction is comprised of various gadgets which we describe in the following. We make use of suitably large parameters $x \gg y \gg z \gg 1$ that we will specify later. For example, $x$ is chosen in such a way that if ever $x$ jobs are located at the same segment, these jobs need to be processed immediately in order to achieve a waiting time of $W$. Note that because jobs take no time in being processed (i.e.,~$p_j=0$), we can schedule any number of jobs sharing direction simultaneously on a single segment. Also, since~$\tau=1$, it makes no sense for a segment to stay idle if jobs are available. 
This allows us to restrict our analysis to schedules that are \emph{sensible} in the sense that for each segment and at every time step all jobs in one direction available at the segment get scheduled.
On the other hand, the non-zero transit time induces a cost of switching the direction of jobs that are processed at a segment.

\subsubsection*{Vertex gadget.}

Each of the segments $1,10,19,28,\dots$ hosts one vertex gadget for each of the vertices in $V_\I$ (cf.~Figure~\ref{fig:vertex_gadget} with the following). Each vertex gadget~$g_t$ on segment $9\ell+1$ occupies a distinct time interval $[13t,13(t+1))$, $t<n_\I$, on the segment and is associated with one of the vertices $v\in V_\I$. The gadget comes with $24y$ \emph{vertex jobs} that only need to be processed at segment $9\ell+1$, half of them being leftbound, half being rightbound. Exactly $y$ jobs of each direction are released at times $13t,13t+1,\dots,13t+11$. We say that $g_t$ is scheduled \emph{consistently} if either all leftbound vertex jobs are processed immediately when they are released and all rightbound jobs wait for one time unit, or vice-versa. We say the gadget is in the \emph{leftbound} (\emph{rightbound}) \emph{state} and interpret this as vertex $v$ being part of set $V_1$ ($V_2$) of the partition of $V_\I = V_{1} \cupdot V_{2}$ we are implicitly constructing. A schedule is \emph{consistent} if all vertex gadgets are scheduled consistently. The following lemma allows us to distinguish consistent schedules.

\begin{restatable}{lemma}{lemVertexGadget}
The vertex jobs of a single vertex gadget can be scheduled consistently with a waiting time of $12y$, while every inconsistent schedule has waiting time at least $13y$.
\label{lem:vertex_gadget}
\end{restatable}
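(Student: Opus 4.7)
The plan is to prove the upper bound by exhibiting a consistent schedule with waiting time exactly $12y$, and to prove the lower bound by a charging argument showing that any deviation from a perfectly alternating pattern costs at least $y$ extra units of waiting.

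For the upper bound I would describe the schedule corresponding to the leftbound state explicitly: at each even offset $13t, 13t+2, \dots, 13t+10$ process the entire batch of $y$ leftbound jobs released at that time, and at each odd offset $13t+1, 13t+3, \dots, 13t+11$ process the batch of $y$ rightbound jobs released at that time. Since $p_j = 0$ arbitrarily many same-direction jobs may share a slot, and since $\tau_i = 1$ the segment becomes free exactly one time unit after each slot, so the alternating pattern is feasible. Half of the $12y$ leftbound jobs wait $0$ units and the other half wait exactly $1$ unit, for a leftbound contribution of $6y$; the rightbound contribution is $6y$ by symmetry, yielding the claimed total of $12y$. The rightbound state is obtained by interchanging the two directions.

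For the lower bound I would write the waiting time as $W = y \sum_{i=0}^{11}(w_i^L + w_i^R)$, where $w_i^d$ denotes the delay of the direction-$d$ batch released at offset $i$. The key observation is that $w_i^L + w_i^R \geq 1$ for every $i$: since the segment can serve at most one direction at any moment, either the offset $13t+i$ serves some direction $d$ and the opposite-direction batch waits at least one unit for the next slot of its direction, or the offset is idle and both batches wait at least one unit. Summing yields $W \geq 12y$, with equality only if no slot is idle and consecutive slots serve opposite directions, i.e., the schedule is consistent. Any inconsistent schedule therefore either (i)~leaves some offset idle, whence $w_i^L + w_i^R \geq 2$ at that offset, or (ii)~has two consecutive offsets serving the same direction, in which case the opposite-direction batch at the first of the two slots waits at least $2$ units for its direction's next slot. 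In both cases some summand exceeds the baseline by at least one, giving $W \geq 13y$.

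The main obstacle I anticipate is the careful treatment of the boundary at offset $11$: the delay of a batch released there depends on the schedule outside the gadget's window, which is jointly determined by neighboring gadgets. I would handle this by defining $w_i^d$ with respect to the full schedule on $\mathbb{N}$ and noting that the inequality $w_i^L + w_i^R \geq 1$ continues to hold irrespective of what happens after offset $11$, so the lower bound argument is unaffected. The upper bound is attained by extending the alternating pattern into the next time step, which is compatible with the overall construction because the next gadget in the reduction can be arranged so that its first slot matches the required direction.
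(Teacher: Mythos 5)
Your proof is correct and follows essentially the same route as the paper's (much terser) argument: the alternating schedule makes exactly half of the $24y$ jobs wait one unit, giving $12y$, and since at each release offset one of the two opposite-direction batches must wait at least one unit, any sensible inconsistent schedule must serve the same direction in two consecutive slots (or idle), forcing a full batch of $y$ jobs to wait an extra unit. Your per-offset charging $w_i^{\lb}+w_i^{\rb}\geq 1$ and the boundary remark at offset $11$ are just explicit versions of what the paper leaves implicit (note that your reading of ``consistent'' as the alternating even/odd pattern, matching Fig.~\ref{fig:vertex_gadget}, is the intended one).
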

\begin{proof}
Since $p=0$, we can schedule all available jobs with the same direction simultaneously. It follows that both consistent schedules are valid, and, since in both exactly half of the vertex jobs wait for one unit of time, the total waiting time of such a schedule is~$12y$. Any inconsistent (sensible) schedule would have to send jobs in the same direction in two consecutive unit time intervals, which means that in addition to the minimum waiting time of~$12y$, at least $y$ jobs have to wait an extra unit of time.
\end{proof}

\subsubsection*{Synchronizing vertex gadgets.}
Since every vertex $v\in V_\I$ is represented by multiple vertex gadgets on different segments, we need a way to ensure that all vertex gadgets for $v$ are in agreement regarding which part of the partition $v$ is assigned to. We introduce two different gadgets that handle synchronization. The \emph{copy gadget} synchronizes the vertex gadgets $g_t$ occupying the same time interval on segments~$9\ell+1$ and $9\ell+10$, while the \emph{transposition gadget} synchronizes gadgets $g_t,g_{t+1}$ on segment~$9\ell+1$ with gadgets $g_{t+1},g_t$ on segment~$9\ell+10$. Using a combination of copy and transposition gadgets, we can transition between any two orders of vertex gadgets on distant segments.

\begin{figure}[t]
\begin{centering}
\includegraphics[width=.5\textwidth]{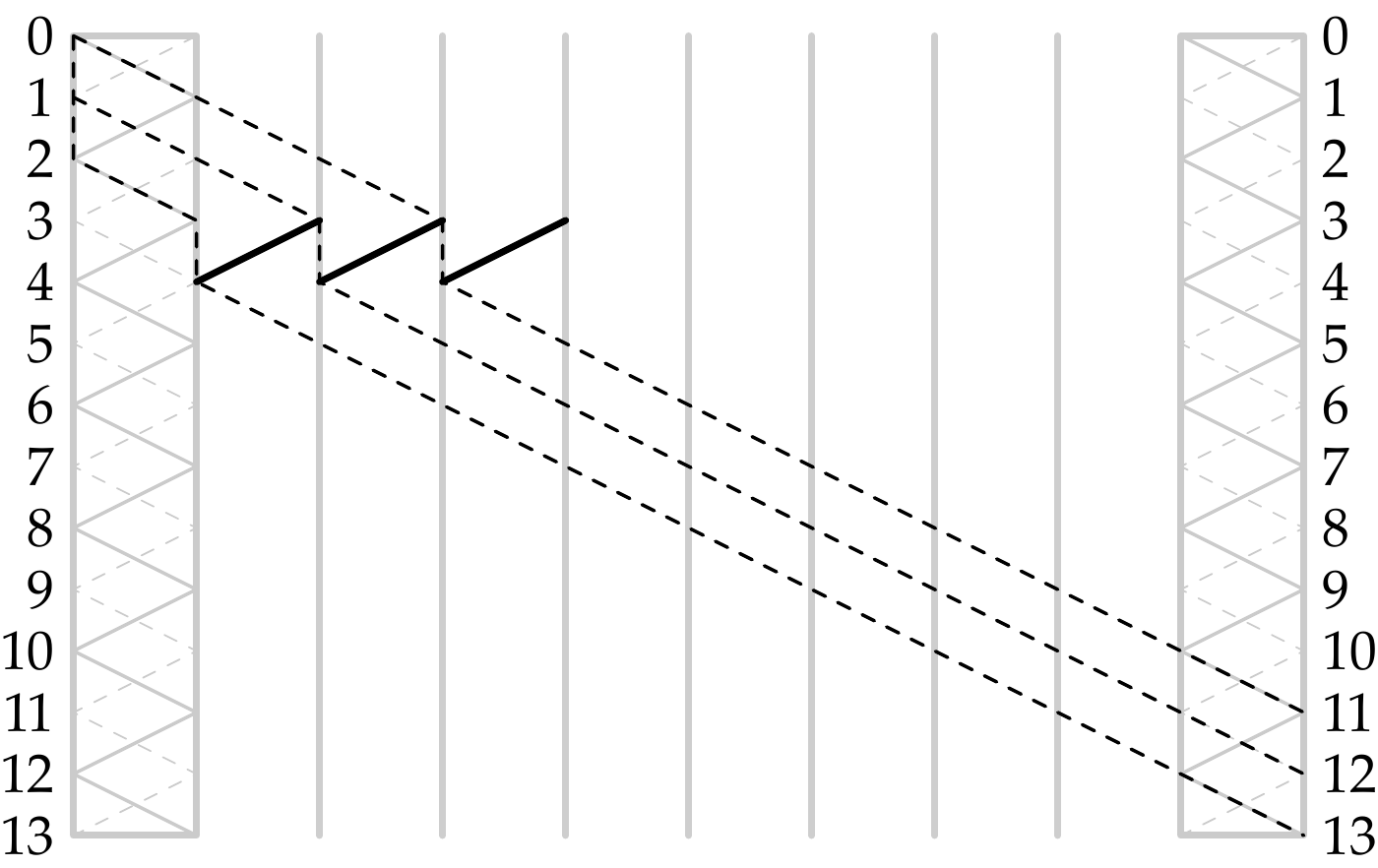}
\par\end{centering}
\protect \caption { Illustration of the copy gadget between two vertex gadgets. The dashed lines depict all sensible trajectories of the synchronizing jobs, assuming that the vertex gadgets are in the same state. \label{fig:copy_gadget} }
\end{figure}

We first specify the copy gadget that synchronizes the vertex gadgets $g_t$ on two segments~$9\ell+1$ and $9\ell+10$ (cf.~Figure~\ref{fig:copy_gadget} with the following). The gadget consists of $2z$ rightbound \emph{synchronization jobs}, half of which are released at time~$13t$ and half at time~$13t+1$. The jobs need to be processed on all segments~$9\ell+1,\dots,9\ell+10$ in this order. In addition, we introduce~$3x$ \emph{blocking jobs} that are used to enforce that specific time intervals on a segment are reserved for leftbound/rightbound operation. Essentially, releasing~$x$ blocking jobs at time~$t$ on a single segment prevents any jobs to be processed in opposite direction during the time interval~$[t,t+1)$ (and even earlier). In this manner, we block the interval starting at time $13t+3$ on segments $9\ell+2,9\ell+3,9\ell+4$.

\begin{restatable}{lemma}{lemCopyGadget}
In any consistent schedule, the synchronization jobs of a single copy gadget can be scheduled with a waiting time of $3z$ if the two corresponding vertex gadgets are in the same state, otherwise their waiting time is at least $5z$.
\label{lem:copy_gadget}
\end{restatable}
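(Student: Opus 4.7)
The plan is to analyze the scheduling of the $2z$ synchronization jobs through segments $9\ell+1, \ldots, 9\ell+10$ and compare the timing constraints imposed by the two vertex gadgets at the endpoints. By consistency, the gadget $g_t$ on each of segments $9\ell+1$ and $9\ell+10$ is in one of its two states, each of which determines the parity (relative to $13t$) of the time slots during which rightbound jobs may occupy the respective segment.

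First, I would bound the waiting time of the sync jobs on segment $9\ell+1$. In the leftbound state, the earliest rightbound slot falling inside $[13t,13t+13)$ is $[13t+1,13t+2)$, so both batches of $z$ jobs are merged there, and the jobs released at $13t$ accumulate $z$ units of waiting. In the rightbound state, the batch released at $13t$ fits into $[13t,13t+1)$ without waiting, while the batch released at $13t+1$ is pushed to $[13t+2,13t+3)$, again contributing a total wait of $z$. Either way, the sync jobs accumulate exactly $z$ units of waiting on segment $9\ell+1$, and the arrival pattern on segment $9\ell+2$ is fully determined by the state on segment $9\ell+1$.

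Next, I would propagate the two batches through segments $9\ell+2,\ldots,9\ell+9$, using that the $3x$ blocking jobs render the interval $[13t+3,13t+4)$ unavailable to rightbound traffic on segments $9\ell+2, 9\ell+3, 9\ell+4$. A direct case analysis, split on the state of~$g_t$ on segment $9\ell+1$, shows that this blockade costs exactly $2z$ additional units of waiting and pins down the arrival times on segment $9\ell+10$: in the leftbound state all $2z$ jobs arrive together at time $13t+11$ (odd offset from $13t$); in the rightbound state the two batches arrive at $13t+10$ and $13t+12$ respectively (both even offsets). In each case the parity of the arrival times matches the parity of the rightbound slots of the \emph{same} state on segment $9\ell+10$.

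Combining these observations yields the claim. If both gadgets are in the same state, the sync jobs enter segment $9\ell+10$ in its rightbound slots without further waiting, giving a total of $z + 2z = 3z$. If the states disagree, the parity mismatch forces every one of the $2z$ sync jobs to wait one additional time unit on segment $9\ell+10$, raising the total to at least $z + 2z + 2z = 5z$. The main obstacle will be to rule out alternative sensible schedules that might beat these bounds, e.g.\ by deliberately delaying a batch on some segment or by splitting merged batches. I plan to handle this by an exchange argument: since the parities of the allowed rightbound slots on segments $9\ell+1$ and $9\ell+10$ are fixed by the states, and the blocking on segments $9\ell+2, 9\ell+3, 9\ell+4$ is independent of the schedule, any reshuffling can only shift waiting time between segments without reducing the total, so the bounds $3z$ and $5z$ are indeed tight.
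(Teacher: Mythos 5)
Your case analysis matches the paper's own proof: both trace the two batches of $z$ jobs through the segments, charge $z$ of waiting at segment $9\ell+1$ plus $2z$ at the blocked segments $9\ell+2$--$9\ell+4$, and then use the parity of the arrival time at segment $9\ell+10$ relative to that gadget's state to get $3z$ versus $\geq 5z$. The paper is equally terse about ruling out alternative sensible schedules, so your proposal is correct and essentially identical in approach.
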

\begin{proof}
  Since $x \gg z$, we need to schedule all blocking jobs as soon as they are released.
  If both vertex gadgets $g_t$ linked by the copy gadget are in the rightbound state, the synchronization jobs released at time $13t$ only have to wait for one time unit at segment $9\ell+4$, while the other jobs have to wait at segments $9\ell+1$ and $9\ell+2$. Similarly, if the vertex gadgets are in the leftbound state, the first half of the jobs have to wait at segments $9\ell+1$ and $9\ell+3$, while the other half only has to wait at segment $9\ell+3$. The waiting time in either case is $3z$. If the vertex gadgets are in opposite states, all jobs have to additionally wait at segment $9\ell+10$, which results in a total waiting time of at least $5z$.
\end{proof}

\begin{figure}[t]
\begin{centering}
\includegraphics[width=.5\textwidth]{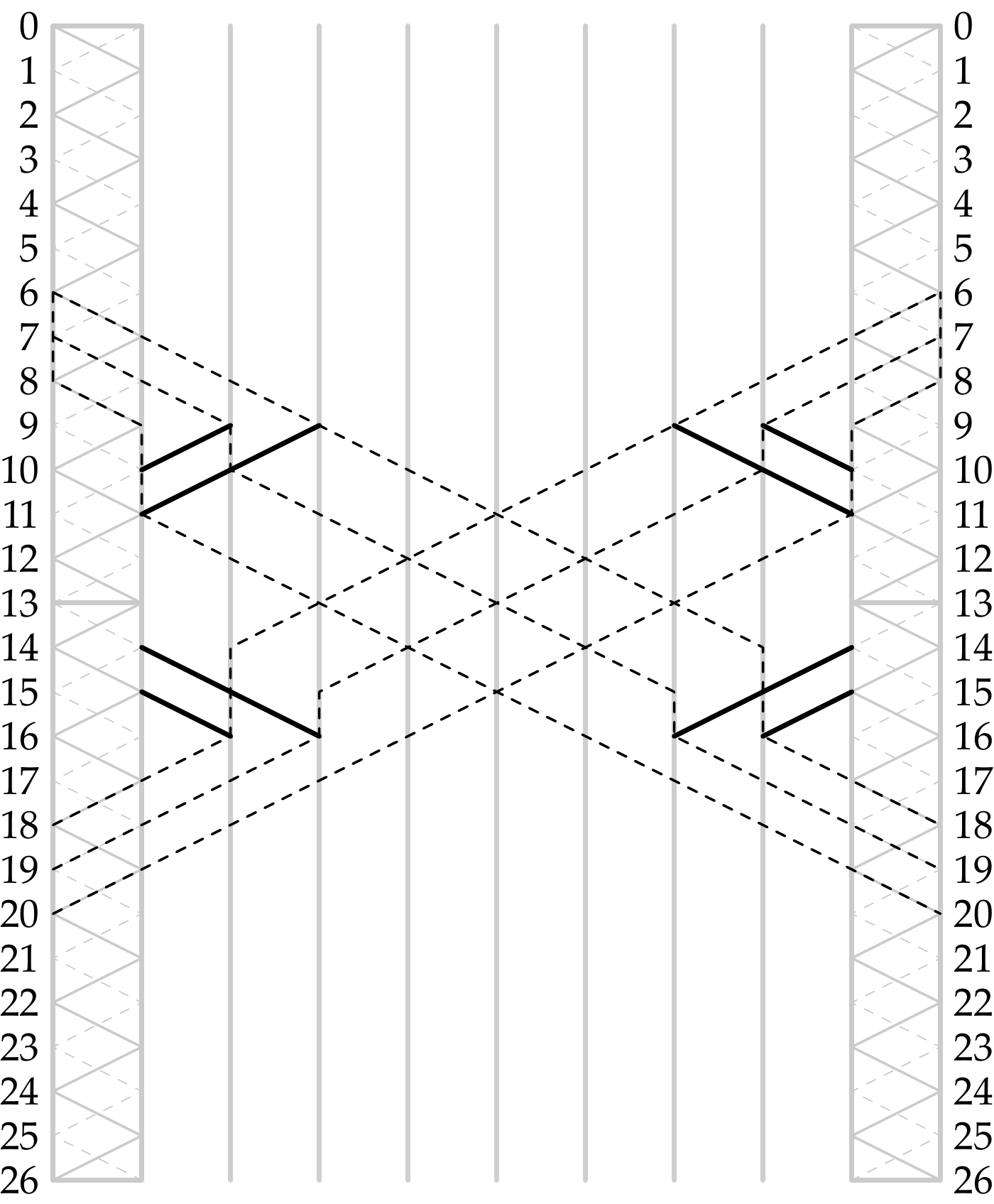}
\par\end{centering}
\protect \caption { Illustration of the transposition gadget. The dashed lines depict all sensible trajectories of the synchronizing jobs, assuming that the vertex gadgets are pairwise in the same states. Note that jobs in different directions never meet while in transit through the same segment. \label{fig:transposition_gadget} }
\end{figure}

We now describe the transposition gadget that synchronizes the vertex gadgets $g_t,g_{t+1}$ on segment~$9\ell+1$ with the vertex gadgets $g_{t+1},g_t$ on segment~$9\ell+10$ (cf.~Figure~\ref{fig:transposition_gadget} with the following). The challenge here is that jobs synchronizing the different pairs of vertex gadgets need to pass each other without interfering. We achieve this by making sure that the jobs never meet while being in transit at the same segment. The gadget consists of $4z$ synchronization jobs, half being rightbound and half being leftbound. Half of each are released at times~$13t+6$ and $13t+7$, and all need to be processed at segments~$9\ell+1,\dots,9\ell+10$ (in different directions). In addition, we introduce~$12x$ blocking jobs to block the intervals starting at the following times: at times $13t+9$, $13t+10$ for rightbound jobs and at times $13t+14$, $13t+15$ for leftbound jobs on segment $9\ell+2$, at times $13t+9$ for rightbound and at $13t+15$ for leftbound on segment~$9\ell+3$, and the corresponding (symmetrical) intervals in opposite direction on segments~$9\ell+8$ and $9\ell+9$ (cf.~Figure~\ref{fig:transposition_gadget}).

\begin{restatable}{lemma}{lemTranspositionGadget}
In any consistent schedule, the synchronization jobs of a single transposition gadget can be scheduled with a waiting time of $10z$ if each of the two pairs of corresponding vertex gadgets are in the same state, otherwise their waiting time is at least $12z$.
\label{lem:transposition_gadget}
\end{restatable}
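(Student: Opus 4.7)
The plan is to mirror the structure of Lemmas~\ref{lem:vertex_gadget} and~\ref{lem:copy_gadget}, exploiting that $x\gg z$ forces every blocking job to be processed immediately at its release date in any schedule whose total waiting time remains within the target bound $W$. Doing so pins down the direction of operation of segments $9\ell+2, 9\ell+3, 9\ell+8, 9\ell+9$ during the specific unit intervals listed in the description of the gadget, leaving the synchronization jobs with only a narrow set of feasible trajectories. A preliminary step is therefore to enumerate these forced directions and describe the channel they create for each of the four groups of $z$ synchronization jobs.

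For the upper bound in the matching case, I would exhibit the schedule depicted by the dashed trajectories in Figure~\ref{fig:transposition_gadget}: the $z$ rightbound jobs released at time $13t+6$ propagate through the gadget so as to arrive at segment $9\ell+10$ synchronously with the (rightbound) vertex gadget $g_t$ there, while the $z$ rightbound jobs released at $13t+7$ analogously line up with $g_{t+1}$; the two leftbound groups follow the symmetric path. The crucial point is that the asymmetric blocking pattern on segments $9\ell+2,9\ell+3$ for rightbound and $9\ell+8,9\ell+9$ for leftbound forces the two direction-classes into disjoint time/segment windows, so the four groups never meet in transit at the same segment, and none of them is blocked by a vertex job. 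I would then tally the waiting incurred by each group as it pauses at the outer segments to wait for its own vertex gadget to release them, and verify that the four contributions sum to exactly $10z$.

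For the lower bound in the mismatching case, I would split on which of the two pairs $(g_t, g_t)$ or $(g_{t+1}, g_{t+1})$ is in conflicting states (and handle ``both disagree'' analogously). In each case the direction forced at the outer segment by the mismatched vertex gadget is incompatible with the channel that the blocking jobs carve out through segments $9\ell+2,9\ell+3$ or $9\ell+8,9\ell+9$: the group of $z$ synchronization jobs entering or leaving through that outer segment must either wait at the outer segment until the opposing vertex gadget clears it, or wait at an interior segment until the channel direction flips, each of which costs at least two additional time units per job. This yields at least $2z$ extra waiting time on top of the $10z$ of the matching schedule and thus the claimed $12z$ bound.

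The main obstacle will be the lower-bound case analysis: we must rule out that a clever rerouting through the middle segments $9\ell+4,\dots,9\ell+7$, which are not directly pinned down by blocking jobs, could absorb the mismatch for less than $2z$ additional waiting. To handle this I would argue that any alternative sensible schedule must still route $z$ jobs through the forced direction windows on the blocked segments, since the opposite-direction synchronization group fully saturates the complementary windows; this reduces the analysis to a finite set of schedule templates, each of which can be shown to pay the $2z$ overhead by a direct waiting-time count.
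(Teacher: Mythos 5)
Your plan follows essentially the same route as the paper's (very terse) proof: use $x\gg z$ to force every blocking job to run at its release date, tally the forced waiting of the synchronization jobs (each is blocked at exactly two of the segments $9\ell+2,9\ell+3,9\ell+8,9\ell+9$, and half of them additionally lose one unit to the release-time parity of their starting vertex gadget) to obtain $10z$, and then charge each mismatched pair an extra $2z$ because its connecting jobs cannot reach the far vertex gadget at the right parity. One bookkeeping caveat for when you execute the count: the surplus arises as $2z$ connecting jobs each waiting \emph{one} extra unit (whether at the last segment or, equivalently after re-timing, at an interior one), not $z$ jobs waiting two units each, and both release-time subgroups of a given direction link the \emph{same} pair of vertex gadgets (the two release times exist only to probe the gadget's state) — but your proposed direct waiting-time count over the finitely many sensible templates would surface and fix both points.
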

\begin{proof}
  Since $x \gg z$, we need to schedule all blocking jobs as soon as they are released.
  It is easy to verify that all synchronization jobs wait at exactly 2 segments due to blocking jobs. In addition, half of the jobs wait for one unit of time at the segment where they are released -- for a total of $10z$ time units. If the pair of vertex gadgets is in opposite states, all connecting synchronization jobs need to wait at least one additional unit of time at their last segment. Observe that synchronization jobs in opposite directions are never in transit on the same segment at the same time.
\end{proof}

\subsubsection*{Edge gadget.}

\begin{figure}[t]
\begin{centering}
\includegraphics[width=.5\textwidth]{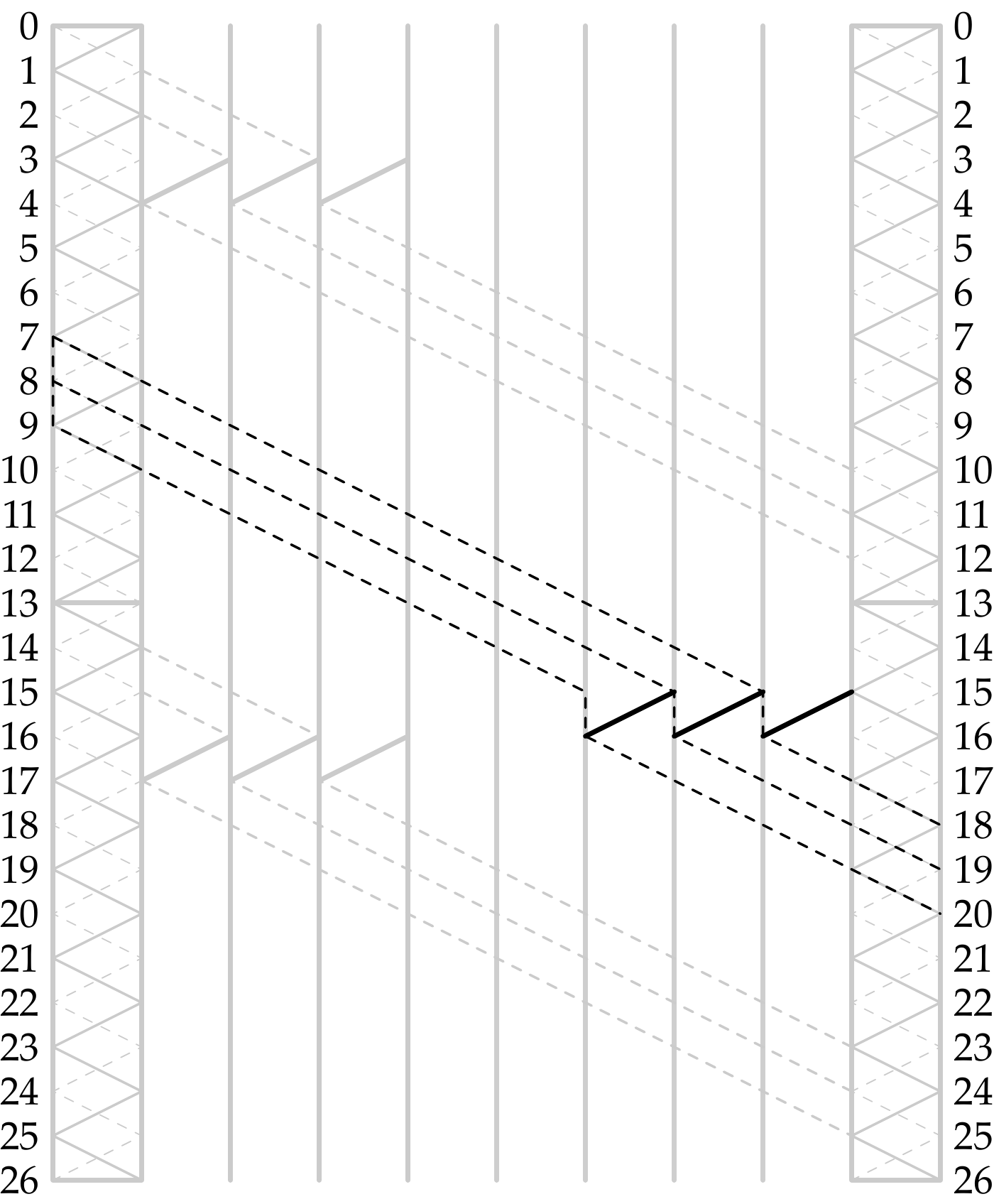}
\par\end{centering}
\protect \caption { Illustration of the edge gadget. The dashed lines depict all sensible trajectories of the synchronizing jobs, assuming that the vertex gadgets are in opposite states. Note that edge jobs do not interact with synchronization jobs of copy gadgets for both vertices. \label{fig:edge_gadget} }
\end{figure}

The purpose of an edge gadget between vertex gadget $g_t$ on segment~$9\ell+1$ and $g_{t+1}$ on segment $9\ell+10$ is to produce a small additional waiting time if the two vertex gadgets are in the same state (cf.~Figure~\ref{fig:edge_gadget} with the following). We will introduce edge gadgets between vertex gadgets representing two vertices $u,v$ that share an edge in $G$. This way, every edge that connects vertices in different parts of the partition is beneficial for the resulting waiting time. The edge gadget itself consists of 2 rightbound \emph{edge jobs}, one being released at time $13t+7$ and the other at time $13t+8$. Both jobs need to be processed on segments $9\ell+1,\dots,9\ell+10$. We add $3x$ blocking jobs to block the unit time interval starting at time $13t+15$ on segments $9\ell+7,9\ell+8,9\ell+9$.

\begin{restatable}{lemma}{lemEdgeGadget}
In any consistent schedule, the edge jobs of a single edge gadget can be scheduled with a waiting time of $3$ if the two connected vertex gadgets are in opposite states, otherwise their waiting time is at least $5$.
\label{lem:edge_gadget}
\end{restatable}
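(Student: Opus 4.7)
I plan a case analysis over the four combinations of states of $g_t$ and $g_{t+1}$. For each case I will track the earliest admissible processing time of each edge job at three points of its trajectory: entry at segment $9\ell+1$, passage through the blocked segments $9\ell+7, 9\ell+8, 9\ell+9$, and entry at segment $9\ell+10$.

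First I would establish the parity constraint at the two vertex segments. From Lemma~\ref{lem:vertex_gadget}, in a consistent schedule a rightbound job entering $9\ell+1$ must do so at an offset within $[13t, 13t+13)$ of the same parity as the state of $g_t$, and the analogous condition holds at $9\ell+10$ with respect to $g_{t+1}$. Since the two edge jobs are released at $13t+7$ and $13t+8$, whose offsets have opposite parities, exactly one of them can enter $9\ell+1$ immediately while the other must wait a time unit; which one is the ``late'' job depends on the state of $g_t$. This contributes an unavoidable wait of~$1$ summed over the two jobs.

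Next I would turn to the blocking. Because $x$ is chosen large enough, the $3x$ blocking jobs must be scheduled at their release date, so segments $9\ell+7, 9\ell+8, 9\ell+9$ are occupied in the leftbound direction during $[13t+15, 13t+16)$. A direct computation shows that for each admissible entry time at $9\ell+1$, the corresponding delay-free trajectory of each edge job intersects exactly one of the three blocked segments during the blocked interval, forcing at least one extra time unit of delay on each edge job. Moreover, shifting this delay to any earlier segment cannot decrease the total wait, since it only translates the trajectory as a whole. Combined with the parity wait at $9\ell+1$, the two edge jobs have accumulated at least $3$ time units of delay by the time they enter $9\ell+10$.

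Finally I would examine the parity of arrival at $9\ell+10$. A case-by-case check shows that in the two opposite-state configurations, both edge jobs arrive at $9\ell+10$ on admissible rightbound slots of $g_{t+1}$, and I will exhibit the explicit schedule realising the matching upper bound of~$3$. In the two same-state configurations, both edge jobs instead arrive on the wrong parity and each must wait an additional unit before entering $9\ell+10$, giving a total waiting time of at least~$5$. The main obstacle is to rule out cleverer schedules in the same-state case, for instance one that lets an edge job wait longer at $9\ell+1$ in order to correct its parity; I handle this by observing that all admissible entry times at $9\ell+1$ of a given parity are spaced by multiples of~$2$, that the same holds at $9\ell+10$, and that the blocking-induced detour has a fixed parity, so the arrival parity at $9\ell+10$ is invariant over all sensible schedules and cannot be corrected without paying the extra time unit. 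I would finally note that the edge jobs do not interact with the synchronisation jobs of the copy gadgets for the endpoint vertices, because those jobs are released at $13t$ and $13t+1$ and have cleared all the relevant segments well before $13t+7$.
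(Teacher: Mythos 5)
Your proposal is correct and follows essentially the same route as the paper's proof, which compresses the identical accounting into three sentences: one edge job waits a unit at the first segment due to the state of $g_t$, both wait a unit for the blocking jobs, and in the same-state case both wait an additional unit at the last segment. Your added parity-invariance argument (ruling out schedules that trade extra waiting at segment $9\ell+1$ for a corrected arrival parity at $9\ell+10$) and the explicit check of non-interaction with the copy-gadget jobs are details the paper leaves implicit, but they do not change the approach.
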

\begin{proof}
  One job always has to wait for a time unit at the first segment. Both jobs have to wait for the blocking jobs (since~$x \gg 1$). If the vertex gadgets are in the same state, both jobs have to wait an additional unit of time at the last segment.
\end{proof}

\subsubsection*{Construction.}

We are now ready to combine our gadgets and explain the final construction.

\unbounded*
\begin{proof}
We start by introducing a vertex gadget~$g_t$ on segment~$1$ for each vertex~$v_t\in V_\I$ of the given \noun{MaxCut}-instance.
For each edge~$\{u,v\}$ we extend the construction by appending more segments as follows.
We add a sequence of blocks of $9$ segments, the last of which contains again a vertex gadget for each vertex.
In between we add copy and transposition gadgets in such a way that on the last segment $i$ the vertex gadgets $g_0$ and $g_1$ represent the vertices $u$ and~$v$.
We can achieve this by adding less than $n_\I$ segments.
We add an additional block of $9$ segments, and add copy gadgets for each of the variables.
Finally, we add an edge gadget connecting vertex gadget $g_0$ on segment $i$ with $g_1$ on the last segment.
Observe that the edge jobs do not interfere with any of the synchronization jobs for the copy gadgets for the first two vertices~(cf.~Figure~\ref{fig:edge_gadget}).
We repeat the process once for each edge.
The total number of segments is $\O(n_\I m_\I)$,
and the total number of jobs is $\O(n^2_\I m_\I(x+y+z))$. 
The number of vertex gadgets is $n_{v}<n^2_\I m_\I$, and the number of transposition and copy gadgets is $n_t<n_c<n_{v}$.

We claim that if the \noun{MaxCut} instance admits a solution~$\mathcal{S}$, we
can schedule all jobs with waiting time at most 
$W=12 n_v y + 3 n_c z + 10 n_t z + 5 m_\I - 2k$.
We do this by scheduling all vertex gadgets consistently in the state corresponding to the part of the partition the corresponding vertex belongs to in $\mathcal{S}$.
Lemmas~\ref{lem:vertex_gadget}
through~\ref{lem:transposition_gadget} guarantee that we can schedule
everything but the edge jobs without incurring a waiting time greater
than $12 n_v y + 3 n_c z + 10 n_t z$. Finally, since at least $k$ edges in the
\noun{MaxCut} solution are between vertices in different sets of the
partition, and the vertex gadgets are set accordingly, by 
Lemma~\ref{lem:edge_gadget}, we obtain an additional waiting time of at most $5m_\I-2k$ as claimed.

It remains to establish that the waiting time exceeds $W$ in case
the \noun{MaxCut} instance does not admit a solution. We set $x=W+1$, such that all blocking jobs have to be scheduled as soon as they are released. 
By Lemma~\ref{lem:vertex_gadget},
scheduling at least one vertex gadget inconsistently produces a total waiting
time of at least $12 n_v y + y$. We now set $y = 18 n_\I^2 m_\I z > 3 n_c z + 10 n_t z + 5m_\I$ for the vertex jobs,
such that a single inconsistent vertex gadget results in a waiting time
greater than $W$. 
Hence, each vertex gadget needs to be scheduled consistently. By Lemmas~\ref{lem:copy_gadget} and~\ref{lem:transposition_gadget},
we have that if not all vertex gadgets corresponding to the same vertex are in the same state, the waiting time for vertex and synchronization jobs is at least $12 n_v y + 3 n_c z + 2n_t z + z$. 
We set $z = 5m_\I$, which
allows us to conclude that all vertex gadgets are in agreement regarding
the partition of the vertices. Finally, Lemma~\ref{lem:edge_gadget}
enforces that there are at least $k$ edge gadgets between vertices
in different states. This however is impossible as our \noun{MaxCut}
instance does not admit a solution.
\end{proof}

\corUnboundedWithProcessing*
\begin{proof}
  We adapt our construction by setting~$p=1$ and~$\tau=n^2 m$ and scaling all release times by $n^2 m$, where $n,m$ are the number of jobs and segments, respectively. 
	We claim that the original instance admits a solution of some waiting time $W$ if and only if it now admits a solution with waiting time in $[W\tau, (W+1)\tau)$.
	This proves the Corollary, as the intervals are pairwise disjoint for different (integer) values of $W$. 
	
	If the original construction (with~$p=0$ and~$\tau=1$) does not admit a solution with waiting time at most $W$,
then a scaled version with~$p=0$ and~$\tau = n^2 m$ does not admit a solution with waiting time at most~$W\tau$. But the lowest possible waiting is monotonically increasing with increasing processing times, hence the adapted instance with~$p=1$ does not admit a solution of waiting time at most~$W\tau$.

Conversely, assume we have a solution of the original instance with waiting time~$W$. We fix the order in which jobs are processed along each segment and construct a schedule for the setting~$p=1$, $\tau=n^2 m$ by introducing additional waiting periods for each job. Clearly, each job has to wait at most one time unit for each other job to be processed at each segment. Hence, the additional waiting time overall is smaller than~$n^2 m = \tau$. 
\end{proof}

\section{Proofs of Section~\ref{sec:PTAS}: \newline A PTAS for bidirectional scheduling}
\label{appendix:ptas}

\ifthenelse{\boolean{ptas-more}}{
In this Section we restate the Lemmas with detailed proofs that are necessary to show the existence of a PTAS if the processing times of the jobs are not restricted to be equal.

\subsection{Single Segment}

We consider first the case of a single segment, or, more precisely, the bidirectional scheduling problem on a single segment and with compatibility graph~$G_1\in\{K_{\rightb{n},\leftb{n}},\emptyset\}$.
}{
In this Section we state the Lemmas with detailed proofs that are necessary to show the existence of a PTAS if the processing times of the jobs are not restricted to be equal in the case of a single segment. More precisely, we consider the bidirectional scheduling problem on a single segment with compatibility graph~$G_1\in\{K_{\rightb{n},\leftb{n}},\emptyset\}$.
}
Following the proof scheme of~~\cite{Afrati1999}, we introduce several lemmas that allow us to make assumptions at ``$\O(1+\eps)$-loss'', meaning that we can modify any input instance and optimum schedule to adhere to these assumptions, such that the resulting schedule is within a factor polynomial in~$(1+\eps)$ of the optimum schedule for the original instance.
To not complicate matters unnecessarily, in the following we allow fractional release dates and processing times.

\begin{restatable}{lemma}{lemPtasGeometricRounding}
\label{lem:ptas-geometric_rounding}
  With~$\O(1+\eps)$-loss we can assume that $\rel{j},\proc{j}\in\{(1+\eps)^x\mid x\in\N\}\cup\{0\}$, $\rel{j} \geq \eps(\proc{j}+\transit{1})$, and~$r_j\geq 1$ for each job~$j\in\jobs$.
\end{restatable}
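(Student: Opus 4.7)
The plan is to impose the three properties by three successive modifications of the input, each losing only a factor $(1+\eps)$ in total completion time. The combined loss is $1 + \O(\eps)$, which is absorbed by choosing $\eps$ suitably small in the PTAS framework.

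First I would enforce $r_j \geq \eps(p_j + \tau_1)$ by raising the release date of each violating job to $\eps(p_j + \tau_1)$. An optimum schedule for the original instance can be made feasible for the modified one by multiplying every start time by $(1+\eps)$: this uniform dilation preserves all processing and transit constraints (same- and opposite-direction gaps both scale by $(1+\eps)$), it respects the new release dates since the original lower bound $C_j \geq p_j + \tau_1$ gives $r_j^{\text{new}} = \eps(p_j + \tau_1) \leq \eps\, C_j \leq \eps\,(1+\eps)S_j$, and it blows up every $C_j$ by at most a factor $(1+\eps)$.

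Second I would round each $p_j$ and each positive $r_j$ up to the nearest value in $\{(1+\eps)^x : x \in \N\}$. Applying the same $(1+\eps)$-dilation to an optimum of the instance from the first step yields a feasible schedule for the rounded data: for two consecutive start times $S_j < S_{j'}$ of jobs on the segment, the dilated gap $(1+\eps)(S_{j'}-S_j)$ is at least $(1+\eps)\,p_j \geq p_j'$ in the same-direction case and at least $(1+\eps)(p_j+\tau_1) \geq p_j' + \tau_1$ in the opposite-direction case, so the unchanged transit time $\tau_1$ still fits. The rounded release dates $r_j' \leq (1+\eps)\,r_j$ are obeyed since $(1+\eps)S_j \geq (1+\eps)r_j \geq r_j'$, and every completion time grows by at most a factor $(1+\eps)$.

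Third I would scale the time axis to guarantee $r_j \geq 1$. After the previous steps, the smallest positive release date $r_{\min}$ is a positive power of $(1+\eps)$; dividing every time parameter ($r_j$, $p_j$, $\tau_1$) by $r_{\min}$ is a change of units that preserves feasibility, the optimum schedule, and the approximation ratio, while enforcing $r_j \geq 1$ on every nontrivial job. Degenerate jobs with $p_j + \tau_1 = 0$ contribute $C_j = 0$ in every schedule and can be assigned $r_j = 1$ at no cost. The main obstacle is to verify that the dilation in the second step does not create infeasibility at the segment, since $\tau_1$ is a fixed parameter and is \emph{not} rescaled with the start times. The calculation above resolves this: the opposite-direction gap already contains $\tau_1$, so stretching by $(1+\eps)$ produces a surplus of $\eps(p_j+\tau_1)$ that comfortably absorbs the rounding $p_j \mapsto p_j' \leq (1+\eps)\,p_j$ while leaving $\tau_1$ intact. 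This adapts the analogous rounding step in the machine-scheduling PTAS of~\citet{Afrati1999} to the bidirectional setting.
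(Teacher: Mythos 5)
Your overall strategy (three successive $(1+\eps)$-lossy transformations, with the same dilation trick as \citet{Afrati1999}) matches the paper's, and your second and third steps are sound. But your first step contains a genuine error. To justify that the dilated schedule respects the raised release dates $r_j^{\text{new}}=\eps(\proc{j}+\transit{1})$, you write $\eps(\proc{j}+\transit{1})\leq\eps \compl{j}\leq\eps(1+\eps)\start{j}$. The last inequality asserts $\compl{j}\leq(1+\eps)\start{j}$, i.e.\ $\proc{j}+\transit{1}\leq\eps\start{j}$ --- which is essentially the property you are trying to establish and is false in general: a job with $\start{j}=r_j=0$ and $\proc{j}+\transit{1}>0$ still starts at time $0$ after multiplying all start times by $(1+\eps)$, and hence violates the new release date. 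Dilating \emph{start} times simply does not push early-starting jobs past $\eps(\proc{j}+\transit{1})$.

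The paper's proof avoids this by dilating \emph{completion} times instead: setting $\compl{j}'=(1+\eps)\compl{j}$ and $\start{j}'=\compl{j}'-(\proc{j}+\transit{1})$ gives
$\start{j}'=(1+\eps)\start{j}+\eps(\proc{j}+\transit{1})\geq\eps(\proc{j}+\transit{1})$
unconditionally, and one checks (as the paper does implicitly, and as you did for start-time dilation) that stretching the gaps between completion times by $(1+\eps)$ preserves both the same-direction and opposite-direction separation constraints. Replacing your first step by this completion-time dilation repairs the argument. A second, more minor point: since you round $\proc{j}$ up \emph{after} enforcing $r_j\geq\eps(\proc{j}+\transit{1})$, that inequality may afterwards fail by a factor up to $(1+\eps)$ for the rounded $\proc{j}'$; the paper sidesteps this by rounding the processing times first, so you should either reorder the steps or accept the guarantee with $\eps/(1+\eps)$ in place of $\eps$.
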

\begin{proof}
  Increasing any value~$v\in\mathbb{R}$ to the smallest power of~$(1+\eps)$ not smaller than~$v$ yields a value~$v' = (1+\eps)^x=(1+\eps)(1+\eps)^{x-1}\leq(1+\eps)v$. 
  Hence, multiplying all start times of a schedule by~$(1+\eps)$ gives a feasible schedule even when rounding up all nonzero processing times 
  to the next power of~$(1+\eps)$. The total completion time does not increase by more than a factor of~$(1+\eps)$.

  By shifting the completion times of a schedule with adapted processing times by a factor of~$(1+\eps)$, we obtain increased start times~$\start{j}'$ for each job~$j$:
  \[
  \start{j}' = (1+\eps)\compl{j} - (\proc{j}+\transit{1}) 
	= (1+\eps)(\start{j} + \proc{j} + \transit{1}) - (\proc{j} + \transit{1}) 
  \geq \eps(\proc{j}+\transit{1})\text{.}  
  \]
  Hence, by losing not more than a~$(1+\eps)$-factor we may assume that all jobs have release dates of at least an~$\eps$ fraction of their running time. Now, we can scale the instance by some power of~$(1+\eps)$, such that the earliest release date is at least one (since jobs with $\rel{j}=\proc{j}=\transit{1}=0$ can be ignored). 

  Finally, multiplying again all start times of a schedule with adapted processing times and release dates by~$(1+\eps)$ yields a feasible schedule even when rounding up all nonzero release dates to the next power of~$(1+\eps)$.
\end{proof}

We define~$R_x=(1+\eps)^x$ and consider time intervals~$I_x=[R_x,R_{x+1}]$ of length~$\eps R_x$.

\enlargethispage*{1ex}
\begin{restatable}{lemma}{lemPtasBoundedCrossing}
  \label{lem:ptas-bounded_crossing}
    Each job runs for at most~$\cs:=\lceil\log_{1+\eps}\frac{1+\eps}{\eps} \rceil$ intervals, i.e., a job starting in interval~$I_x$ is completed before the end of~$I_{x+\cs}$.
\end{restatable}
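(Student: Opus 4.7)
The plan is to directly bound the completion time of any job that starts in some interval $I_x$ and check that this bound does not exceed $R_{x+\cs+1}$, which is the right endpoint of $I_{x+\cs}$. The key fact is the assumption from Lemma~\ref{lem:ptas-geometric_rounding} that $\rel{j} \geq \eps(\proc{j} + \transit{1})$, which rearranges to $\proc{j} + \transit{1} \leq \rel{j}/\eps$. This links the maximum running time of a job to the magnitude of its release date, and hence to the index of the interval in which it starts.

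First I would observe that if job~$j$ starts in $I_x$, then $S_j \leq R_{x+1}$, and because any feasible schedule respects release dates, $S_j \geq \rel{j}$. Combining these two observations with the assumption above gives
\[
  \proc{j} + \transit{1} \leq \rel{j}/\eps \leq S_j/\eps \leq R_{x+1}/\eps.
\]
Consequently the completion time satisfies
\[
  \compl{j} = S_j + \proc{j} + \transit{1} \leq R_{x+1} + R_{x+1}/\eps = R_{x+1} \cdot \frac{1+\eps}{\eps}.
\]

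Finally I would compare this to $R_{x+\cs+1} = R_{x+1}(1+\eps)^{\cs}$. The inequality $\compl{j} \leq R_{x+\cs+1}$ holds whenever $(1+\eps)^{\cs} \geq (1+\eps)/\eps$, i.e.\ $\cs \geq \log_{1+\eps}\frac{1+\eps}{\eps}$, which is exactly guaranteed by the definition $\cs = \lceil \log_{1+\eps}\frac{1+\eps}{\eps}\rceil$. Hence the job is completed no later than the end of $I_{x+\cs}$, so it runs during at most $\cs$ intervals. There is no real obstacle here; the lemma is a short computation whose whole point is to motivate the chosen value of~$\cs$ from the geometric rounding step.
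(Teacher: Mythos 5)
Your proof is correct and follows essentially the same route as the paper's: both rest on the guarantee $\rel{j}\geq\eps(\proc{j}+\transit{1})$ from Lemma~\ref{lem:ptas-geometric_rounding} together with the geometric growth of the intervals. The only difference is bookkeeping---you bound $\compl{j}$ directly by $R_{x+1}\cdot\frac{1+\eps}{\eps}$ and compare it to $R_{x+\cs+1}$, whereas the paper bounds the running time by $|I_x|/\eps^2$ and sums the lengths of the intervals $I_x,\dots,I_{x+\cs}$; your version is, if anything, slightly cleaner for jobs that start late in $I_x$.
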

\begin{proof}
    Consider some job~$j$ and assume that~$j$ starts in~$I_x$ in some schedule. By Lemma~\ref{lem:ptas-geometric_rounding} we get 
    \[
    |I_x| = \eps R_x \geq \eps\rel{j} \geq \eps^2(\proc{j}+\transit{1})\text{.}
    \]
  Thus, the running time of~$j$ is bounded by~$|I_x|/\eps^2$. The constant upper bound of~$1/\eps^2$ for the number of used intervals can still be improved since the length of the next~$\cs$ succeeding intervals with increasing size is sufficient to cover a length of~$|I_x|/\eps^2$. Using the fact that~$\sum_{k=0}^n z^k = \frac{1-z^{n+1}}{1-z}$ we get
  \begin{align*}
    \sum_{i=0}^{\cs} |I_{x+i}| & = \sum_{i=0}^{\cs} (R_{x+i+1}- R_{x+i}) = 
  |I_x|\sum_{i=0}^{\cs} (1+\eps)^i \\
    & = |I_x|\frac{1-(1+\eps)^{{\cs}+1}}{1-(1+\eps)}\\
    & \geq |I_x|\frac{1-\frac{1+\eps}{\eps}}{-\eps} = 
  |I_x|\frac{1+\eps-\eps}{\eps^2} =\frac{|I_x|}{\eps^2},
  \end{align*}
  which concludes the proof.
\end{proof}

We use the common technique of \emph{time-stretching}. We shift each start time (or completion time) to the next interval while maintaining the same offset to the beginning of the interval.  This way, the schedule remains feasible and the objective is increased by a factor of at most~$(1+\eps)$. Intuitively, this process can be interpreted as stretching the length of each time interval~$I_x$ by a factor of~$(1+\eps)$, i.e., its length is increased by~$\eps|I_x|$. When applying (multiple) time-stretches we use the following observation to assess the additional empty space created between jobs:

\begin{restatable}{lemma}{lemPtasTimeStretch}
  \label{lem:ptas-time-stretch}
  Consider two distinct times~$T_1 < T_2$ with~$T_1\in I_{x(1)}$ and~$T_2\in I_{x(2)}$. Applying~$\ell$ time-stretches yields shifted times~$T'_1 < T'_2$ with
  \begin{equation}\label{eq:bidir-time-shift}
      (T'_2 - T'_1) \geq (T_2 - T_1) + \idle{x(1)}{x(2)},
  \end{equation}
	where $\idle{x(1)}{x(2)} := \sum_{x(1)\leq x < x(2)}\ell\eps|I_x|$.
\end{restatable}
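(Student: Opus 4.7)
The plan is to unwind the definition of a single time-stretch and then iterate it $\ell$ times. A single stretch takes a point $T = R_x + \delta \in I_x$ (with $0 \leq \delta \leq |I_x| = \eps R_x$) to $R_{x+1} + \delta$; this indeed lies in $I_{x+1}$ since $|I_{x+1}| = (1+\eps)|I_x| \geq \delta$, so the offset remains admissible. Hence a single stretch shifts $T$ by exactly $R_{x+1} - R_x = \eps R_x = |I_x|$, a quantity that depends only on the index of the interval containing $T$.

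First I would dispose of the degenerate case $x(1) = x(2)$: both points lie in the same interval, shift by the same amount under every stretch, and the sum defining $\idle{x(1)}{x(2)}$ is empty, so the inequality holds with equality. For $x(1) < x(2)$, I would argue by induction on $\ell$ that after $\ell$ stretches $T_i$ sits at $R_{x(i)+\ell} + \delta_i$ for $i \in \{1,2\}$, and then compute directly
\begin{equation*}
T'_2 - T'_1 - (T_2 - T_1) \;=\; \bigl(R_{x(2)+\ell} - R_{x(1)+\ell}\bigr) - \bigl(R_{x(2)} - R_{x(1)}\bigr) \;=\; \bigl((1+\eps)^\ell - 1\bigr)\bigl(R_{x(2)} - R_{x(1)}\bigr),
\end{equation*}
where the second equality uses $R_{x+\ell} = (1+\eps)^\ell R_x$. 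Applying Bernoulli's inequality $(1+\eps)^\ell \geq 1 + \ell\eps$ and rewriting $R_{x(2)} - R_{x(1)} = \sum_{x(1) \leq x < x(2)} |I_x|$ then delivers the claimed lower bound $\idle{x(1)}{x(2)} = \sum_{x(1) \leq x < x(2)} \ell\eps|I_x|$.

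There is no serious obstacle here: the argument is a direct computation once the per-stretch action on a point is pinned down. The only subtlety is verifying that the offsets $\delta_i$ remain admissible after each stretch, so that the inductive description of the shifted points makes sense; this is immediate from the monotonicity of interval lengths. Bernoulli's inequality accounts for the fact that the cumulative shift actually grows geometrically as $(1+\eps)^\ell$, while the lemma only asks for the linear (first-order) bound $\ell\eps|I_x|$, which is the form needed when combining several separate time-stretches later in the analysis.
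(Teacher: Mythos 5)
Your proof is correct and follows essentially the same route as the paper: express the stretched time as $R_{x+\ell}$ plus the preserved offset, note $R_{x+\ell}=(1+\eps)^\ell R_x$, apply Bernoulli's inequality, and telescope $R_{x(2)}-R_{x(1)}=\sum_{x(1)\leq x<x(2)}|I_x|$. The extra checks you include (the degenerate case $x(1)=x(2)$ and the admissibility of the offsets) are fine but not needed beyond what the paper's direct computation already establishes.
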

\begin{proof}
  We calculate
  \begin{align*}
    (T'_2 - T'_1) & = R_{x(2)+\ell} + (T_2-R_{x(2)}) - [R_{x(1)+\ell} + (T_1-R_{x(1)})]\\
     & = ((1+\eps)^{\ell} - 1) R_{x(2)} + T_2 - ((1+\eps)^{\ell} - 1) R_{x(1)} - T_1\\
     & \geq (T_2 - T_1) + (1+\ell\eps -1)(R_{x(2)} - R_{x(1)})\\
     & = (T_2 - T_1) + \ell\eps\sum_{x(1)\leq x < x(2)}|I_x|.
  \end{align*}
\end{proof}

We can now apply time-stretches to the start or completion times of all jobs and use the above observation to quantify the additional space created in the schedule. Consider two jobs $j,k\in\jobs$ with starting times~$\start{j}<\start{k}$, and let $s(j),s(k)$ (resp.\ $c(j),c(k)$) denote the intervals in which their start (completion) times fall, i.e., $\start{j}\in I_{s(j)}$ (and $\compl{j}\in I_{c(j)}$). E.g., if we apply~$\ell$ time-stretches to starting times, we obtain an additional gap of $\idle{s(j)}{s(k)}$ between the new starting and completion times. Table~\ref{tab:bidir-time-shift} summarizes the resulting gaps depending on whether start or completion times are stretched and whether~$j,k$ travel in the same or opposite directions.
%
%
\begin{table}[ht]
\caption{Summary of the increased differences between start and completion times of jobs~$j,k\in\jobs$, $\start{j}<\start{k}$, when stretching start times (denoted by~$\smash{'}$) or completion times (denoted by~$\smash{''}$). We use Lemma~\ref{lem:ptas-time-stretch} together with the fact that~$j$ and~$k$ did not overlap before the time-stretch.}\label{tab:bidir-time-shift}
  \centering
  \scriptsize
  \begin{tabulary}{\linewidth}{RRCC}
    \toprule
    time-stretch on & &  same direction &  opposite direction \\
    \midrule
    \multirow{1}{*}{start times} &%
    \small\eqref{eq:bidir-time-shift}&%
    \small$\start{k}'\geq \start{j}'+\proc{j} + \idle{s(j)}{s(k)}$ &%
    \small$\start{k}'\geq \start{j}'+\proc{j} +\transit{} + \idle{s(j)}{s(k)}$\\
    & \small $\Rightarrow$&%
    \small$\compl{k}'\geq \compl{j}'+\proc{k} + \idle{s(j)}{s(k)}$ &%
    \small$\compl{k}'\geq \compl{j}'+\proc{k} + \transit{}+\idle{s(j)}{s(k)}$\\
    \midrule
    \multirow{1}{*}{compl.\ times} &%
    \small\eqref{eq:bidir-time-shift}&%
    \small$\compl{k}''\geq \compl{j}''+\proc{k} + \idle{c(j)}{c(k)}$ &%
    \small$\compl{k}''\geq \compl{j}''+\proc{k} + \transit{}+\idle{c(j)}{c(k)}$\\
    & \small $\Rightarrow$&%
    \small$\start{k}''\geq \start{j}''+\proc{j}+\idle{c(j)}{c(k)}$ &%
    \small$\start{k}''\geq \start{j}''+\proc{j}+\transit{}+\idle{c(j)}{c(k)}$\\
    \bottomrule
  \end{tabulary}
\end{table}

To analyze the set of jobs released within each interval we partition them as follows. A job~$j$ released  at~$R_x$ is called \emph{small} if~$\proc{j}\leq\frac{\eps^2}{4}|I_x|$ and \emph{large} otherwise. With this, we partition for each direction~$\dir\in\{\rb, \lb\}$ the jobs~$\dirbset{J}{\dir}_{x}:=\{j\in \dirbset{J}{\dir} \mid  \rel{j}=R_x\}$ released at~$R_x$ into the subsets~$\dirbset{S}{\dir}_x=\{j\in \dirbset{J}{\dir}_{x} \mid j \text{ is small}\}$ and~$\dirbset{L}{\dir}_x=\{j \in \dirbset{J}{\dir}_{x} \mid j \text{ is large}\}$.
We will see that the arrangement of jobs of each~$\dirbset{S}{\dir}_x$ does not influence the remaining jobs too much such that we can assume a fixed order for each of these sets.  To do so, we say that a subset~$\set{J'}\subseteq\jobs$ of jobs is scheduled in \emph{SPT order} (shortest processing time first)  if~$\start{j_1}\leq\start{j_2}$ for any pair of jobs~$j_1, j_2\in \set{J'}$ with~$\proc{j_1}<\proc{j_2}$. Furthermore, we denote the sum of processing times of~$\set{J'}$ as~$p(\set{J'})$ and the union of small jobs  released up to some point~$R_{x}$ with direction~$\dir\in\{\rb, \lb\}$ by~$\dirbset{S}{\dir}_{\leq x} = \bigcup_{x'\leq x}\dirbset{S}{\dir}_{x}$.

\begin{restatable}{lemma}{lemPtasSPTForSmallJobs}
  \label{lem:ptas-SPT_for_small_jobs}
    With~$\O(1+\eps)$-loss we can restrict to schedules such that for each~$x\geq 0$ and each~$\dir\in\{\rb, \lb\}$:
    \begin{compactenum}
      \item\label{ptas-it:small-without-rel} the processing of no small job contains a release date,
      \item\label{ptas-it:small-spt} jobs contained in~$\dirbset{S}{\dir}_{\leq x}$ are scheduled in SPT order within~$I_x$, and
      \item\label{ptas-it:small-bounded}~$\proc{}(\dirbset{S}{\dir}_x) \leq |I_x|$.
    \end{compactenum}
\end{restatable}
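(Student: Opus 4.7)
The plan is to establish the three properties sequentially, each at a $(1+\eps)$-loss, following the style of the analogous lemmas in \citet{Afrati1999}.

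I would first handle property~\ref{ptas-it:small-bounded} at the level of the instance. If $\proc{}(\dirbset{S}{\dir}_x) > |I_x|$ for some $x$ and $\dir$, select a subset of jobs in $\dirbset{S}{\dir}_x$ whose total processing time is at most $|I_x|$ and defer the release dates of the remainder to $R_{x+1}$; iterate for $x+1, x+2, \dots$ until the property holds in all intervals and both directions. A deferred job $j$ has its release date multiplied by at most $(1+\eps)^k$ for some $k$, but in any feasible schedule for the original instance this job could not have completed before all preceding same-direction small jobs in $\dirbset{S}{\dir}_x$ of total length greater than $|I_x|$ had been processed, so its completion time in the modified instance grows by at most a factor of $(1+\eps)$. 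Note that a job which was small at $R_x$ remains small at $R_{x+k}$ because the smallness threshold $\tfrac{\eps^2}{4}|I_{x+k}|$ is monotone in $x$, so the partition into small and large is preserved throughout the iteration.

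For property~\ref{ptas-it:small-spt}, I would use the classical SPT exchange argument applied direction-wise. Fix a sensible schedule and consider two small jobs $j_1, j_2 \in \dirbset{S}{\dir}_{\leq x}$ of the same direction scheduled in $I_x$ with $\start{j_1} < \start{j_2}$ and $\proc{j_1} > \proc{j_2}$. Both jobs are released no later than the beginning of $I_x$ and travel in direction $\dir$, so one can exchange the two jobs within the union of their processing slots (padding with idleness where necessary); no opposing job sees any change in the total time blocked by same-direction jobs, so the swap is feasible, and by the standard adjacent-swap argument $\sum C_j$ does not increase. Iterating such bubble-sort swaps yields SPT order within $I_x$ without increasing the objective.

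For property~\ref{ptas-it:small-without-rel}, I would apply a single time-stretch to the start times. By Lemma~\ref{lem:ptas-time-stretch} this creates a free gap of length $\eps|I_x|$ between jobs straddling consecutive intervals. Since a small job released at any $R_{x'}\le R_x$ and scheduled in $I_x$ satisfies $\proc{j} \le \tfrac{\eps^2}{4}|I_x|$, which is strictly less than the new gap $\eps|I_x|$, any small job whose processing interval currently contains the release date $R_{x+1}$ can be shifted forward into the free gap starting at $R_{x+1}$ without colliding with any other job; this preserves feasibility and eliminates all release-date crossings at a single $(1+\eps)$-factor cost. The main obstacle is property~\ref{ptas-it:small-bounded}, since it is a modification of the input instance rather than a local rearrangement of the schedule, and the iterative spill-over into successive intervals must be charged carefully against the optimum in order to obtain only a single $(1+\eps)$-factor of loss overall; the other two properties then follow from standard time-stretching and SPT exchange arguments adapted to the bidirectional setting by restricting swaps and shifts to jobs of a fixed direction.
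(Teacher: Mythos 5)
Your plan reverses the order in which the paper establishes the three properties, and this reversal creates a genuine gap in your treatment of property~\ref{ptas-it:small-bounded}. In the paper, properties~\ref{ptas-it:small-without-rel} and~\ref{ptas-it:small-spt} are proved first, and property~\ref{ptas-it:small-bounded} then comes for free: once the schedule is already in SPT order and no small job straddles a release date, the jobs of $\dirbset{S}{\dir}_x$ that start in $I_x$ form an SPT-prefix of total volume at most $|I_x|$, so the remaining jobs are provably scheduled at or after $R_{x+1}$ \emph{in the schedule already at hand}, and moving their release dates to $R_{x+1}$ changes nothing (zero additional loss). Your instance-level argument, by contrast, must charge the deferral against the \emph{original} optimum, and the justification you give --- that a deferred job ``could not have completed before all preceding same-direction small jobs of total length greater than $|I_x|$ had been processed'' --- is false: all jobs of $\dirbset{S}{\dir}_x$ are released simultaneously and there is no precedence among them, so the original optimum may complete a deferred job essentially at $R_x+\proc{j}+\transit{1}$, while your modification forces it to wait until $R_{x+k}$; since $p(\dirbset{S}{\dir}_x)$ can be as large as $n\cdot\frac{\eps^2}{4}|I_x|$, the spill-over depth $k$ is not bounded by a constant and the factor $(1+\eps)^k$ is unbounded. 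The implicit assumption that the deferred job must wait behind the shorter ones is exactly SPT order, i.e.\ property~\ref{ptas-it:small-spt}, which you only establish afterwards --- a circularity.

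Your exchange argument for property~\ref{ptas-it:small-spt} also does not go through as stated. Swapping $j_1$ (longer) and $j_2$ (shorter) ``within the union of their processing slots'' is not feasible when other jobs lie between them: $j_1$ does not fit into $j_2$'s slot of length $\proc{j_2}<\proc{j_1}$, so it must overrun into the processing interval of the next same-direction job or into the blocked interval $[\start{ij},\compl{ij})$ of an opposing job, and feasibility against opposing jobs depends on the actual intervals (including transit), not on ``the total time blocked by same-direction jobs.'' The paper instead performs a global rearrangement: it removes \emph{all} small jobs of the given direction, refills the resulting gaps greedily in SPT order so that the cumulative started volume of small jobs never decreases (hence their total completion time does not increase), and pays one time-stretch to absorb the up-to-$\frac{\eps^2}{4}|I_x|$ displacement this inflicts on the remaining jobs. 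Your argument for property~\ref{ptas-it:small-without-rel} via a single time-stretch is essentially the paper's and is fine, but the other two properties need the paper's ordering and its remove-and-refill construction rather than local swaps and an a priori truncation of the release sets.
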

\begin{proof}
To prove claim~\ref{ptas-it:small-without-rel} we consider some schedule and apply a time-stretch via start times. Observe that no further crossing of a processing over a release date is produced for small jobs. If there was a release date~$R_{s(j)+1}$ contained in the processing interval of a small job of~$I_{s(j)}$ it is moved behind the processing since we get by Lemma~\ref{lem:ptas-time-stretch} that~$R_{s(j)+2}-\start{j}' \geq R_{s(j)+1}-\start{j} + \eps|I_{s(j)}|$ which gives an increase larger than the processing time of this job.

For a proof of claim~\ref{ptas-it:small-spt} consider a schedule~$S$ where no processing of a small job contains a release date and apply one time-stretch via start times. This increases the objective value by at most a~$1+\eps$ factor. Denote the resulting schedule as~$S'$. To achieve the demanded properties, apply the following procedure for each direction~$\dir\in\{\rb, \lb\}$. 
First, remove all small jobs from schedule~$S'$. Now consider each interval~$I_{x}, x=0, 1, \dots$. Denote by~$A_x$ the set of removed jobs from~$I_x$. If jobs have been removed in~$I_x$ there are idle intervals where jobs in direction~$\dir$ can be scheduled. Denote the subset of~$\dirbset{S}{\dir}_{\leq x}$ already scheduled in earlier intervals by~$B_{<x}$ and order the subset~$C_{x}:=\dirbset{S}{\dir}_{\leq x}\setminus B_{<x}$ of unscheduled jobs in SPT order.  Define for~$t\in I_x$ by~$p_t(A_x):=p(\{j\in A_x \mid \start{j}'\leq t\})$ the amount of processing time of jobs started before time~$t$ in~$S'$. Now let~$C_{x}(t)$ be the smallest SPT-subset of~$C_x$ such that~$p(C_{x}(t))\geq p_t(A_x)$ or~$C_{x}(t)=C_x$. Iterate from the earliest created maximal empty interval to the latest and fill each interval~$[t_1,t_2]$ in SPT order such that the jobs of~$p(C_{x}(t_2))$ start before~$t_2$.  Note that~$p(C_{x}(t_2))\leq p_{t_2}(A_x)+\frac{\eps^2}{4}|I_x|$ since we consider only small jobs. To maintain feasibility we increase the start of the following jobs from~$J\setminus C_x$, if necessary. (This decreases eventually the size of the following empty interval which is no problem). Nevertheless, the start time of no job from~$J\setminus C_x$ is increased by more than~$\frac{\eps^2}{4}|I_x|$. Hence, their completion time is increased by less than a~$1+\eps$ factor and the jobs starting after~$R_{x+1}$ are not affected. Note that no processing of the assigned small jobs~$B_x:=C_{x}(R_{x+1})$ contains~$R_{x+1}$.

Since we used in each interval an assignment via SPT order we know that at each point in time the number of already started small jobs has not been decreased. Therefore, the total completion time of small jobs overall has not been increased.

To prove claim~\ref{ptas-it:small-bounded} consider for each~$x = 0, 1, \dots$ the largest SPT-subset~$\set{J}'_x$ of~$\dirbset{S}{\dir}_x$, such that~$p(\set{J}'_x)\leq|I_x|$. By assumptions~\ref{ptas-it:small-spt} and~\ref{ptas-it:small-without-rel} we can be sure that all jobs of~$\dirbset{S}{\dir}_x\setminus\set{J}'_x$ are not scheduled within~$I_x$ and thus, we can move their release dates to~$R_{x+1}$.
\end{proof}

Once, we have a fixed order to schedule small jobs with the same release date we are able to glue them to job packs of a certain minimum size. For this purpose we apply a further time-stretch to join the processing of jobs assigned to the same pack.  This increases for each interval~$I_y$ the amount of processing per direction and each earlier interval~$I_x$ by at most the size of one job being small at time~$R_x$.
The following lemma yields that the extra space of one interval created by one time-stretch is sufficient to cover this amount for all earlier Intervals.

\begin{restatable}{lemma}{lemAPtasIntervalVolume}\label{lem:A-ptas-interval_volume}
  We have $\sum_{x<y}\eps^2 |I_x| \leq \eps|I_y|$.
\end{restatable}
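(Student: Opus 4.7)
The proof plan is essentially a direct geometric-series calculation. Recall from the construction that $|I_x| = R_{x+1} - R_x = (1+\eps)^{x+1} - (1+\eps)^x = \eps(1+\eps)^x$, so the intervals form a geometric progression with ratio $(1+\eps)$. Substituting this into the sum converts the inequality into a statement about the partial sums of a geometric series, and the key observation is that such partial sums are bounded above by (a constant times) the next term of the series.

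Concretely, I would compute
\[
\sum_{x<y} |I_x| \;=\; \eps \sum_{x=0}^{y-1} (1+\eps)^x \;=\; \eps \cdot \frac{(1+\eps)^y - 1}{(1+\eps) - 1} \;=\; (1+\eps)^y - 1 \;<\; (1+\eps)^y.
\]
Multiplying both sides by $\eps^2$ and using $|I_y| = \eps(1+\eps)^y$, one obtains
\[
\sum_{x<y} \eps^2 |I_x| \;<\; \eps^2 (1+\eps)^y \;=\; \eps \cdot |I_y|,
\]
which is the desired inequality.

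There is no real obstacle here: the identity $|I_x| = \eps(1+\eps)^x$ makes the sum telescope via the standard closed form $\sum_{x=0}^{y-1}(1+\eps)^x = \frac{(1+\eps)^y - 1}{\eps}$, and the factor $\eps$ in the denominator exactly cancels one of the two factors of $\eps$ on the left-hand side, leaving the single factor of $\eps$ we need on the right-hand side. The strict inequality is even stronger than the stated non-strict one, so no edge case (e.g.\ $y=0$, where the left-hand side is an empty sum) causes trouble.
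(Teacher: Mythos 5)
Your proposal is correct and is essentially identical to the paper's own proof: both substitute $|I_x|=\eps(1+\eps)^x$ and evaluate the geometric sum $\sum_{x<y}(1+\eps)^x=\bigl((1+\eps)^y-1\bigr)/\eps$, after which the claimed bound follows immediately. Nothing is missing.
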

\begin{proof}
  To prove the claim we again use that~$\sum_{k=0}^n z^k = \frac{1-z^{n+1}}{1-z}$:
  \begin{equation*}
    \eps^3\sum_{x<y}(1+\eps)^x = \eps^3\frac{1-(1+\eps)^y}{1-(1+\eps)}
     = \eps^2 ((1+\eps)^y-1) \leq \eps|I_y|.
  \end{equation*}
\end{proof}

\begin{restatable}{lemma}{lemPtasSmallPackages}
	\label{lem:ptas-small-packages}
  With~$\O(1+\eps)$-loss we can restrict to schedules such that for each~$x\geq 0$ and each~$\dir\in\{\rb, \lb\}$ the jobs of~$\dirbset{S}{\dir}_x$ in SPT order are joined to unsplittable job packs with size of at most~$\frac{\eps^2}{4}|I_x|$ and at least~$\frac{\eps^2}{8}|I_x|$ each. 
\end{restatable}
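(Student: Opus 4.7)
The plan is to begin with a schedule satisfying the assumptions of Lemma~\ref{lem:ptas-SPT_for_small_jobs}, and then modify it so that the small jobs released in the same interval $I_x$ and traveling in the same direction $\dir$ are combined into contiguous, unsplittable packs whose total processing times lie in $[\tfrac{\eps^2}{8}|I_x|,\tfrac{\eps^2}{4}|I_x|]$. Throughout, I would absorb an additional $(1+\eps)$ loss from one further time-stretch on start times, plus possibly an $(1+\eps)$ loss from delaying a short tail of tiny jobs by one interval.

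First I would specify the packing rule per set $\dirbset{S}{\dir}_x$. Call a small job \emph{medium} if $\proc{j}\geq\tfrac{\eps^2}{8}|I_x|$, and \emph{tiny} otherwise. Medium jobs are already in $[\tfrac{\eps^2}{8}|I_x|,\tfrac{\eps^2}{4}|I_x|]$ and form singleton packs. Walking through the remaining tiny jobs in SPT order, greedily close a pack as soon as its running sum first reaches $\tfrac{\eps^2}{8}|I_x|$; since every tiny job has size strictly less than $\tfrac{\eps^2}{8}|I_x|$, the resulting pack has total size strictly less than $\tfrac{\eps^2}{8}|I_x|+\tfrac{\eps^2}{8}|I_x|=\tfrac{\eps^2}{4}|I_x|$, as required. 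Any residual tail of tiny jobs whose total mass is below $\tfrac{\eps^2}{8}|I_x|$ is simply reassigned to release date $R_{x+1}$ and packed together with the tiny jobs of $\dirbset{S}{\dir}_{x+1}$. By the same argument as in the final part of the proof of Lemma~\ref{lem:ptas-SPT_for_small_jobs}, this increases the completion times of the affected jobs by at most a factor $(1+\eps)$ and does not cascade, since the reassigned mass in $I_{x+1}$ is itself strictly less than $\tfrac{\eps^2}{8}|I_{x+1}|$.

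Next I would verify feasibility of the resulting glued schedule after one further time-stretch on start times, which again costs only a $(1+\eps)$ factor. Collapsing a pack into a contiguous block amounts to removing any idle gaps between its consecutive small jobs, which pushes later jobs on the segment forward by at most one pack's worth of time, i.e.\ at most $\tfrac{\eps^2}{4}|I_x|$ per direction and per interval $I_x$. Summing these shifts from all earlier intervals into the time around $I_y$ gives
\[
  \sum_{x<y}2\cdot\tfrac{\eps^2}{4}|I_x| \;=\; \sum_{x<y}\tfrac{\eps^2}{2}|I_x| \;\leq\; \eps|I_y|
\]
by Lemma~\ref{lem:A-ptas-interval_volume}, so the slack of $\eps|I_y|$ created by the time-stretch in $I_y$ is enough to absorb all accumulated gluing shifts, keeping the schedule feasible.

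Finally, I would observe that the objective degrades only by the $(1+\eps)$ factors already accounted for: within each pack the tiniest jobs are scheduled first (SPT), so inside a pack no job's completion time ever increases, and outside a pack every start time increases by at most the time-stretch shift. The main obstacle is enforcing the lower bound $\tfrac{\eps^2}{8}|I_x|$ and the upper bound $\tfrac{\eps^2}{4}|I_x|$ on pack size simultaneously, given that small jobs themselves can range over a factor-of-two band of processing times; the separation into medium singletons and greedy tiny packs, together with the clean tail-reassignment to the next release interval, is what makes this work without cascading errors across intervals.
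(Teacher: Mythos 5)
Your proof is correct and follows essentially the same route as the paper's: start from the SPT schedule of Lemma~\ref{lem:ptas-SPT_for_small_jobs}, greedily merge the jobs smaller than $\frac{\eps^2}{8}|I_x|$ in SPT order into packs of total size in $[\frac{\eps^2}{8}|I_x|,\frac{\eps^2}{4}|I_x|)$, reinsert whole packs into the gaps, and absorb the at most one-pack prefix overfill per interval and direction via a time-stretch together with Lemma~\ref{lem:A-ptas-interval_volume}. The only (minor) deviation is the residual tail: the paper keeps it in $\dirbset{S}{\dir}_x$ and artificially pads the last pack up to the lower bound, whereas you defer it to $R_{x+1}$ --- just note that the argument you cite from the end of the proof of Lemma~\ref{lem:ptas-SPT_for_small_jobs} applies to jobs that were provably not scheduled in $I_x$ and hence costs nothing, while your tail jobs may have been scheduled in $I_x$, so the deferral genuinely consumes the extra $(1+\eps)$ factor you budgeted for it rather than being free.
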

\begin{proof}
Consider a schedule satisfying at~$\O(1+\eps)$-loss the properties of Lemma~\ref{lem:ptas-SPT_for_small_jobs} and apply one time-stretch via start times. We now apply the following procedure for each direction~$\dir\in\{\rb, \lb\}$ and each~$x=0, 1, \dots$. Recall that the jobs of~$\dirbset{S}{\dir}_x$ are scheduled in SPT order. Let~$\dirbset{T}{\dir}_x = \{j\in \dirbset{S}{\dir}_x \mid \proc{j}< \frac{\eps^2}{8}|I_x|\}$ be the subset of jobs being too small. Remove the jobs of~$\dirbset{T}{\dir}_x$ from the current schedule and join the jobs of~$\dirbset{T}{\dir}_x$ successively in SPT order to minimal job packs such that the processing times of each job pack sum up to at least~$\frac{\eps^2}{8}|I_x|$. (The processing time of the last pack is artificially increased if necessary.) We now reassign complete job packs to the empty intervals similarly to the procedure in the proof of Lemma~\ref{lem:ptas-SPT_for_small_jobs}. Hence, no start time of~$\dirbset{T}{\dir}_x$ has been increased and the start time of no job in~$J\setminus \dirbset{T}{\dir}_x$ has been increased by more than~$\frac{\eps^2}{4}|I_x|$.
  
In total, the start time of no job starting in interval~$I_{y+1}$ has been increased by more than~$2\cdot\sum_{x<y} \frac{\eps^2}{4}|I_x| + 2\cdot\frac{\eps^2}{4}|I_y|\leq \eps|I_y|$ due to Lemma~\ref{lem:A-ptas-interval_volume}. By Lemma~\ref{lem:ptas-time-stretch} (or Table~\ref{tab:bidir-time-shift}) we can conclude that no job has been delayed to a later interval by the rearrangement. Note that properties~\ref{ptas-it:small-without-rel} and~\ref{ptas-it:small-bounded} of Lemma~\ref{lem:ptas-SPT_for_small_jobs} still hold whereas property~\ref{ptas-it:small-spt} (SPT order) remains true only within each~$\dirbset{S}{\dir}_x$.
\end{proof}

Therefore, we can consider each job pack simply as one small job. Nevertheless, the original jobs must be used for the evaluation of the completion times.
Besides the scheduling restrictions for small jobs we can also bound the amount of large jobs released at the beginning of each interval.

\begin{restatable}{lemma}{lemPtasBoundedReleasePerInterval}
\label{lem:ptas-bounded_release_per_interval}
  With~$\O(1+\eps)$-loss we can assume for each~$x\geq 0$ and each~$\dir\in\{\rb, \lb\}$ that:
\begin{compactenum}
  \item the number of possible processing times in~$\dirbset{L}{\dir}_x$ is bounded by~$5\log_{(1+\eps)}\frac{1}{\eps}$, and
  \item the number of jobs per processing time in~$\dirbset{L}{\dir}_x$ is bounded by~$\frac{4}{\eps^2}$.
\end{compactenum}
\end{restatable}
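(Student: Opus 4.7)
The plan is to handle the two claims independently, the first by a direct counting argument and the second by an exchange argument that shifts release dates forward.

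For the first claim, I would just combine the two-sided bound on the processing times of large jobs. By definition a large job $j \in \dirbset{L}{\dir}_x$ satisfies $\proc{j} > \tfrac{\eps^2}{4}|I_x| = \tfrac{\eps^3}{4}R_x$, while Lemma~\ref{lem:ptas-geometric_rounding} yields $\rel{j}\geq\eps(\proc{j}+\transit{1})$ and hence $\proc{j}\leq R_x/\eps$. Since all processing times are powers of $(1+\eps)$, the number of admissible values in the multiplicative range $(\tfrac{\eps^3}{4}R_x,\,R_x/\eps]$ is at most $\log_{(1+\eps)}(4/\eps^4)=4\log_{(1+\eps)}(1/\eps)+\log_{(1+\eps)}4$, which is bounded by $5\log_{(1+\eps)}(1/\eps)$ for $\eps\leq 1/4$ (the regime relevant for a PTAS).

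For the second claim, my approach is a congestion argument within a single direction. Fix $x$, $\dir$ and a processing time $p>\tfrac{\eps^3}{4}R_x$, and consider the jobs in $\dirbset{L}{\dir}_x$ of processing time $p$ in a given schedule. Since they share a direction, their processing intervals on the segment are pairwise disjoint, so sorting them by start time $s_1\leq s_2\leq\cdots\leq s_k$ forces $s_{i+1}\geq s_i+p$, and together with $s_1\geq R_x$ this gives $s_i\geq R_x+(i-1)p$. Consequently, every job with index $i>4/\eps^2$ satisfies $s_i>R_x+\eps R_x = R_{x+1}$. I would then reassign the release date of every such ``excess'' job to $R_{x+1}$ and iterate left-to-right along the intervals. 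Because each relocated job already started at or after its new release date in the current schedule, the modified instance admits the very same schedule and therefore incurs zero additional loss.

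The main subtlety I expect is the interaction of the exchange in claim (2) with claim (1) across the iteration: a job pushed from $\dirbset{L}{\dir}_x$ to $\dirbset{J}{\dir}_{x+1}$ could either stay large, in which case it is handled in the next iteration, or become small in $I_{x+1}$ (since $|I_{x+1}|=(1+\eps)|I_x|$ makes the largeness threshold larger), in which case it drops into $\dirbset{S}{\dir}_{x+1}$ and is absorbed by Lemma~\ref{lem:ptas-SPT_for_small_jobs}. No job is ever pushed backwards, so the iteration terminates after finitely many steps, and on the resulting instance both claims hold simultaneously: claim (1) is a pure counting statement on the already-rounded processing times and release interval, and claim (2) is maintained by construction.
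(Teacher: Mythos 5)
Your proposal is correct and follows essentially the same route as the paper: claim (1) by counting powers of $(1+\eps)$ in the multiplicative range $\bigl(\tfrac{\eps^3}{4}R_x,\,R_x/\eps\bigr]$ determined by the largeness threshold and Lemma~\ref{lem:ptas-geometric_rounding}, and claim (2) by observing that pairwise-disjoint processing intervals of length more than $\tfrac{\eps^2}{4}|I_x|$ allow at most $4/\eps^2$ large jobs per direction to start within $I_x$, so the release dates of the remaining jobs can be pushed to $R_{x+1}$ at no loss. Your additional remarks on iterating the release-date shift and on excess jobs possibly becoming small in $I_{x+1}$ are a slightly more explicit version of what the paper leaves implicit.
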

\begin{proof}
  Consider some scheduling instance, some~$\dir\in\{\rb, \lb\}$ and some~$x\geq 0$.   The processing time of the jobs in~$\dirbset{L}{\dir}_x$ are, by definition, at least~$\frac{\eps^3}{4}(1+\eps)^x$. On the other hand, by Lemma~\ref{lem:ptas-geometric_rounding}, the processing times are at most~$\frac{1}{\eps}(1+\eps)^x$. Let~$x_j$ be such that~$\proc{j}=(1+\eps)^{x_j}$. We get
  \[
  \begin{array}{rcl}
  \frac{\eps^3}{4} \leq & \frac{(1+\eps)^{x_j}}{(1+\eps)^{x}} & \leq \frac{1}{\eps}\\
  \Longrightarrow\quad \log_{(1+\eps)}\frac{\eps^3}{4} \leq & x_j-x & \leq \log_{(1+\eps)}\frac{1}{\eps}\\
  \end{array}
  \]
  The difference of these bounds is~$4\log_{(1+\eps)}\frac{1}{\eps} + \log_{(1+\eps)}4$ which gives a constant number of possible integer values for~$x_j$ and, hence, a constant number of possible processing times for each job in~$\dirbset{L}{\dir}_x$. Finally, since each large job in~$I_x$ has a processing time of at least~$\frac{\eps^2}{4} |I_x|$, we can schedule at most~$4/\eps^2$ jobs per direction within~$I_x$, and the remaining jobs need to start after~$R_{x+1}$.
\end{proof}

\begin{restatable}{lemma}{lemPTASSafetyNet}
\label{lem:ptas-safety_net}
  With~$\O(1+\eps)$-loss we can assume, that each job is finished within a constant number of intervals after its release.
\end{restatable}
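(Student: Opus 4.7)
The plan is to show that with $\O(1+\eps)$-loss, no job completes more than $K$ intervals after its release, for a constant $K=K(\eps)$. I would call a job $j$ with $r_j = R_x$ \emph{overdue} if its completion falls in an interval $I_y$ with $y-x \geq K$. The core intuition is that an overdue job already satisfies $\compl{j} \geq (1+\eps)^K r_j$, so its contribution to the total completion time exceeds by an exponential factor what would be incurred by scheduling it near its release date; hence rescheduling overdue jobs to slots close to release can only reduce the objective.

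Concretely, I would first apply one time-stretch via start times (at an $\O(1+\eps)$-loss, by Lemma~\ref{lem:ptas-time-stretch}), which opens up extra free space of size $\eps|I_x|$ in every interval $I_x$. I would then remove all overdue jobs from the schedule and reinsert them greedily in order of release date, placing each overdue job $j$ in the earliest interval after $I_{s(j)}$ with sufficient free room in the matching direction. Since a job fits into $\cs$ consecutive intervals once started (Lemma~\ref{lem:ptas-bounded_crossing}), it suffices to reinsert $j$ no later than interval $I_{s(j) + K - \cs}$ to ensure it is no longer overdue.

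The feasibility check relies on the fact that the total processing of jobs released per interval per direction is bounded by a constant $b$ (by Lemmas~\ref{lem:ptas-bounded_release_per_interval} and~\ref{lem:ptas-small-packages}), with each individual processing time at most $|I_x|/\eps$. Aggregating the $\eps$-fraction of free space over $K-\cs$ consecutive intervals and leveraging the geometric growth of interval lengths suffices to absorb all overdue jobs originating from $I_x$, provided $K$ is chosen large enough in terms of $b$, $\cs$, and $1/\eps$.

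The loss bound is then immediate: non-overdue jobs only incur the $(1+\eps)$-factor loss from the time stretch, and overdue jobs have new completion times bounded by $(1+\eps)^{K} r_j$, which is at most their original completion times. The main obstacle will be handling the feasibility of the greedy reinsertion: I must argue that the $\eps$-fraction of free space per interval is not oversubscribed and that a reinserted job does not itself become overdue relative to its new start. This is controlled by the geometric growth of interval sizes together with the bounded release rate per interval, but care must be taken to avoid a cascading effect where pushing a reinserted job displaces previously placed jobs. A clean way to rule this out is to show that the total processing of overdue jobs released in any window of $K$ consecutive intervals fits into the aggregated $\eps|I_x|$ gaps of that window, which reduces to a direct geometric estimate once $b$ is a constant.
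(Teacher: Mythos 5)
Your overall strategy is the paper's: apply one time-stretch to create $\eps|I_x|$ of idle room per interval, invoke the bounded-release lemmas to see that the total work released at $R_x$ is only a constant multiple of $|I_x|$, and pack the late jobs into idle room a constant number of intervals downstream, where the geometric growth of interval lengths makes that room large enough. The accounting of the loss (late jobs only move earlier, everything else pays the stretch factor) is also the same.

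There is, however, one substantive omission: your volume estimate counts only processing times, but this is a \emph{bidirectional} instance with $G_1=\emptyset$ allowed, so every switch between a leftbound and a rightbound job inside an idle gap costs an additional separation of $\transit{1}$, and $\transit{1}$ can be as large as $R_x/\eps = |I_x|/\eps^2$ --- much larger than the $\eps|I_x|$ of idle room a single interval provides. If the greedy reinsertion interleaves directions across many small gaps, this overhead is not controlled by a bound on the number or total size of the jobs. The paper's proof avoids this by relocating, for each release date $R_x$, \emph{all} leftover jobs of one direction contiguously and then all of the other direction, so that only two transit times are paid, and by explicitly including the term $\transit{1}$ per direction in the displayed bound before choosing the constant $\cs'$. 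Your argument is repairable in exactly this way (group the reinserted jobs by direction and enlarge $K$ to absorb $2\transit{1}\leq 2|I_x|/\eps^2$), and the paper's closing remark that the complete-bipartite case needs only the processing-time accounting shows this is precisely the point where the two compatibility regimes diverge; as written, though, your packing step does not go through for incompatible opposing jobs.
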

\begin{proof}
  Consider the set of jobs~$\jobs_x$ released at time~$R_x$. By Lemma~\ref{lem:ptas-geometric_rounding} the running time of each such job is at most~$R_x/\eps$. 
  Therefore, applying Lemmas~\ref{lem:ptas-SPT_for_small_jobs} and~\ref{lem:ptas-bounded_release_per_interval} we can bound the time needed to first schedule all jobs of one direction and afterward all jobs of the other direction:
  \begin{align*} 
  \sum_{\dir\in\{\rb, \lb\}}\left[ p(\dirbset{S}{\dir}_x) + p(\dirbset{L}{\dir}_x) + 
  \transit{1}\right] 
  & \leq 2\bigg[\eps(1+\eps)^x \\
  & \phantom{=}  \quad + \frac{4}{\eps^2}\cdot\frac{1}{\eps}(1+\eps)^x 
  \cdot5\log_{(1+\eps)}\frac{1}{\eps }\bigg] \\
  & =  \eps^2(1+\eps)^x\cdot 2\left[\frac{1}{\eps} + 
  \frac{20}{\eps^5}\log_{(1+\eps)}\frac{1}{\eps}
  \right]\\
  & \leq \eps^2(1+\eps)^x(1+\eps)^{\cs'-1} = \eps|I_{x+\cs'-1}|,
  \end{align*}
  where~$\cs'$ is the smallest possible integer such that~$2\left[\frac{1}{\eps} + \frac{20}{\eps^5}\log_{(1+\eps)}\frac{1}{\eps}\right] \leq (1+\eps)^{\cs'-1}$. Note, that~$\cs'$ is constant. 

  Applying one time-stretch on the start times creates idle time for each interval~$I_x$ somewhere after~$\cs'$ intervals that is sufficient to host all unfinished jobs of~$J_x$, cf.~Lemma~\ref{lem:ptas-time-stretch} and Table~\ref{tab:bidir-time-shift}.
  If no job was running at time~$R_{x+\cs'}$ before the time-stretch this created idle time is now part of interval~$I_{x+\cs'}$.
  Otherwise let~$j$ be the latest of these jobs with start time~$S_j\in I_{s(j)}$ and completion time~$\compl{j}\in I_{c(j)}$ before the time-stretch. Note that~$s(j)\leq x+\cs'-1$ which induces~$c(j)\leq x + \cs' + \cs -1$ due to Lemma~\ref{lem:ptas-bounded_crossing}. By Lemma~\ref{lem:ptas-time-stretch} we can be sure that after the time-stretch there is idle time of~$\sum_{x=s(j)}^{c(j)-1}\eps|I_{k}|$ before 1.\ the start of the next job after~$j$ and 2.\ the end of interval~$I_{x_c+1}$. 
  By definition of~$\cs'$, this time is sufficient to first schedule all jobs of~$J_x$ in heading of~$j$ and then all remaining.
  This way, all jobs of~$J_x$ are scheduled before the end of interval~$I_{x+\cs'+\cs}$.

  Note that this argument assumes that there are no compatibilities. An analog reasoning concerning only the processing times works if all opposed jobs are compatible.
\end{proof}

We can now limit the interface of our dynamic program by showing Lemma~\ref{lem:constant_interface} of Section~\ref{sec:PTAS}.

\lemConstantInterface*

\begin{proof}
  By Lemma~\ref{lem:ptas-small-packages} we may assume that small jobs in~$\dirbset{S}{\dir}_{x}$ have processing time at least~$\eps^2|I_x|/8$. By Lemma~\ref{lem:ptas-SPT_for_small_jobs}, the total processing time of these jobs is at most~$|I_x|$, and hence the number of jobs in~$\dirbset{S}{\dir}_{x}$ is bounded by a constant. The same is true for large jobs, by Lemma~\ref{lem:ptas-bounded_release_per_interval}. Finally, together with Lemma~\ref{lem:ptas-safety_net}, this implies that the number of jobs running during each interval is bounded by a constant.

  For the second property, we apply one time-stretch on the completion times. Consider now the latest job~$j$ of each direction that starts within block~$B_{t}$ and is completed in interval~$I_{c(j)}$ of the following block. By Lemma~\ref{lem:ptas-time-stretch} (and Table~\ref{tab:bidir-time-shift}) we know that there is idle time of at least~$\eps|I_{c(j)-2}|$ before the start of job~$j$ (or before the start of the earliest job aligned with~$j$ with completion time in~$I_{c(j)}$ and start time in~$B_{t}$. Hence, we can decrease the start time of these jobs such that the values~$\compl{j}$ and~$\start{j}+\proc{j}$ fall below the next~$\frac{1}{\eps^2}$ fraction of~$I_{c(j)}$, i.e., by an amount of at most~$\eps^2|I_{c(j)}| \leq \eps|I_{c(j)-2}|$. Hence, the first job starting in~$B_{t+1}$ (of each direction in case of compatibilities) can be scheduled at an~$\frac{1}{\eps^2}$ fraction of~$I_{c(j)}$ without any further loss. Thus, we only need to consider~$\frac{\cs}{\eps^2}$ possible frontier values per direction, or a total of~$\left(\frac{\cs}{\eps^2}\right)^2$ possible frontiers.
%
%
\end{proof}

\ifthenelse{\boolean{ptas-more}}{
\subsection{Multiple Segments}

We now consider the
bidirectional scheduling problem with a constant number~$m$ of segments and give detailed proofs for the required extensions to the single segment case where the argumentation is more complex.
We need to generalize or reformulate all of the above lemmas. 
We denote by~$\dirbset{S}{\dir}_{x,s,t}$ the set of all small jobs with respective source segment~$s$ and target segment~$t$ and direction~$\dir$ that are released at time~$R_x$ and we denote the corresponding large jobs by~$\dirbset{L}{\dir}_{x,s,t}$.

The statement of Lemma~\ref{lem:ptas-geometric_rounding} must be extended as in Lemma~\ref{lem:ptas-geometric_rounding-multiple}. Nevertheless, the proof works almost similar.

\begin{lemma}
\label{lem:ptas-geometric_rounding-multiple}
  With~$\O(1+\eps)$-loss we can assume that for each~$j\in\jobs$ and each~$i\in \{s_j, \dots, t_j\}$:
  \begin{compactitem}
    \item $\start{ij} \geq \eps(\proc{j}+\transit{i})$ defining a segment-wise release date~$\rel{ij}\geq r_j$,
    \item $\rel{ij}\geq 1$, and
    \item $\proc{j}, \rel{ij}\in\{(1+\eps)^x\mid x\in\N\}\cup\{0\}$.
  \end{compactitem}
We furthermore assume for each job that~$\rel{j}=\rel{s_jj}$ and that its segment-wise release dates do not decrease in heading of~$j$.
\end{lemma}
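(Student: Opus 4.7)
The plan is to adapt Lemma~\ref{lem:ptas-geometric_rounding} per segment while preserving feasibility across segments, that is, both the conflict conditions on each segment and the precedence constraint $\compl{ij}\leq\start{i+1,j}$.

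As a preliminary step, multiplying all start times of an arbitrary schedule by $(1+\eps)$ leaves room to round each nonzero $\proc{j}$ up to the next power of $(1+\eps)$ at a cost of a $(1+\eps)$-factor in the objective. To then enforce $\start{ij}\geq\eps(\proc{j}+\transit{i})$ at every segment simultaneously, I would apply the segment-wise shift $\start{ij}':=(1+\eps)\compl{ij}-(\proc{j}+\transit{i})=(1+\eps)\start{ij}+\eps(\proc{j}+\transit{i})$. This gives $\compl{ij}'=(1+\eps)\compl{ij}$ so the objective again grows by only $(1+\eps)$, and original release dates remain respected since $\start{s_jj}'\geq(1+\eps)r_j\geq r_j$. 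Feasibility needs to be verified in three parts: precedence from $\compl{ij}'=(1+\eps)\compl{ij}\leq(1+\eps)\start{i+1,j}\leq\start{i+1,j}'$; the same-direction condition from the identity $\start{ij'}'-(\start{ij}'+\proc{j})=(1+\eps)(\start{ij'}-\start{ij}-\proc{j})+\eps\proc{j'}\geq 0$; and the opposite-direction condition by scaling $\compl{ij}\leq\start{ij'}$ by $(1+\eps)$.

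Scaling the whole instance next by a suitable power of $(1+\eps)$ brings the smallest nonzero release date above $1$ (jobs with $r_j=\proc{j}=\transit{s_j}=0$ are trivial and can be ignored), after tightening each $r_j$ to $\max\{r_j,\eps(\proc{j}+\transit{s_j})\}$. One further time-stretch of the start times by $(1+\eps)$ then permits me to round each $r_j$ up to a power of $(1+\eps)$. I would set $\rel{s_jj}:=r_j$ and inductively along $j$'s route let $\rel{ij}$ be the smallest power of $(1+\eps)$ at least $\rel{kj}+\proc{j}+\transit{k}$, where $k$ is the segment preceding $i$ on $j$'s route. This yields a non-decreasing sequence of $\rel{ij}$'s, all powers of $(1+\eps)$, each at least $r_j$ and at least $\eps(\proc{j}+\transit{i})$. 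The main technical obstacle is verifying that the segment-wise shift preserves all four feasibility conditions simultaneously, particularly the same-direction computation across jobs with different processing times; the remaining adaptations are routine once that is established.
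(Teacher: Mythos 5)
Your core construction is the same as the paper's: round processing times after one multiplicative stretch, then apply the segment-wise shift $\start{ij}'=(1+\eps)\compl{ij}-(\proc{j}+\transit{i})=(1+\eps)\start{ij}+\eps(\proc{j}+\transit{i})$, and your explicit verification of the three feasibility conditions (precedence, same-direction, opposite-direction) is correct and in fact more careful than the paper's one-line justification. The scaling and the rounding of the global release dates also match.

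The gap is in your definition of the segment-wise release dates. First, setting $\rel{ij}$ inductively as the smallest power of $(1+\eps)$ at least $\rel{kj}+\proc{j}+\transit{k}$ does \emph{not} give $\rel{ij}\geq\eps(\proc{j}+\transit{i})$ as you claim: the transit time of segment $i$ itself never enters your recursion, so if $\transit{i}$ is much larger than $\transit{k}$ and $r_j$ (e.g. $\transit{k}=\proc{j}=1$, $\transit{i}$ huge), the bound fails. This property is needed downstream (it is what bounds the number of intervals a part can span on segment $i$), so it cannot be dropped. Second, rounding $\rel{kj}+\proc{j}+\transit{k}$ up to a power of $(1+\eps)$ can push $\rel{ij}$ strictly above $\start{ij}$ in the shifted schedule: the schedule only guarantees $\start{ij}\geq\compl{kj}\geq\rel{kj}+\proc{j}+\transit{k}$, with equality possible, and then the rounded release date is no longer respected. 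Repairing this requires extra time-stretches whose errors compound along the route, which you do not account for. The paper sidesteps both problems by defining $\rel{ij}:=\max\{\rel{j},\rel{hj},\eps(\proc{j}+\transit{i})\}$, where every term in the maximum is already a valid lower bound on the shifted start time $\start{ij}'$ (the last one directly from the shift identity, the others by feasibility and induction), and only then rounding up at the cost of one further global stretch. Replacing your inductive definition by this maximum closes the gap.
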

\begin{proof}
  Multiplying all segment-wise start times of a schedule by~$(1+\eps)$ increases the distance between any two distinct start times by a factor of~$(1+\eps)$. Therefore, we get a feasible schedule even when rounded up all nonzero processing times 
  to the next power of~$(1+\eps)$. This increases the total completion time less than a factor of~$(1+\eps)$.

  Now, we define for each job~$j$ and each segment~$i$ new completion times~$\compl{ij}'=(1+\eps)\compl{ij}$ and obtain increased start times~$\start{ij}' = (1+\eps)\compl{ij} - (\proc{j}+\transit{i}) \geq (1+\eps)\start{ij} + \eps(\proc{j} + \transit{i})$
  Since each space between two pairwise distinct completion times is increased by a factor of~$1+\eps$ the schedule remains feasible.  In particular, it holds that~$\start{ij}'\geq \compl{hj}'$ if~$j$ has to pass segment~$h$ before segment~$j$.
  Hence, we define a release date~$\rel{ij}:=\max\{\rel{j},\rel{hj},\eps(\proc{j}+\transit{i})\}$ for segment~$i$.

  Without any loss, we can scale all times by some power of~$(1+\eps)$ if necessary such that the earliest release date is at least one (since job parts with $\rel{j}=\proc{j}=\transit{s_j}=0$ can be omitted). At a further loss of~$(1+\eps)$ we finally can round the release dates again to the next power of~$(1+\eps)$.  
\end{proof}

By this, an identical proof as for Lemma~\ref{lem:ptas-bounded_crossing} yields the following.

\begin{lemma}
  \label{lem:ptas-bounded_crossing-multiple}
    Each part runs for at most~$\cs:=\lceil\log_{1+\eps}\frac{1+\eps}{\eps} \rceil$ intervals, i.e.\ a part starting in interval~$I_x$ is completed before the end of~$I_{x+\cs}$.
\end{lemma}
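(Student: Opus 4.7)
The plan is to mirror the single-segment proof of Lemma~\ref{lem:ptas-bounded_crossing}, substituting the global release date $\rel{j}$ by the segment-wise release date $\rel{ij}$ provided by Lemma~\ref{lem:ptas-geometric_rounding-multiple}. Concretely, I would fix a part corresponding to a job~$j$ and a segment~$i\in\{s_j,\dots,t_j\}$ and assume that its start time falls in interval~$I_x$, i.e.\ $\start{ij}\geq R_x$. Since by Lemma~\ref{lem:ptas-geometric_rounding-multiple} we have $\start{ij}\geq \rel{ij}$ and the construction of $\rel{ij}$ ensures $\rel{ij}\geq\eps(\proc{j}+\transit{i})$, I would deduce
\[
|I_x| \;=\; \eps R_x \;\geq\; \eps\,\rel{ij} \;\geq\; \eps^{2}(\proc{j}+\transit{i}),
\]
so that the running time $\proc{j}+\transit{i}$ of the part is at most $|I_x|/\eps^{2}$.

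Second, I would reuse verbatim the geometric-sum computation of Lemma~\ref{lem:ptas-bounded_crossing} to bound the combined length of the next $\cs+1$ intervals:
\[
\sum_{k=0}^{\cs} |I_{x+k}| \;=\; |I_x|\sum_{k=0}^{\cs}(1+\eps)^{k} \;=\; |I_x|\,\frac{(1+\eps)^{\cs+1}-1}{\eps} \;\geq\; \frac{|I_x|}{\eps^{2}},
\]
using the definition $\cs=\lceil\log_{1+\eps}\frac{1+\eps}{\eps}\rceil$, which is precisely the choice that yields $(1+\eps)^{\cs+1}\geq\tfrac{1+\eps}{\eps}\cdot(1+\eps)=\tfrac{(1+\eps)^{2}}{\eps}$ and hence the last inequality above. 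Combining the two bounds, the part finishes no later than the end of~$I_{x+\cs}$.

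There is essentially no new obstacle compared with the single-segment case: the only subtlety is to make sure we invoke the correct release-date inequality, which is why Lemma~\ref{lem:ptas-geometric_rounding-multiple} was phrased to give $\rel{ij}\geq\eps(\proc{j}+\transit{i})$ on every segment the job traverses, not only on $s_j$. Once that is in place, the argument is purely a restatement of the single-segment proof with $\transit{1}$ replaced by $\transit{i}$ and $\rel{j}$ replaced by $\rel{ij}$, and the constant $\cs$ is unchanged because it depends only on $\eps$.
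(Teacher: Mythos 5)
Your proposal matches the paper exactly: the paper proves this lemma by remarking that ``an identical proof as for Lemma~\ref{lem:ptas-bounded_crossing} yields the following,'' relying on the segment-wise release dates $\rel{ij}\geq\eps(\proc{j}+\transit{i})$ from Lemma~\ref{lem:ptas-geometric_rounding-multiple}, which is precisely the substitution you carry out. The geometric-sum bound and the constant $\cs$ are unchanged, as you note, so your argument is correct and takes essentially the same route.
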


The proof of Lemmas~\ref{lem:ptas-SPT_for_small_jobs} becomes significantly more involved.

\begin{lemma}\label{lem:ptas-SPT_for_small_jobs_more_segments}
  With~$\O(1+\eps)$-loss we can restrict to schedules such that for each direction~$\dir\in\{\rb,\lb\}$, each source and target pair~$s,t\in\{1, \dots, m\}$, and each~$x\geq 0$:
  \begin{compactitem} 
    \item the jobs contained in~$\dirbset{S}{\dir}_{x,s,t}$ are scheduled on each segment~$i\in\{s, \dots, t\}$ in SPT order, i.e.,~$\start{i,j_1}\leq\start{i,j_2}$ for any pair of jobs~$j_1, j_2\in\dirbset{S}{\dir}_{x,s,t}$ with~$\proc{j_1}<\proc{j_2}$, and 
    \item the jobs of~$\dirbset{S}{\dir}_{x,s,t}$ in SPT order are joined in the complete schedule to unsplittable packages with size of at most~$\eps^2|I_x|$ for all packages and at least~$\eps^2|I_x|/2$ for all but the last packages.
  \end{compactitem}
\end{lemma}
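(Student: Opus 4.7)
The plan is to extend the proofs of Lemmas~\ref{lem:ptas-SPT_for_small_jobs} and~\ref{lem:ptas-small-packages} to the multi-segment setting by applying them in parallel for every tuple $(\dir,s,t,x)$ with $\dir\in\{\rb,\lb\}$, $s,t\in\{1,\dots,m\}$ and $x\ge 0$. The key structural observation is that all jobs in $\dirbset{S}{\dir}_{x,s,t}$ share direction, route, and release date, so they form a convoy flowing from segment $s$ to $t$: any reordering or repartitioning of the processing inside the convoy is feasible on every segment at once, provided the total processing time used on each segment stays unchanged and consecutive convoy jobs keep the same relative order across segments (which automatically preserves the pipeline constraint $\start{i+1,j}\ge \compl{i,j}$). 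I start from a schedule satisfying Lemmas~\ref{lem:ptas-geometric_rounding-multiple} and~\ref{lem:ptas-bounded_crossing-multiple}, and apply two time-stretches via start times. By Lemma~\ref{lem:ptas-time-stretch} this creates a cumulative idle budget of $2\eps|I_x|$ between any two jobs scheduled in consecutive intervals, at a multiplicative cost of $(1+\eps)^2$ in the objective.

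For the SPT step, I fix a tuple $(\dir,s,t,x)$, remove the corresponding convoy from the schedule, and reinsert its jobs in order of increasing processing time by a first-fit policy on segment $s$; segments $s+1,\dots,t$ inherit the same order rigidly. Because the convoy's total processing time on each segment is unchanged, the freed slots exactly accommodate the reinserted jobs, so no conflict with jobs outside the convoy (including opposite-direction jobs) is created. A swap exchange argument identical to that in the proof of Lemma~\ref{lem:ptas-SPT_for_small_jobs} shows that the total completion time of the convoy does not increase while jobs outside the convoy are untouched. Iterating over all tuples $(\dir,s,t,x)$ yields the first claim.

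For the packaging step, I scan the SPT sequence of each $\dirbset{S}{\dir}_{x,s,t}$ and greedily group consecutive jobs into packages whose total processing time lies in $[\eps^2|I_x|/2,\eps^2|I_x|]$. Since each small job has processing time at most $\eps^2|I_x|/4$, adding one more job to a partial package that has just exceeded $\eps^2|I_x|/4$ never pushes it past $\eps^2|I_x|$; if necessary the final package is padded with artificial processing. I then glue the jobs inside each package into a single unsplittable block simultaneously on every segment, which is feasible by the convoy observation. The extra displacement on any one segment is at most $\eps^2|I_x|$ per package, and summing over all earlier intervals via Lemma~\ref{lem:A-ptas-interval_volume} bounds the total displacement felt by any later job by $\eps|I_x|$, which is absorbed by the slack from the second time-stretch.

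The main obstacle I expect is making the reinsertion and the glueing consistent across all $m$ segments simultaneously: changes on segment $s$ propagate through the pipeline and could a priori clash with opposite-direction jobs on some later segment. The convoy viewpoint reduces this to a bookkeeping exercise, because convoys of different tuples and opposing-direction jobs are decoupled from a convoy's internal reordering through its invariant total-per-segment footprint, and the only accumulated displacement, bounded by Lemma~\ref{lem:A-ptas-interval_volume}, is paid for by the two time-stretches.
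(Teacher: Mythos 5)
Your central ``convoy observation''---that any reordering or repartitioning inside $\dirbset{S}{\dir}_{x,s,t}$ is automatically feasible across all segments because the total per-segment footprint is unchanged and the pipeline constraint $\start{i+1,j}\ge\compl{ij}$ is ``automatically preserved''---is precisely the point where the multi-segment case breaks, and the paper's proof is devoted to repairing it. The problem is sharpest in the packaging step: a convoy of very small jobs is \emph{pipelined}, i.e.\ its first job may begin processing on segment $i+1$ while later jobs of the convoy are still being processed on segment $i$; a single glued package of the same total size cannot do this, since it must complete entirely on segment $i$ before starting on segment $i+1$. Hence the merged block arrives strictly later at every downstream segment than the original convoy did, its downstream footprint is \emph{not} the original set of freed slots, and the claimed non-interference with jobs outside the convoy fails. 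The paper handles this by temporarily relaxing feasibility (allowing a small job to start up to $\proc{j}$ early on the next segment), and then paying an additional $2m^3$ time-stretches at the end to restore the pipeline constraints, because the resulting error propagates from segment to segment. Your proposal has no mechanism for this.

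A second, related gap is in the SPT refilling step. Even granting that the gaps on each segment have the right total volume, a gap on segment $i_2$ in interval $I_{x_2}$ can only be filled by convoy jobs that have already completed on the preceding segment early enough; after reordering, the volume that has arrived in time may fall short (the paper's second ``difficulty'': gaps within $I_{x}$ shrink slightly during the rearrangement and only the last gap recovers the lost volume). The paper needs the explicit bookkeeping with the quantities $p(i_1,x_1,i_2,x_2)$, the deadlines $u(i_1,x_1,i_2,x_2)$, and the ``overage'', together with an induction over segments, to certify that enough processing volume is always available downstream. Your first-fit reinsertion on segment $s$ with segments $s+1,\dots,t$ ``inheriting the order rigidly'' skips this entirely. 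Finally, two time-stretches are not sufficient: the displacement accumulates over all source--target pairs and all preceding segments, which is why the paper uses $m^2$ time-stretches for the rearrangement alone. The overall strategy (reduce to the single-segment lemmas tuple by tuple) is the right starting point, but the cross-segment feasibility argument is missing, and it is the substance of the proof.
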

\begin{proof}
  The proof works in principle as the proof of Lemma~\ref{lem:ptas-SPT_for_small_jobs}. During the rearrangement procedure we have to ensure that each interval on the target segment is filled with at least the same volume of small jobs as before, as long as jobs are unscheduled. For this, the jobs in the demanded order must have arrived at the respective segment in time. To ensure this property we have to deal with the following two difficulties. A convoy of very small jobs can be replaced by one (larger) small job. This job can only continue its processing on the next segment after its completion on the previous segment, while the first very small job could already start on the next segment before the last very small job is completed on the previous segment, cf.\ Figure~\ref{fig:rearrangement}. The other difficulty arises from the fact that some jobs scheduled within interval~$I_x$ arrive during  interval~$I_{x'}$ at the next segment, and the remaining jobs arrive one interval later. Since all but the last gaps within interval~$I_x$ have been decreased a bit within the rearrangement and only the last gap covers the lost volume there might be not enough volume available for the next segment in~$I_{x'}$.

  \begin{figure}
    \begin{center}
      \subfigure[A convoy of very small jobs.]{
\begin{tikzpicture}[xscale=.65, yscale=-.5]
  
  \edef\pxstart{0}
  \edef\pxend{8}
  \edef\pystart{0}
  \edef\pyend{10}

  \edef\ni{0}
  \edef\li{2}
  \edef\nii{2}
  
  \edef\niip{3}
  \edef\lii{6}
  \edef\niii{9}

  \draw[thick] (\ni,\pyend) -- (\ni,\pystart) -- node[font=\small,anchor=south]{$i_1$} (\nii,\pystart) -- (\nii,\pyend);

  \draw[thick] (\niip,\pyend) -- (\niip,\pystart) -- node[font=\small,anchor=south]{$i_1$} (\niii,\pystart) -- (\niii,\pyend);



  \foreach \length/\starti/\startii in {
      .2/.3/2.5,
      .1/.5/2.7,
      .2/.6/2.8,
      .15/.8/3.}{
    \updstripjob{red}{\nii}{\starti}{\length}{\li}{}{\li}
    \updstripjob{red}{\niii}{\startii}{\length}{\lii}{}{\lii}
  }
\end{tikzpicture}
}
\hspace*{1cm}
\subfigure[Replacement by a larger small job or a package.]{
\begin{tikzpicture}[xscale=.65, yscale=-.5]
  
  \edef\pxstart{0}
  \edef\pxend{8}
  \edef\pystart{0}
  \edef\pyend{10}

  \edef\ni{0}
  \edef\li{2}
  \edef\nii{2}
  
  \edef\niip{3}
  \edef\lii{6}
  \edef\niii{9}

  \draw[thick] (\ni,\pyend) -- (\ni,\pystart) -- node[font=\small,anchor=south]{$i_1$} (\nii,\pystart) -- (\nii,\pyend);

  \draw[thick] (\niip,\pyend) -- (\niip,\pystart) -- node[font=\small,anchor=south]{$i_1$} (\niii,\pystart) -- (\niii,\pyend);



  \foreach \length/\starti/\startii in {
      .65/.3/2.95}{
    \updstripjob{red}{\nii}{\starti}{\length}{\li}{}{\li}
    \updstripjob{red}{\niii}{\startii}{\length}{\lii}{}{\lii}
  }
  \end{tikzpicture}
}
      \caption{Illustration of the start times postponement when rearranging small jobs on multiple segments.}
      \label{fig:rearrangement}
    \end{center}
  \end{figure}

  To deal with the second difficulty, we allow for the moment a small job to start already~$p_j$ time units earlier than the completion on the previous segment. 
  We employ the following procedure for each direction~$\dir\in\{\rb, \lb\}$. First, we apply~$m^2$ time-stretches. Now, consider~$\dirbset{S}{\dir}_{x,s,t}$ for each source target pair~$s,t=1, \dots, m$  compatible with~$\dir$ and each~$x\geq 0$. If~$s=t$, we can simply apply the same procedure as for Lemma~$\ref{lem:ptas-SPT_for_small_jobs}$. Otherwise, proceed as follows (cf.\ Figure~\ref{fig:early-enough}). Define~$p(i,\tilde x)$ to be the amount of processing time from~$\dirbset{S}{\dir}_{x,s,t}$ scheduled within~$I_{\tilde x}$ on segment~$i$. For each reasonable combination of~$i_1$ and a succeeding~$i_2$ and~$x_1\leq x_2$ define~$p(i_1, x_1, i_2, x_2)$ to be the amount of processing time of jobs in~$\dirbset{S}{\dir}_{x,s,t}$ scheduled on segment~$i_1$ in interval~$I_{x_1}$ and on segment~$i_2$ on interval~$I_{x_2}$. On the other hand, let~$u(i_1, x_1, i_2, x_2)$ be the latest point for the end of processing of a small job within interval~$I_{x_1}$ on segment~$i_1$, such that it still can be completely processed within interval~$I_{x_2}$ on segment~$i_2$ if possible. Since the interval sizes are increasing with time, there is at most one interval~$I_{x_2}$ on segment~$i_2$ that yields an upper bound below~$R_{x_1+1}$.

  \begin{figure}
    \begin{center}
      \begin{tikzpicture}[xscale=.65, yscale=-.5]
  
  \edef\pxstart{-.4}
  \edef\pxend{11.4}
  \edef\pystart{-.4}
  \edef\pyend{17.4}

  \edef\ni{0}
  \edef\li{2}
  \edef\nii{2}
  \edef\lii{6}
  \edef\niii{8}
  \edef\liii{3}
  \edef\niv{11}

  \edef\rxi{0}
  \edef\rxii{5}
  \edef\rxiii{10.5}
  \edef\rxiv{17}

  \coordinate (start-i-i) at (\ni,\rxi);
  \path (start-i-i) ++(45:1) coordinate (dir-i-i);
   \path (start-i-i) -- (intersection of {start-i-i}--{dir-i-i} and \niv,0--\niv,1) coordinate (trans-i-i);
  \coordinate (start-ii-i) at (\ni,\rxii);
  \path (start-ii-i) ++(45:1) coordinate (dir-ii-i);
  \path (start-ii-i) -- (intersection of {start-ii-i}--{dir-ii-i} and \niv,0--\niv,1) coordinate (trans-ii-i);
  \fill[black!10] (start-i-i) -- (trans-i-i) -- (trans-ii-i) -- (start-ii-i);


  \draw[thick] (\ni,\pystart) -- (\niv,\pystart);
  \draw[thick] (\ni,\pystart) -- (\ni,\pyend);
  \draw[thick] (\nii,\pystart) -- (\nii,\pyend);
  \draw[thick] (\niii,\pystart) -- (\niii,\pyend);
  \draw[thick] (\niv,\pystart) -- (\niv,\pyend);

  \draw[dashed] (\pxstart,\rxi) -- (\pxend,\rxi);
  \draw[dashed] (\pxstart,\rxii) -- (\pxend ,\rxii);
  \draw[dashed] (\pxstart,\rxiii) -- (\pxend ,\rxiii);
  \draw[dashed] (\pxstart,\rxiv) -- (\pxend ,\rxiv);

  \path (\ni,\pystart) -- node[font=\small,anchor=south] {$i_1$} (\nii,\pystart);
  \path (\nii,\pystart) -- node[font=\small,anchor=south] {$i_1+1$} (\niii,\pystart);
  \path (\niii,\pystart) -- node[font=\small,anchor=south] {$i_2$} (\niv,\pystart);
  \node[font=\small,anchor=west] at (\pxend,\rxi) {$R_{x_1}$};
  \node[font=\small,anchor=west] at (\pxend,\rxii) {$R_{x_2}$};
  \node[font=\small,anchor=west] at (\pxend,\rxiii) {$R_{x_2+1}$};
  \node[font=\small,anchor=west] at (\pxend,\rxiv) {$R_{x_2+2}$};

  \coordinate (start-iii-iii) at (\niii,\rxiii);
  \path (start-iii-iii) ++(45:1) coordinate (dir-iii-iii);
  \draw[thick,dotted] (start-iii-iii) -- (intersection of {start-iii-iii}--{dir-iii-iii} and \ni,0--\ni,1) node[font=\small,anchor=east] {$u(x_1,i_1,x_2,i_2)$};


  \foreach \length/\starti/\startii/\startiii in {
      .2/.3/3.2/9.7,
      .1/.5/3.4/9.9,
      .4/.6/3.5/10.}{
    \updstripjob{red}{\nii}{\starti}{\length}{\li}{}{\li}
    \updstripjob{red}{\niii}{\startii}{\length}{\lii}{}{\lii}
    \updstripjob{red}{\niv}{\startiii}{\length}{\liii}{}{\liii}
  }

  \path (\ni,.3) -- node[font=\small,anchor=east]{$p(x_1,i_1,x_2,i_2)$} (\ni,1);

  \foreach \length\start in {
      .2/5.5,
      .2/5.7,
      .4/6.3,
      .1/6.7,
      .3/6.8,
      .2/8.7,
      .1/8.9,
      .1/9.0,
      .3/9.1
  }{
    \updstripjob{blue}{\niv}{\start}{\length}{\liii}{}{\liii}
  }

\end{tikzpicture}
      \caption{Example of three adjacent segments (drawn contiguously). The dotted line illustrates the construction of~$u(x_1,i_1,x_2,i_2)$. The processing times of all sketched small jobs on segment~$i_2$ sum up to~$p(i_2,x_2)$ whereas only the processing times of the red jobs contribute to~$p(x_1,i_1,x_2,i_2)$.}
      \label{fig:early-enough}
    \end{center}
  \end{figure}

  Remove all jobs of~$\dirbset{S}{\dir}_{x,s,t}$ from the schedule. To refill the gaps consider each segment~$i_1 = s, \dots, t-1$ and on segment~$i_1$ each interval~$I_{x_1}$ that contains gaps with increasing~$x_1$. For each succeeding segment~$i_2$ consider an interval~$I_{x_2}$.
  We now apply the gap filling procedure from the proof of Lemma~\ref{lem:ptas-SPT_for_small_jobs} but decrease and increase by at most one small job appropriately such that before each~$u(i_1, x_1, i_2, x_2)$ a volume of at least~$p(i_1, x_1, i_2, x_2)$ including the overage of the former intervals is ensured. More formally, let the \emph{overage}~$o(i_1, x_1, i_2, x_2)$ be defined as the difference of the scheduled volume before~$u(i_1, x_1, i_2, x_2)$ plus $o(i_1, x_1-1, i_2, x_2)$ and the demanded~$p(i_1, x_1, i_2, x_2)$ for~$x_1>x$ and zero otherwise. The refill now ensures that each~$o(i_1, x_1, i_2, x_2) \geq 0$. For~$i= s, \dots, t$ we furthermore ensure the analog for~$p(i,\tilde x)$, $o(i, \tilde x)$, and the end of~$I_{\tilde x}$.

  We now prove by induction over~$i=s, \dots, t$ that the required processing volume within each interval~$I_{\tilde x}, \tilde x\geq x$ on segment~$i$ is available. To be more precise we claim: for each segment~$i=s, \dots, t$ and each~$\tilde x\geq x$, enough jobs of~$\dirbset{S}{\dir}_{x,s,t}$ are available to ensure a volume of at least~$p(i, \tilde x, i_2, x_2)-o(i, \tilde x-1, i_2, x_2)$ until~$u(i, \tilde x, i_2, x_2)$ for each succeeding segment~$i_2$ until~$t$ and~$x_2\geq \tilde x$, and a volume of at least~$p(i,\tilde x)-o(i, \tilde x -1)$ within the complete interval, for as long as jobs are unscheduled. The claim is true for~$i=s$ since all jobs of~$\dirbset{S}{\dir}_{x,s,t}$ are available by~$R_x$. For some~$i\in\{s, \dots, t\}$ let~$i_1$ be the corresponding preceding segment. Consider a~$\tilde x\geq x$ with~$p(i, \tilde x)>0$. All the demanded volume must have been scheduled earlier on~$i_1$. Hence,~$p(i, \tilde x) = \sum_{x_1\leq \tilde x} p(i_1, x_1, i, \tilde x)$. By the induction hypothesis, this amount on the previous segment is scheduled within each interval in time (in particular within the former gaps). If~$\tilde x$ is the first one considered on~$i$, enough jobs with the required processing volume are available. Otherwise, there might be one small job missing that is scheduled in an earlier interval. This is covered by the overage. Consider now a succeeding~$i_2$ and an~$x_2$ with~$p(i, \tilde x, i_2, x_2)>0$. 
  If~$u(i, \tilde x, i_2, x_2)=r_{\tilde x+1}$ the claim ensues from~$p(i, \tilde x, i_2, x_2) \leq \sum_{x_1\leq \tilde x} p(i_1, x_1, i, \tilde x)$. Otherwise, note  that~$u(i, \tilde x, i_2, x_2)-\transit{i_1}\in I_{x_1}$ is equal to~$u(i_1, x_1, i_2, x_2)$. Then, we additionally have to use that~$p(i, \tilde x, i_2, x_2) \leq \sum_{x_1'\leq x_1} p(i_1, x_1, i_2, x_2)$ is scheduled to be processed before~$u(i_1, x_1, i_2, x_2)$. If one small missing job is scheduled again too early, we use the overage. To conclude, all jobs arrive in time on their target segment and the completion time is increased by a factor of at most~$\O(1+\eps)$. 

  The described rearrangement procedure for the jobs of~$\dirbset{S}{\dir}_{x,s,t}$ still only needs an extra time of~$\eps^2|I_x|$ on each segment for~$x\geq 0$ and~$s, t = 1, \dots, m$ compatible with~$\dir$. By Lemma~\ref{lem:A-ptas-interval_volume} we again get that~$m^2$ time-stretches are sufficient to cover this amount. To see this, consider some segment~$i$ and some interval~$y$ and bound the needed amount as follows:
  \begin{align*}
    \sum_{i_1\dirb{\prec}{\dir}i}\sum_{i\dirb{\preceq}{\dir}i_2}\sum_{x\leq y}\eps^2|I_x| & \leq \sum_{i_1\dirb{\prec}{\dir}i}\sum_{i\dirb{\preceq}{\dir}i_2}(\eps + \eps^2)|I_y|\\
    & \leq \sum_{i'=1}^m i'(\eps + \eps^2)|I_y|\leq m^2 \eps|I_y|\\
  \end{align*}

  Finally, we have to 
  resolve infeasible start times~$\start{ij}$ that fall below the corresponding completion time~$\compl{(i-1)j}$ on the previous segment.
  For this, we readjust those small jobs that have been scheduled a bit before the actual completion time on their previous segment. Since the respective error propagates from segment to segment we have to create an extra time window of~$m\eps^2|I_x|$ for each~$\dirbset{S}{\dir}_{x,s,t}$. We can do this by applying another~$2m^3$ time-stretches.
\end{proof}

Therefore, we get the following variant of Lemma~\ref{lem:ptas-bounded_release_per_interval} with a similar proof. 

\begin{lemma}
\label{lem:ptas-bounded_release_per_interval-multiple}
  With~$\O(1+\eps)$-loss we can assume for each interval~$I_x, x\geq 0$, each~$\dir\in\{\rb, \lb\}$, and each source and target pair~$s,t\in\{1, \dots, m\}$:
\begin{compactenum}
  \item $\proc{}(\dirbset{S}{\dir}_{x,s,t}) \leq (1+\eps^2)|I_x|$
  \item the number of possible processing times in~$\dirbset{L}{\dir}_{x,s,t}$ is bounded by $4\log_{(1+\eps)}\frac{1}{\eps}$, and
  \item the number of jobs per processing time in~$\dirbset{L}{\dir}_{x,s,t}$ is bounded by $\frac{1}{\eps^2}$.
\end{compactenum}
\end{lemma}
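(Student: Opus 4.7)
\smallskip\noindent\emph{Proof plan.} My plan is to adapt the three-part argument of Lemma~\ref{lem:ptas-bounded_release_per_interval} to the refined partition $\dirbset{S}{\dir}_{x,s,t}$, $\dirbset{L}{\dir}_{x,s,t}$ of jobs by source--target pair, building on the SPT order and packaging structure established in Lemma~\ref{lem:ptas-SPT_for_small_jobs_more_segments}. Throughout, I will rely on the fact that in the multi-segment setting, small jobs released at~$R_x$ have processing time at most~$\eps^2|I_x|$ (matching the maximum package size guaranteed by Lemma~\ref{lem:ptas-SPT_for_small_jobs_more_segments}), and large jobs have processing time strictly above this threshold.

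For claim~1, I would argue as in property~\ref{ptas-it:small-bounded} of Lemma~\ref{lem:ptas-SPT_for_small_jobs}. By Lemma~\ref{lem:ptas-SPT_for_small_jobs_more_segments}, the jobs of~$\dirbset{S}{\dir}_{x,s,t}$ are scheduled in SPT order and glued into unsplittable packages of total size at most~$\eps^2|I_x|$. I take the longest SPT-prefix of packages whose total processing time is at most~$(1+\eps^2)|I_x|$, and move the release dates of all remaining jobs from~$R_x$ to~$R_{x+1}$; the $(1+\eps^2)$ slack absorbs the boundary package. Since the postponed packages would not have been scheduled to start within $I_x$ on their source segment (no interval room was left), no completion time is increased.

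For claims~2 and~3, I would mirror the single-segment counting. A large job~$j \in \dirbset{L}{\dir}_{x,s,t}$ has~$\proc{j} > \eps^2|I_x| = \eps^3(1+\eps)^x$ by largeness, and by Lemma~\ref{lem:ptas-geometric_rounding-multiple} satisfies~$\proc{j}\leq \rel{j}/\eps = (1+\eps)^x/\eps$. Writing~$\proc{j}=(1+\eps)^{x_j}$, these two inequalities pin $x_j-x$ to an integer window of length~$4\log_{(1+\eps)}(1/\eps)$ (absorbing a small additive constant from the logarithm of~$4$), yielding claim~2. For claim~3, if more than~$1/\eps^2$ jobs of the same processing time appeared in~$\dirbset{L}{\dir}_{x,s,t}$, their combined processing would strictly exceed~$|I_x|$, and hence they could not all start within~$I_x$ on source segment~$s$; the excess jobs may therefore have their release dates postponed from~$R_x$ to~$R_{x+1}$ without affecting any scheduled start time.

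The main obstacle I anticipate is verifying that the release-date postponements, performed independently over each $(s,t)$ pair and each processing time, remain compatible with the multi-segment feasibility of Lemma~\ref{lem:ptas-geometric_rounding-multiple}, where segment-wise release dates satisfy~$\rel{ij} \geq \max\{\rel{(i-1)j},\eps(\proc{j}+\transit{i})\}$. Because all modifications only increase~$\rel{j}$, the induced per-segment release dates also only increase, so any previously feasible schedule restricted to the retained jobs remains feasible; and because every postponed job was provably not running in~$I_x$ on its source segment, its overall completion time is unchanged, so the $\O(1+\eps)$-loss budget is preserved.
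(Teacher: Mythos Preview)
Your proposal is correct and follows essentially the same approach as the paper, which simply states that the lemma holds ``with a similar proof'' to the single-segment Lemma~\ref{lem:ptas-bounded_release_per_interval} (combined with property~\ref{ptas-it:small-bounded} of Lemma~\ref{lem:ptas-SPT_for_small_jobs} for claim~1). Your inference that the small/large threshold in the multi-segment setting is~$\eps^2|I_x|$ rather than~$\frac{\eps^2}{4}|I_x|$ is consistent with the constants in the lemma statement and with the package sizes of Lemma~\ref{lem:ptas-SPT_for_small_jobs_more_segments}; with this threshold there is in fact no additive~$\log_{(1+\eps)}4$ term to absorb in claim~2, so your parenthetical remark is unnecessary.
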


Also the construction of the safety net (Lemma~\ref{lem:ptas-safety_net}) must be adapted to the setting of multiple segments.

\begin{lemma}\label{lem:ptas-safety_net_more_segments}
  At~$\O(1+\eps)$-loss we can assume, that each part is completed on each segment within a constant number of intervals after its release date for the corresponding segment.
\end{lemma}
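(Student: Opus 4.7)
The plan is to generalize the single-segment safety-net argument (Lemma~\ref{lem:ptas-safety_net}) by doing the bookkeeping segment by segment and invoking the multi-segment analogues of the preceding lemmas that we have already established. For a fixed segment $i$ and a fixed interval $I_x$, I would consider the set of parts whose segment-wise release date on segment~$i$ equals~$R_x$. Using Lemma~\ref{lem:ptas-bounded_release_per_interval-multiple}, for each of the two directions and each of the at most $m^2$ source-target pairs, the number of large parts released at~$R_x$ is bounded by $\frac{1}{\eps^2}\cdot 4\log_{(1+\eps)}\frac{1}{\eps}$ and each such part has processing time at most $\frac{1}{\eps}(1+\eps)^x$, while the total processing time of small parts is at most $(1+\eps^2)|I_x|$. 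Adding the transit time $\transit{i}\leq (1/\eps)(1+\eps)^x$ once per part, the total workload on segment~$i$ from parts released there in~$I_x$ is at most $\eps^2(1+\eps)^x$ times a constant depending on~$m$ and~$\eps$. Since $m$ is constant, this is bounded by $\eps|I_{x+\cs''-1}|$ for some constant $\cs''$, just as in the single-segment proof.

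Next, I would apply a constant number of time-stretches (on the segment-wise start times) to create, on each segment, at least $\eps|I_{x+\cs''-1}|$ units of idle time appearing within the $\cs''$ intervals following~$R_x$. By Lemma~\ref{lem:ptas-time-stretch} this additional room exists on every segment simultaneously after a bounded number of stretches, so the loss stays within an $\O(1+\eps)$ factor. Inside this window I would first complete all parts released at~$R_x$ on segment~$i$ in one direction, then the other direction; feasibility within the window is guaranteed by the workload bound, and after at most $\cs''+\cs$ intervals (using Lemma~\ref{lem:ptas-bounded_crossing-multiple}) every such part is completed on segment~$i$.

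Finally, I would verify consistency across segments. The segment-wise release date $\rel{ij}$ is by definition at least the completion time of part $(i-1,j)$ in the baseline schedule, and after the above rescheduling each part $(i-1,j)$ is completed within a constant number of intervals of $\rel{(i-1)j}$. Because this constant delay can be absorbed into the constant~$\cs''$ when processing segment~$i$, an induction on $i=s_j,\dots,t_j$ shows that part $(i,j)$ finishes on segment~$i$ within a constant number of intervals after $\rel{ij}$, as required. In the compatibility-free and fully-compatible cases the directional separation argument is identical to the single-segment lemma.

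The main obstacle is the inter-segment coupling: a part on segment~$i$ cannot start before its predecessor on segment $i-1$ finishes, so a delay produced while enforcing the safety net on one segment could in principle cascade. The careful use of $\rel{ij}$ (which already dominates $\compl{(i-1)j}$ by Lemma~\ref{lem:ptas-geometric_rounding-multiple}) together with an induction over segments and enough time-stretches to buy room on every segment simultaneously is what controls this cascading, and verifying that the required number of stretches remains constant (depending only on $m$ and $\eps$) is where the argument has to be carried out most carefully.
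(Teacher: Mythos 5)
Your proposal is correct and follows essentially the same route as the paper: bound the total workload (processing plus transit) of the parts released at each segment-wise release date by an $\eps$-fraction of a constantly-later interval via Lemma~\ref{lem:ptas-bounded_release_per_interval-multiple}, create the needed idle room by time-stretching, and control the inter-segment cascading by an induction over segments that places the safety-net window for segment $i_k$ a constant number of intervals (growing with $k$, but constant since $m$ is constant) after the release. The only nitpick is that $\rel{ij}$ is defined as $\max\{\rel{j},\rel{hj},\eps(\proc{j}+\transit{i})\}$ rather than as a bound on $\compl{(i-1)j}$; what the induction actually uses is the monotonicity $\rel{(i-1)j}\leq\rel{ij}$ together with the inductive completion bound on the previous segment, which is exactly how you then argue, so the argument goes through.
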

\begin{proof}
  We again start by applying one time-stretch. This creates extra space out of each interval for each segment. We want to assign for each segment-wise release date and each direction one interval per segment whose created extra space is able to host all corresponding parts that have not been scheduled on that segment. 
  Therefore, we first bound the transit time on segment~$i$ as follows. Let~$x_i$ be the smallest integer such that~$\eps\transit{i}\leq(1+\eps)^{x_i}$. Note that~$(1+\eps)^{x_i}\leq \rel{ij}$ for each job~$j$ with~$i\in\{s_j, \dots, t_j\}$ by Lemma~\ref{lem:ptas-geometric_rounding-multiple}. We get
  \begin{gather}\label{equ:ptas-safety_net-transit}
  \transit{i} \leq \frac{1}{\eps}(1+\eps)^{x_i} \leq \frac{\eps^2}{4}(1+\eps)^{x_i}(1+\eps)^{\cs'-1} \leq \frac{\eps}{4}|I_{x_i+\cs'-1}|
  \end{gather}
  for some integer~$\cs'$ with~$\frac{4}{\eps^3} \leq (1+\eps)^{\cs'-1}$.
  Furthermore, we want to bound the processing time of all jobs released at time~$R_x$ traveling in direction~$\dir\in\{\rb,\lb\}$ by Lemma~\ref{lem:ptas-bounded_release_per_interval-multiple}:
  \begin{gather}\label{equ:ptas-safety_net-proc_rel}
  \begin{split}
  \sum_{s=1}^m\sum_{t=1}^m\left[ p(\dirbset{S}{\dir}_{x,s,t}) + p(\dirbset{L}{\dir}_{x,s,t})\right] & \leq  m^2\bigg[(1+\eps^2)\eps(1+\eps)^x + \frac{1}{\eps}(1+\eps)^x\cdot\frac{4}{\eps^2}\log_{(1+\eps)}\frac{1}{\eps}\bigg] \\
  & = \quad  \eps^3(1+\eps)^x\cdot m^2\left[\frac{(1+\eps^2)}{\eps^2} + 
  \frac{4}{\eps^6}\log_{(1+\eps)}\frac{1}{\eps} \right]\\
  & \leq \quad \frac{\eps^3}{4}(1+\eps)^x(1+\eps)^{\cs'-1} = \frac{\eps^2}{4}|I_{x+\cs'-1}|,
  \end{split}
  \end{gather}
where we define~$\cs'$ to be such that~$\frac{4}{\eps^3} \leq (1+\eps)^{\cs'-1}$ and
\[
4m^2\left[\frac{(1+\eps^2)}{\eps^2} + \frac{4}{\eps^6}\log_{(1+\eps)}\frac{1}{\eps}\right] \leq (1+\eps)^{\cs'-1}\text{.}  
\]
Note that~$\cs'$ is a constant depending only on~$\eps$ and~$m$.

  We now apply the following strategy. For each direction~$\dir\in\{\rb,\lb\}$ iterate over the segments in the corresponding order~$\{i_1,\dots, i_m\}$ and assign for each segment~$i$ the jobs~$J_{ix}:=\{j \in \dirbset{J}{\dir} \mid \rel{ij}=R_{x}\}$ with release date~$R_{x}$ on~$i$ to one concrete created extra space.
  On segment~$i_1$, we can assign the unfinished jobs of~$J_{i_1x_1}$ for each~$x_1$ to the created space of~$\frac{\eps}{2}|I_{x_1+\cs'-1}|$ due to Equations~\eqref{equ:ptas-safety_net-transit} and~\eqref{equ:ptas-safety_net-proc_rel}. By Lemmas~\ref{lem:ptas-time-stretch} and~\ref{lem:ptas-bounded_crossing-multiple} we know that this space is available before the end of~$I_{x_1+\cs'+\cs}$. 
  Consider know the jobs of some~$J_{i_2x_2}$. Since by Lemma~\ref{lem:ptas-geometric_rounding-multiple}~$\rel{i_1j}\leq R_{x_2}$ for each~$j\in J_{i_2x_2}$ it is ensured that all preceding parts are finished before~$R_{x_2+\cs'+\cs+1}$. We now assign the created extra space of~$\frac{\eps}{2}|I_{x_2+\cs'+\cs}|$ to these parts which is available between~$R_{x_2+\cs'+\cs+1}$ and the end of~$I_{x_2+\cs'+2\cs+1}$ with the same argument. This space is sufficient to host one transit by Equation~\eqref{equ:ptas-safety_net-transit} and the processing of all jobs of~$J_{i_2x_2}$ with global release date~$r_j=R_{x}\leq R_{x_2}$ due to Equation~\eqref{equ:ptas-safety_net-proc_rel} and Lemma~\ref{lem:A-ptas-interval_volume}.
  Continuing analogously we get for each~$1 \leq k \leq m$ and each~$x_k$ that we can complete all unfinished parts of~$J_{i_kx_k}$ in the created space~$\frac{\eps}{2}|I_{x_k+\cs'+ (k-1)(\cs+1)-1}|$ before the beginning of interval~$I_{x_k+\cs'+ k(\cs+1)}$.
  

\end{proof}

To conclude that the number of parts running in each block~$B_t$ is bounded by a constant we finally have to use that the ratio of any two transit times of segments is bounded by a constant.

\begin{lemma}
 Assume that~$\transit{k}/\transit{i}$ is bounded by a constant~$(1+\eps)^T$ for any two segments~$k, i\in\{1, \dots, m\}$. Then there is only a constant number of intervals between each two segment-wise release dates of each job~$j$:
  \[
    \frac{\rel{kj}}{\rel{ij}} \leq (1+\eps)^{T+1}\text{.}
  \]
\end{lemma}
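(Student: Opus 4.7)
The plan is to exploit the recursive definition of segment-wise release dates from Lemma~\ref{lem:ptas-geometric_rounding-multiple}. Before the final rounding to powers of~$(1+\eps)$, the release date on segment~$i$ satisfies $\rel{ij}^0 = \max\{\rel{j},\,\rel{hj}^0,\,\eps(\proc{j}+\transit{i})\}$, where~$h$ is the segment that~$j$ visits directly before~$i$. Unrolling the recursion, I can express $\rel{ij}^0$ as the maximum of $\rel{j}$ together with $\{\eps(\proc{j}+\transit{l})\}$ over all segments~$l$ that~$j$ visits up to and including~$i$, and the final rounding yields $\rel{ij}^0 \leq \rel{ij} \leq (1+\eps)\rel{ij}^0$.

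Fix a job~$j$ and two of its segments~$k,i$. By swapping~$k$ and~$i$ if necessary, I may assume that~$i$ is visited by~$j$ no later than~$k$, so that $\rel{ij}^0 \leq \rel{kj}^0$ and consequently $\rel{ij} \leq \rel{kj}$ by monotonicity of the rounding (in the opposite case the ratio is trivially at most~$1$). If $\rel{kj}^0 = \rel{ij}^0$ the claim is immediate, so I may further assume $\rel{kj}^0 > \rel{ij}^0$. The key observation is then that the argmax attaining $\rel{kj}^0$ cannot lie among the terms contributing to $\rel{ij}^0$; it must therefore be $\eps(\proc{j}+\transit{l^*})$ for some segment~$l^*$ that~$j$ visits strictly after~$i$.

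Combining this with the lower bound $\rel{ij}^0 \geq \eps(\proc{j}+\transit{i})$ from the definition, I obtain
\[
  \frac{\rel{kj}}{\rel{ij}} \leq (1+\eps)\,\frac{\rel{kj}^0}{\rel{ij}^0} \leq (1+\eps)\cdot\frac{\proc{j}+\transit{l^*}}{\proc{j}+\transit{i}}.
\]
A short cross-multiplication shows that $(\proc{j}+\transit{l^*})/(\proc{j}+\transit{i}) \leq \max\bigl(1,\,\transit{l^*}/\transit{i}\bigr)$, and the hypothesis on transit-time ratios bounds the latter by $(1+\eps)^T$. Together this gives $\rel{kj}/\rel{ij} \leq (1+\eps)^{T+1}$, which is the claimed inequality.

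The whole argument is essentially bookkeeping, and I do not expect any serious obstacle. The only point requiring a moment's care is unrolling the recursive maxima so that the argmax of $\rel{kj}^0$ can be pinned down to a single segment~$l^*$, which in turn enables the direct application of the transit-time ratio assumption.
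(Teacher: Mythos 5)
Your proof is correct and follows essentially the same route as the paper's: unroll the recursive maximum defining the segment-wise release dates, observe that the larger one equals $\eps(\proc{j}+\transit{l^*})$ up to a $(1+\eps)$ rounding loss, and apply the transit-ratio hypothesis via the inequality $(\proc{j}+\transit{l^*})/(\proc{j}+\transit{i})\leq\max(1,\transit{l^*}/\transit{i})$. The only cosmetic difference is that the paper anchors the comparison at the maximum-transit segment and the global release date $\rel{j}\geq\eps(\proc{j}+\transit{s_j})$, whereas you compare the two given segments directly.
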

\begin{proof}
Let~$k\in\{s_j,\dots, t_j\}$ be the segment with maximum transit time. If~$\rel{kj}=\rel{j}$ we know that all segment-wise release dates are equal. Otherwise we can conclude by construction in the proof of Lemma~\ref{lem:ptas-geometric_rounding-multiple} that~$\eps(\proc{j}+\transit{k}) \leq \rel{kj} \leq (1+\eps)\eps(\proc{j}+\transit{k})$. Therefore, we get
  \[
   \rel{kj} \leq (1+\eps)\eps(\proc{j}+\transit{k}) \leq (1+\eps)\eps(\proc{j}+(1+\eps)^T\transit{s_j})\leq (1+\eps)^{T+1}\eps(\proc{j}+\transit{s_j})\leq (1+\eps)^{T+1}\rel{j}.
  \]
Applying that~$\rel{j} \leq \rel{ij} \leq \rel{kj}$ for each~$i\in\{s_j,\dots, t_j\}$ yields finally the claim.
\end{proof}
}{}

\section{Proofs of Section~\ref{sec:arbitrary-conflicts}:\newline Hardness of custom compatibilities}

In this section we give a detailed hardness proof for bidirectional scheduling on a single segment where jobs can be compatible. Our proof holds even for unit processing and transit times. We first consider the makespan objective and extend the proof in a second step to waiting time and total completion time.

\subsection{$\NP$-Hardness of Makespan Minimization}

\begin{restatable}{theorem}{thmGraphHardness}
\label{thm:graph-hardness}
Minimizing the makespan for~$m=1$ with an arbitrary compatibility graph~$G_1$ is $\NP$-hard even if~$\proc{j}=\transit{1}=1$ for each~$j\in\jobs$.
\end{restatable}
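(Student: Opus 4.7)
The plan is to reduce from \rsat{\leq\!3}{3}: given a formula over a variable set $\set{X}$ and clauses $\set{C}$ of size three in which every variable appears in at most three clauses (and, after standard preprocessing, every literal appears in at most two clauses), I will construct a bidirectional scheduling instance on a single segment with $\proc{j}=\transit{1}=1$ together with a target makespan $T$, such that the instance admits a schedule of makespan at most $T$ if and only if the formula is satisfiable.

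The construction realises the four-part layout of Figure~\ref{fig:graph-hardness}: the time horizon $[0,T]$ is split into consecutive parts $\tpart{1},\tpart{2},\tpart{3},\tpart{4}$, and throughout all four parts I release a dense family of \emph{blocking jobs} in both directions, designed (together with their compatibility edges) so that meeting makespan $T$ forces every blocking job to be processed exactly at its release date. This rigid skeleton pins down the schedule up to a polynomial number of controlled gaps that remain to be filled by \emph{variable jobs}. For every variable $x_i$ I release six variable jobs in a private interval of $\tpart{1}$: two leftbound jobs that can only run in $\tpart{1}\cup\tpart{2}$, together with a pair of rightbound \emph{true-jobs} and a pair of rightbound \emph{false-jobs}. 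In $\tpart{2}$ I open exactly one rightbound slot per variable that can absorb only one of the two pairs, so the other pair must be delayed into $\tpart{3}\cup\tpart{4}$; I read this choice as the truth assignment of $x_i$. In $\tpart{3}$ each clause $c_k$ receives one gap whose compatibility edges connect only to the (at most three) variable jobs representing literals of $c_k$, so the gap can be filled only by a delayed job of a satisfying literal. Finally, $\tpart{4}$ hosts $2|\set{X}|-|\set{C}|$ generic gaps compatible with every delayed variable job.

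Correctness then reduces to counting. In the forward direction, from a satisfying assignment $\alpha$ I schedule the pair of $x_i$ opposite to $\alpha(x_i)$ in $\tpart{2}$, thereby delaying exactly $2|\set{X}|$ rightbound jobs. To fill the $\tpart{3}$ clause gaps, I match them to delayed jobs of satisfying literals using Hall's theorem: for any set $S$ of clauses, the number of satisfying literal--clause incidences is at least $|S|$, and since each literal contributes two delayed jobs and appears in at most two clauses, the neighbourhood of $S$ has size at least $|S|$. The remaining $2|\set{X}|-|\set{C}|$ delayed jobs then fit exactly into the $\tpart{4}$ leftover gaps. Conversely, if a schedule meets makespan $T$, the forced placement of blocking jobs leaves no freedom outside the described gaps; the decisions in $\tpart{2}$ define a truth assignment, and the existence of a valid filler for every $\tpart{3}$ clause gap under that assignment implies that every clause has at least one satisfying literal.

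The main obstacle I expect is the detailed design of the blocking frame together with the compatibility edges, so that (i) no unintended variable job can be squeezed into a wrong gap and (ii) the $2|\set{X}|$ delayed rightbound jobs fit exactly into the $|\set{C}|+(2|\set{X}|-|\set{C}|)$ gaps of $\tpart{3}\cup\tpart{4}$. Each clause slot in particular must admit precisely the variable jobs of literals satisfying that clause at exactly one time step, which requires a careful local arrangement of the blocking jobs around each slot together with a matching local design of the compatibility graph.
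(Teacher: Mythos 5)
Your reduction is from the same problem ($(\leq\!3,3)$-\sat) and follows the same architecture as the paper's proof: a four-part time horizon, a rigid frame of blocking jobs forced to run at their release dates by the makespan bound, six variable jobs per variable released in $\tpart{1}$, one clause gap per clause in $\tpart{3}$ admitting only satisfying variable jobs, and $2|X|-|C|$ generic gaps in $\tpart{4}$. The counting in both directions (including the observation that two delayed jobs per literal and at most two occurrences per literal suffice to fill all clause gaps --- your Hall-type argument is a slightly more formal version of the paper's greedy assignment) matches the paper's Lemmas on parts $\tpart{3}$ and $\tpart{4}$.

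The gap is exactly where you say you expect it, and it is not a minor detail: the enforcement of a \emph{consistent} truth assignment in $\tpart{1}$--$\tpart{2}$ is the technically hardest part of the paper's proof, and the one concrete mechanism you propose for it does not work as stated. You write that $\tpart{2}$ opens ``exactly one rightbound slot per variable that can absorb only one of the two pairs''; a single unit-length slot cannot absorb a pair of two rightbound unit jobs, since jobs in the same direction can never be processed concurrently regardless of compatibilities. The paper's gadget works differently: the non-delayed rightbound pair is absorbed inside $\tpart{1}$ itself, and $\tpart{2}$ instead reserves, per variable, one \emph{leftbound} slot for one of two ``indefinite'' jobs and one leftbound slot for one of two leftbound variable jobs ($\truejob{i}{\lb}$ or $\falsejob{i}{\lb}$); the forced choice among these leftbound jobs propagates, via a chain of compatibilities with blocking and dummy jobs inside the variable's interval of $\tpart{1}$, to force either both true or both false rightbound jobs to be delayed. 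Verifying this requires the careful induction over variables carried out in the proof of Lemma~\ref{lem:well-defined-assignment}. So your proposal is the right plan, but the central gadget still has to be designed and verified, and its design cannot be the purely rightbound one you sketch.
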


In the following, we explain the construction of a bidirectional scheduling instance for a given~\rsat{\leq 3}{3} instance with variable set $X = \{ x_i \mid i = 0,\dots,|X|-1\}$ and clause set $C = \{c_k \mid k = 0,\dots,|C|-1\}$. The constructed instance yields a demanded makespan~$\cmax$ if and only if the given \rsat{\leq 3}{3} formula is satisfiable.
For the construction, we partition the time horizon into four parts~$\tpart{1}, \dots, \tpart{4}$ with start time~$\tstart{1}=0, \tstart{2}=6\varn, \tstart{3}=10\varn,$ and $\tstart{4}=10\varn+2\cln$. There is a (virtual) last part starting at time~$\tstart{5}=12\varn+\cln$. The demanded makespan~$\cmax=\tstart{5}+1$ will enforce that all jobs start before the end of the fourth part.

The rough idea is as follows: In the first four parts we release a tight frame of \emph{blocking jobs}~$\blockingjs{}$ and \emph{dummy jobs}~$\dummyjs{}$ that have to start running immediately at their release date in any schedule that achieves~$\cmax$. We use these jobs to create gaps for \emph{variable jobs} that represent the variable assignments. By defining the compatibilities for the blocking jobs we are able to control which of these variable jobs can be scheduled into each gap. In the first part of our construction, we release all variable jobs, which come in two \emph{types}: one type representing a \emph{true} assignment to the corresponding variable and the other type representing a \emph{false} assignment. Our construction will enforce the following properties in each of its parts:


\begin{restatable}{lemma}{lemWellDefiniedAssignment}
	\label{lem:well-defined-assignment}
  In every feasible schedule with makespan~$\cmax$, all jobs released before~$\tstart{3}$ are scheduled in parts~$\tpart{1}$ and $\tpart{2}$, except for two rightbound variable jobs of same type for each variable.
\end{restatable}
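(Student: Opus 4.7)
My plan is to prove Lemma~\ref{lem:well-defined-assignment} by first pinning down the rigid frame of blocking and dummy jobs, then using a counting argument on the remaining gaps in $\tpart{1} \cup \tpart{2}$, and finally invoking the compatibility structure inside each variable gadget to control the \emph{type} of the deferred jobs.

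The first step is to show that every blocking and dummy job released before $\tstart{3}$ must start exactly at its release date in any schedule of makespan $\cmax$. The construction releases blocking jobs at essentially every integer time in $[0,\tstart{3})$ and the compatibility graph is set up so that consecutive frame jobs in opposite directions are pairwise incompatible. Since $\proc{j} = \transit{1} = 1$, any delay of a single frame job propagates forward through the entire frame, and the time budget $\tstart{5}+1 = 12|X|+|C|+1$ leaves no room to absorb even one unit of delay. Hence the frame is completely rigid.

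Once the frame is fixed, I would count the gaps it leaves in $\tpart{1} \cup \tpart{2}$. A direct inspection of the construction (cf.\ Figure~\ref{fig:graph-hardness}(a)) shows that these gaps have capacity for exactly $2|X|$ leftbound and $2|X|$ rightbound variable jobs, arranged as two of each kind inside each variable gadget. Part~$\tpart{1}$ releases $2|X|$ leftbound and $4|X|$ rightbound variable jobs. The $2|X|$ leftbound variable jobs cannot be postponed because parts $\tpart{3}$ and $\tpart{4}$ offer no leftbound slots compatible with them; hence they all lie in $\tpart{1} \cup \tpart{2}$ and saturate the leftbound capacity. Consequently, exactly $2|X|$ of the $4|X|$ rightbound variable jobs are forced into $\tpart{3} \cup \tpart{4}$, and since the gaps inside one gadget are reachable only from the variable jobs of that gadget, precisely two rightbound jobs per variable are deferred.

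The main obstacle, and the final step, is to show that the two deferred rightbound jobs of a single variable gadget are of the \emph{same} type, so that they represent a well-defined assignment. I would argue this through the gadget's compatibility graph: the leftbound variable job of the gadget occupies a mandatory slot that must run concurrently with a rightbound job, and by construction the compatibility graph makes this leftbound job compatible with exactly one of the two rightbound type-pairs in its gadget. The unique compatible pair must then fill the concurrent rightbound slot, so the other pair -- two rightbound variable jobs of the same type -- is precisely what is pushed out of $\tpart{1} \cup \tpart{2}$. The delicate sub-step here is to exclude cross-gadget placements, e.g.\ a rightbound variable job from the gadget of $x_i$ being squeezed into a gap belonging to $x_{i+1}$; this is ruled out by verifying that the blocking jobs straddling two consecutive gadgets are declared incompatible with the variable jobs of the neighboring gadget, so that each gadget's variable jobs are strictly confined to its own window of $\tpart{1} \cup \tpart{2}$.
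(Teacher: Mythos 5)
Your high-level shape (rigid frame, then a capacity count, then a per-gadget compatibility argument for the type) matches the paper's, but the step that actually carries the lemma --- that the two deferred rightbound jobs have the \emph{same} type --- rests on a compatibility structure the construction does not have. You assume the leftbound variable job of a gadget ``must run concurrently with a rightbound job'' and is ``compatible with exactly one of the two rightbound type-pairs.'' In the construction the leftbound variable jobs $\truejob{i}{\lb},\falsejob{i}{\lb}$ are compatible with \emph{no} rightbound variable job; the type-selecting edges sit on the leftbound blocking jobs $\blockingjob{i}{\mtrue},\blockingjob{i}{\mfalse}$ and, crucially, on the two \emph{indefinite} jobs $\indefjob{i}{\mtrue},\indefjob{i}{\mfalse}$, which your proof never mentions. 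Relatedly, your capacity count (``$2|X|$ leftbound variable jobs saturate the leftbound capacity'') ignores these $2|X|$ indefinite jobs, and it is exactly their interaction with $\tpart{2}$ that does the work: $\tpart{2}$ offers per variable one slot compatible only with $\{\indefjob{i}{\mtrue},\indefjob{i}{\mfalse}\}$ and one compatible only with $\{\truejob{i}{\lb},\falsejob{i}{\lb}\}$, so exactly one member of each pair must already run in $\tpart{1}$. The paper then shows, by a backward induction over the variables inside $\tpart{1}$ (which also supplies your ``no cross-gadget leakage'' step), that these two choices are coupled --- either $\indefjob{i}{\mtrue}$ and $\truejob{i}{\lb}$ run in $\tpart{1}$ or $\falsejob{i}{\lb}$ and $\indefjob{i}{\mfalse}$ do --- and each alternative leaves usable rightbound slots only for the two variable jobs of one type. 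Without the indefinite jobs there is no mechanism forcing the two deferred jobs to share a type, so your argument cannot close.

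A second, smaller gap: your frame-rigidity argument (``any delay propagates forward and the time budget leaves no room'') is not self-contained. A delayed frame job is only excluded because one already knows what the later parts can absorb; this is why the paper proves the lemmas in reverse order and derives the rigidity of the $\tpart{1},\tpart{2}$ frame from Lemmas~\ref{lem:clauses} and~\ref{lem:storage}, which state that the only jobs admissible in $\tpart{3}$ and $\tpart{4}$ are rightbound variable jobs (and in $\tpart{3}$ only ones satisfying the respective clause). The same two lemmas supply the bound of $2|X|$ deferred rightbound variable jobs that your counting step needs; you should invoke them explicitly rather than re-deriving the capacities by inspection.
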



\begin{restatable}{lemma}{lemClauses}\label{lem:clauses}
  In every feasible schedule with makespan~$\cmax$, the only jobs released before~$\tstart{3}$ and scheduled in~$\tpart{3}$ are rightbound variable jobs each corresponding to a variable assignment satisfying a different clause.
\end{restatable}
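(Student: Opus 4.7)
The plan is to argue in three stages, building on Lemma~\ref{lem:well-defined-assignment}. First, I will show that inside~$\tpart{3}$ the blocking and dummy jobs of the frame must again be scheduled immediately at their release dates. This uses the same tightness argument underlying Lemma~\ref{lem:well-defined-assignment}: any delay in processing a framing job would propagate through~$\tpart{3}\cup\tpart{4}$ and push the makespan beyond~$\cmax=\tstart{5}+1$. Once the frame is pinned down, inspection of the construction exhibits exactly~$|C|$ rightbound gaps inside $[\tstart{3},\tstart{4})$, one per clause~$c_k$, each spanning a single unit of time concurrent with a dedicated leftbound blocking job assigned to~$c_k$ (cf.~Fig.~\ref{fig:graph-hardness}(b)).

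Second, I will read off from the compatibility graph~$G_1$ which rightbound jobs can be scheduled into the~$c_k$-gap. By construction, the leftbound blocking job guarding this gap is connected in~$G_1$ exactly to the rightbound variable jobs whose literal type occurs in~$c_k$. Hence the only jobs released before~$\tstart{3}$ that fit into the~$c_k$-gap are variable jobs corresponding to a literal that satisfies~$c_k$. This step relies only on the local structure of the gap and its blocking partner; no global argument is required.

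Third, a counting argument forces each clause gap to be filled. By Lemma~\ref{lem:well-defined-assignment}, the jobs released before~$\tstart{3}$ that remain unscheduled after~$\tpart{2}$ are exactly~$2|X|$ rightbound variable jobs, two per variable, all of the same type. The framing of~$\tpart{3}\cup\tpart{4}$ offers precisely~$|C|$ rightbound gaps in~$\tpart{3}$ and~$2|X|-|C|$ rightbound gaps in~$\tpart{4}$, for a total of~$2|X|$ slots before the deadline. Since every remaining variable job has to be scheduled before~$\tstart{5}$, no slot may be left empty; in particular each of the~$|C|$ clause gaps must receive a variable job. Combined with the compatibility observation, each such job corresponds to a literal satisfying its respective clause, and since the~$|C|$ gaps are indexed by distinct clauses, distinct clauses are satisfied.

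The main obstacle is the first stage: establishing that the frame in~$\tpart{3}$ is genuinely rigid, so that no variable job can slip in outside a clause gap and no two variable jobs can share a single gap. This requires carefully tracking how the tightly packed concurrent leftbound blocking jobs in~$\tpart{3}$ interact, across the $\tpart{2}/\tpart{3}$ boundary, with the rightbound variable jobs left over from Lemma~\ref{lem:well-defined-assignment}, ruling out reorderings that would free one gap at the expense of another. Once this rigidity is in place, the compatibility and counting steps yield the lemma.
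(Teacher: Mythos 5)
Your overall shape (pin the frame, read off compatibilities, count slots) matches the paper's, but the step you yourself flag as ``the main obstacle'' is exactly the step you never supply, and your choice of starting point makes it harder than it needs to be. The paper proves Lemmas~\ref{lem:well-defined-assignment}--\ref{lem:storage} in \emph{reverse} order for a reason: the clean way to make the $\tpart{3}$ frame rigid is to invoke Lemma~\ref{lem:storage} first. Since every blocking job of $\blockingjs{4}$ must start at its release date and is compatible only with rightbound variable jobs, no job released within $\tpart{3}$ can be postponed into $\tpart{4}$; and because the leftbound side of $\tpart{3}$ contains exactly $2\cln$ leftbound jobs ($\blockingjob{k}{}$ at $\tstart{3}+2k$ and $\dummyjob{k}{\lb}$ at $\tstart{3}+2k+1$) for exactly $2\cln$ unit slots, every leftbound job must start at its release date, and then the compatibility lists of the $\dummyjob{k}{\rb}$ force the rightbound dummies to their release dates as well. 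That three-line argument is the ``rigidity'' you leave open; your alternative sketch (``any delay would propagate and push the makespan beyond $\cmax$'') is not a proof, because a delayed job could in principle be absorbed by slack later in $\tpart{3}$ or $\tpart{4}$ unless you first establish that no such slack exists --- which is precisely the content of Lemma~\ref{lem:storage}.

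There is also a logical-dependency problem: you base your counting on Lemma~\ref{lem:well-defined-assignment}, but the paper's proof of that lemma in turn invokes Lemmas~\ref{lem:clauses} and~\ref{lem:storage}. Unless you give an independent proof of Lemma~\ref{lem:well-defined-assignment} that does not use Lemma~\ref{lem:clauses}, your argument is circular. Fortunately you do not need it: the statement of Lemma~\ref{lem:clauses} only characterizes \emph{which} previously released jobs can occupy $\tpart{3}$ (rightbound variable jobs, one per clause, each satisfying that clause); it does not assert that every clause gap is filled. The ``all $2|X|$ slots must be used'' counting belongs to the proof of Theorem~\ref{thm:graph-hardness}, not here. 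Your second stage (reading the clause gaps off $G_1$) is correct and matches the paper.
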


\begin{restatable}{lemma}{lemStorage}\label{lem:storage}
  In every feasible schedule with makespan~$\cmax$, the only jobs released before~$\tstart{4}$ and scheduled in~$\tpart{4}$ are rightbound variable jobs, and there are not more than~$2|X|-|C|$ of them.
\end{restatable}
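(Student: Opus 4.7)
The plan is to combine a counting argument from the two previous lemmas with a capacity bound on $\tpart{4}$. By Lemma~\ref{lem:well-defined-assignment}, the jobs released before $\tstart{3}$ that remain unscheduled by time $\tstart{3}$ are precisely two rightbound variable jobs of the same type per variable, for a total of $2\varn$ rightbound variable jobs. By Lemma~\ref{lem:clauses}, exactly $\cln$ of these are scheduled in $\tpart{3}$ (one per clause gap), so $2\varn-\cln$ rightbound variable jobs released before $\tstart{3}$ must be scheduled no earlier than $\tstart{4}$. The makespan $\cmax=\tstart{5}+1$ together with $\proc{j}=\transit{1}=1$ implies a latest admissible start time of $\tstart{5}-1$, so all of them appear in $\tpart{4}$.

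Next, I would argue that no other jobs released before $\tstart{4}$ are scheduled in $\tpart{4}$. For jobs released before $\tstart{3}$, this is immediate from the two prior lemmas. For jobs released in $[\tstart{3},\tstart{4})$, the construction only releases the blocking and dummy jobs forming the tight frame of $\tpart{3}$. These are released at every unit time step and are pairwise incompatible across distinct time steps, so in any feasible schedule with makespan $\cmax$ each such frame job has a unique admissible start time, namely its release date; in particular, each is scheduled within $\tpart{3}$ itself.

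For the cardinality bound, $\tpart{4}$ has length $\tstart{5}-\tstart{4}=2\varn-\cln$, and the latest admissible start time $\tstart{5}-1$ leaves exactly $2\varn-\cln$ integer start times per direction inside $\tpart{4}$. Since the blocking jobs of the $\tpart{4}$ frame (released at each integer time in $\tpart{4}$) force any rightbound variable job scheduled in $\tpart{4}$ to occupy one of these integer slots without overlapping any rightbound frame job, at most $2\varn-\cln$ rightbound variable jobs from earlier parts can fit into $\tpart{4}$.

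The main obstacle is the second step: one has to use the concrete compatibility graph chosen for the blocking and dummy jobs in $\tpart{3}$ and verify that delaying any frame job would either violate a compatibility constraint or push some job beyond $\cmax$. Once the tightness of the $\tpart{3}$ frame is made precise, the counting and capacity bounds combine directly to prove both assertions of the lemma.
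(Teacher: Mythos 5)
Your proof is circular relative to the paper's argument structure. The paper proves Lemmas~\ref{lem:well-defined-assignment}--\ref{lem:storage} in reverse order for a reason: the proof of Lemma~\ref{lem:clauses} opens by invoking Lemma~\ref{lem:storage} (to force the $\tpart{3}$ frame jobs to start before the end of $\tpart{3}$), and the proof of Lemma~\ref{lem:well-defined-assignment} invokes both. Your first paragraph uses Lemmas~\ref{lem:well-defined-assignment} and~\ref{lem:clauses} to count the leftover variable jobs, and your second paragraph defers to them again for the jobs released before $\tstart{3}$; if this proof were substituted into the paper, the three lemmas would depend on one another cyclically. (The counting in your first paragraph also establishes a lower bound that the lemma does not assert; only the upper bound and the exclusion of other job types are claimed.)

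The self-contained argument works entirely from the $\tpart{4}$ frame and the makespan. The $2\varn-\cln$ leftbound blocking jobs of $\blockingjs{4}$ are released one per unit time step, travel in the same direction, and must all complete by $\cmax=\tstart{4}+2\varn-\cln+1$, so a packing argument pins each to its release date and their running intervals cover all of $[\tstart{4},\tstart{5}+1)$. Since each is compatible only with rightbound variable jobs, no leftbound job and no incompatible rightbound job released earlier can run anywhere in $\tpart{4}$; in particular the rightbound dummy jobs of $\tpart{3}$ are excluded without your having to pin the $\tpart{3}$ frame to its release dates at all --- the step you identify as the main obstacle is not needed for this lemma (it belongs to the proof of Lemma~\ref{lem:clauses}, where it is derived \emph{from} the present lemma). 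The mutual exclusion of same-direction jobs then leaves exactly $2\varn-\cln$ unit slots for rightbound variable jobs, which is your third paragraph and is correct.
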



In the following we explicitly define the released jobs of each part achieving the above properties. Each part is accompanied by a figure illustrating when jobs are released, the respective compatibility graph and an example of a schedule. In all figures, time is directed downwards, and all rightbound jobs are depicted to the left and all leftbound jobs to the right of the segment. Since compatible jobs can run concurrently, the schedules of the leftbound and the rightbound jobs are drawn separately. 

It is convenient to prove Lemmas 6 to 8 in reverse order. To this end, we start by specifying the jobs released in~$\tpart{4}$.

\subsubsection*{Jobs released in $\tpart{4}$.}

In the fourth part, we release a set of $2\varn-\cln$ leftbound blocking jobs~$\blockingjs{4} = \{\blockingjob{i}{} \mid i=0, \dots, 2\varn-\cln-1\}$. Each blocking job $b_i$ is released at time~$\tstart{4} + i$. The purpose of a blocking job is to leaving space for a leftover rightbound variable job that has not been scheduled until the beginning of this part. Each blocking jobs $b_i \in B_4$ is only compatible with all rightbound variable jobs.

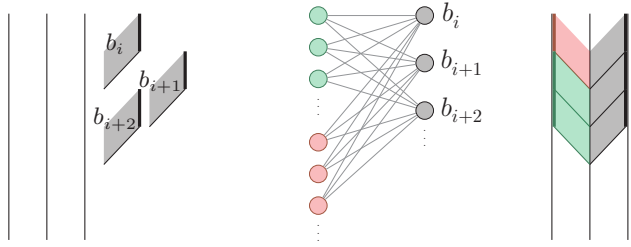
\begin{figure}
  \begin{center}
    \begin{tikzpicture}\label{pic:part4-released}
   \pgftransformxscale{\procheight}
   \pgftransformyscale{-\timeunit}

   \draw  (-1,0) -- +(0,6)  
          (0,0) -- +(0,6)
          (1,0) -- +(0,6)  ;

  \downjob{black} {1.5}{0}{1}{1}{}
  \downjob{black} {2.7}{1}{1}{1}{}
  \downjob{black} {1.5}{2}{1}{1}{}


  \begin{scope}[font = \small,anchor=center]
    \node (lbi) at (1.8, .8){$\blockingjob{i}{}$};
    \node (lbip1) at (3., 1.8){$\blockingjob{i+1}{}$};
    \node (lbip2) at (1.8, 2.8){$\blockingjob{i+2}{}$};
  \end{scope}
\end{tikzpicture}
\hspace*{1cm}
\scalebox{.7}{
\begin{tikzpicture}\label{pic:part4-graph}
  \pgftransformyscale{-.6}

  \begin{scope} 
  \clip (-1.5,-.5) rectangle (1.5,7.1);
 
  \gnode{green} {t1}{-1}{0}{}
  \gnode{green} {t2}{-1}{1}{}
  \gnode{green} {t3}{-1}{2}{}
  \node[font=\tiny] at (-1,2.7) {$\vdots$};
  
  \gnode{red}   {f1}{-1}{4}{}
  \gnode{red}   {f2}{-1}{5}{}
  \gnode{red}   {f3}{-1}{6}{}
  \node[font=\tiny] at (-1,6.7) {$\vdots$};
  
  \gnode{black} {b1}{1}{0}{}
  \gnode{black} {b2}{1}{1.5}{}
  \gnode{black} {b3}{1}{3}{}
  \node[font=\tiny] at (1,3.7) {$\vdots$};

  \foreach \i in {1,2,3}{
    \foreach \j in {1,2,3}{
      \draw[gray] (b\i) -- (t\j);
      \draw[gray] (b\i) -- (f\j);
    }
  }

  \end{scope} 

  \begin{scope}[font = \Large, anchor=center, overlay]
    \node (lbi) at (1.5, 0){$\blockingjob{i}{}$};
    \node (lbip1) at (1.7, 1.5){$\blockingjob{i+1}{}$};
    \node (lbip1) at (1.7, 3.){$\blockingjob{i+2}{}$};
  \end{scope}
\end{tikzpicture}
}
\hspace*{1cm}
\begin{tikzpicture}\label{pic:part4-sched}
   \pgftransformxscale{\procheight}
   \pgftransformyscale{-\timeunit}

   \draw  (-1,0) -- +(0,6)  
          (0,0) -- +(0,6)
          (1,0) -- +(0,6)  ;

  \upjob{red}   {0}{0}{1}{1}{}
  \upjob{green} {0}{1}{1}{1}{}
  \upjob{green} {0}{2}{1}{1}{}

  \downjob{black} {0}{0}{1}{1}{}
  \downjob{black} {0}{1}{1}{1}{}
  \downjob{black} {0}{2}{1}{1}{}

\end{tikzpicture} 
  \end{center}
  \caption{Part $\tpart{4}$ with blocking jobs reserving space for all remaining rightbound variable jobs.}
  \label{fig:part4}
\end{figure}

We are now in position to prove Lemma~\ref{lem:storage}, i.e., in a schedule with makespan $\cmax$ the only jobs released before $A_4$ that can be scheduled in $P_4$ are up to $2|X|-|C|$ rightbound variable jobs.

\begin{proof}[Proof of Lemma~\ref{lem:storage}]
  First, observe that with the required makespan of~$\tstart{5}+1 = \tstart{4}+2\varn-\cln+1$ each blocking job of~$\blockingjs{4}$ must be scheduled directly at its release date. Consequently, there is no room to delay the start of any leftbound job released before~$\tpart{4}$ to this part.  Due to the compatibilities, the rightbound blocking and dummy jobs released before~$\tpart{4}$ are also forced to run before the start of~$\tpart{4}$. Therefore, there are exactly~$2\varn-\cln$ open slots within~$\tpart{4}$ reserved for rightbound variable jobs. 
\end{proof}
 
 We proceed to explain the jobs released in the third part of our construction.

\subsubsection*{Jobs released in $\tpart{3}$.}
The third part (Figure~\ref{fig:part3}) is responsible for the assignment of satisfying literals to each clause. During that part, we release a set of blocking jobs~$\blockingjs{3} = \{\blockingjob{k}{} \mid k=0,\dots,|C|-1\}$ which contains one leftbound blocking job~$\blockingjob{k}{}$ for each clause~$c_k$. Each blocking job $\blockingjs{k}{}$ is released at time~$\tstart{3} + 2k$ and is compatible with each rightbound variable job that represents a variable assignment that satisfying the corresponding clause~$c_k$. The gaps between the release times of the blocking jobs are filled with a set dummy jobs~$\dummyjs{3} = \{\dummyjob{k}{\rb} \mid k = 0,\dots,|C|-1\} \cup \{\dummyjob{k}{\lb} \mid k = 0,\dots,|C|-1\}$ containing one rightbound job~$\dummyjob{k}{\rb}$ and one leftbound job~$\dummyjob{k}{\lb}$ with release date~$\tstart{3} + 2k + 1$ each. Each leftbound dummy job $\dummyjob{k}{\lb}$ is compatible with all rightbound variable jobs, furthermore each rightbound dummy job~$\dummyjob{k}{\rb}$ is compatible with the three leftbound jobs released during the time interval~$[r_{\dummyjob{k}{\rb}}-1, r_{\dummyjob{k}{\rb}}+1]$.

\begin{figure}
  \begin{center}
    \begin{tikzpicture}\label{pic:part3-released}
   \pgftransformxscale{\procheight}
   \pgftransformyscale{-\timeunit}

   \draw  (-1,0) -- +(0,6)  
          (0,0) -- +(0,6)
          (1,0) -- +(0,6)  ;

  \downjob{black}     {2.7}{0}{1}{1}{}
  \downjob{lightgray} {1.5}{1}{1}{1}{}
  \downjob{black}     {2.7}{2}{1}{1}{}
  \downjob{lightgray} {1.5}{3}{1}{1}{}

  \upjob{lightgray} {-1.5}{1}{1}{1}{}
  \upjob{lightgray} {-1.5}{3}{1}{1}{}

  \begin{scope}[font = \small,anchor=center]
    \node (lbi) at (3.1, .8){$\blockingjob{k}{}$};
    \node (lbip1) at (3., 2.8){$\blockingjob{k+1}{}$};

    \node (lupdi) at (-2., 1.8){$\dummyjob{i}{\rb}$};
    \node (lupdip1) at (-1.8, 3.8){$\dummyjob{i+1}{\rb}$};

    \node (ldowndi) at (1.8, 1.8){$\dummyjob{i}{\lb}$};
    \node (ldowndip1) at (2., 3.8){$\dummyjob{i+1}{\lb}$};
  \end{scope}
  \begin{scope}[font = \small,color=gray,anchor=center, overlay]
    \node at (4.5,.9) {$c_{k}$};
    \node at (4.5,2.9) {$c_{k+1}$};
  \end{scope}

\end{tikzpicture}
\hspace*{1cm}
\scalebox{.7}{
\begin{tikzpicture}\label{pic:part3-graph}
  \pgftransformyscale{-.6}
  
  \begin{scope}
  \clip (-2.5,-.5) rectangle (1.5,7.1);
 
  \gnode{green}     {c1}{-1}{0}{}
  \gnode{green}     {c2}{-1}{1}{}
  \gnode{red}       {c3}{-1}{2}{}
  \gnode{green}     {cc1}{-1.5}{0}{}
  \gnode{green}     {cc2}{-1.5}{1}{}
  \gnode{red}       {cc3}{-1.5}{2}{}
    \gnode{black}   {b1}{1}{1}{}

  \gnode{lightgray} {d1}{1}{3}{}
  \gnode{lightgray} {d2}{-1}{3}{}
  
  \gnode{red}       {c4}{-1}{4}{}
  \gnode{green}     {c5}{-1}{5}{}
  \gnode{red}       {c6}{-1}{6}{}
  \gnode{red}     {cc4}{-1.5}{4}{}
  \gnode{green}     {cc5}{-1.5}{5}{}
  \gnode{red}       {cc6}{-1.5}{6}{}
  \gnode{black}     {b2}{1}{5}{}

  \gnode{lightgray} {d3}{1}{7}{}
  \gnode{lightgray} {d4}{-1}{7}{}


  \foreach \i in {1,2,...,6}{
    \draw[gray] (c\i) -- (d1);
    \draw[gray] (cc\i) -- (d1);
  }
    
  \draw[gray] (d2) -- (b1);
  \draw[gray] (d2) -- (d1);
  \draw[gray] (d2) -- (b2);

  \foreach \i in {1,2,3}{
    \draw[gray] (c\i) -- (b1);
    \draw[gray] (cc\i) -- (b1);
  }

  \foreach \i in {4,5,6}{
    \draw[gray] (c\i) -- (b2);
    \draw[gray] (c\i) -- (d3);
    \draw[gray] (cc\i) -- (b2);
    \draw[gray] (cc\i) -- (d3);
  }

  \draw[gray] (d4) -- (b2);
  \draw[gray] (d4) -- (d3);
  \end{scope} 

  \begin{scope}[font = \Large,anchor=center, overlay]
    \node (lbi) at (1.5, 1){$\blockingjob{k}{}$};
    \node (lbip1) at (1.7, 5){$\blockingjob{k+1}{}$};

    \node (ta) at (-2.1, 0){$\truejob{a,*}{\rb}$};
    \node (tb) at (-2.1, 1){$\truejob{b,*}{\rb}$};
    \node (fc) at (-2.1, 2){$\falsejob{c,*}{\rb}$};

    \node (fd) at (-2.1, 4){$\falsejob{d,*}{\rb}$};
    \node (te) at (-2.1, 5){$\truejob{e,*}{\rb}$};
    \node (ff) at (-2.1, 6){$\falsejob{f,*}{\rb}$};
%
%
%
  \end{scope}
\end{tikzpicture}
}
\hspace*{1cm}
\begin{tikzpicture}\label{pic:part3-sched}
   \pgftransformxscale{\procheight}
   \pgftransformyscale{-\timeunit}

   \draw  (-1,0) -- +(0,6)  
          (0,0) -- +(0,6)
          (1,0) -- +(0,6)  ;

  \upjob{green}     {0}{0}{1}{1}{}
  \upjob{lightgray} {0}{1}{1}{1}{}
  \upjob{red}       {0}{2}{1}{1}{}
  \upjob{lightgray} {0}{3}{1}{1}{}

  \downjob{black}     {0}{0}{1}{1}{}
  \downjob{lightgray} {0}{1}{1}{1}{}
  \downjob{black}     {0}{2}{1}{1}{}
  \downjob{lightgray} {0}{3}{1}{1}{}

\end{tikzpicture} 
  \end{center}
  \caption{Part $\tpart{3}$ for $c_k = (x_a \vee x_b \vee \bar x_c)$ and
           $c_{k+1} = (\bar x_d \vee x_e \vee \bar x_f)$. Note that each variable job can be adjacent
with more than one clause job (although this does not occur in the example).}
  \label{fig:part3}
\end{figure}
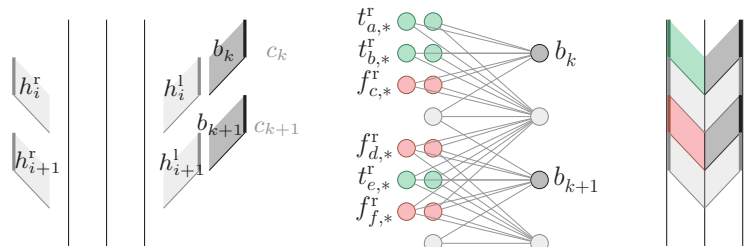

We are now in position to prove Lemma~\ref{lem:clauses}, i.e., in a schedule with makespan $\cmax$ the only jobs released before $A_3$ that can be scheduled in $P_3$ are one rightbound variable job for each clause such that the variable assignment satisfies the clause.

\begin{proof}[Proof of Lemma~\ref{lem:clauses}]
  By Lemma~\ref{lem:storage} all jobs released within~$\tpart{3}$ must start before the end of~$\tpart{3}$. Hence, each leftbound dummy and blocking job is forced to start at its release date. Therefore, due to the compatibilities, each rightbound dummy job must be scheduled directly when released. The only remaining~$\cln$ free slots can be filled with rightbound variable jobs -- exactly one free slot per clause~$c_k$ reserved for a variable job representing an assignment that satisfies~$c_k$.
\end{proof}

We proceed to explain the jobs released in parts $\tpart{1}$ and~$\tpart{2}$.

\subsubsection*{Jobs released in $\tpart{1}$.}
The first two parts are responsible for obtaining a correct assignment of the variables. In the first part, we release different types of jobs for each variable $x_i$, $i =0,\dots, |X|-1$, cf.\ Figure~\ref{fig:part1} with the following. For each variable $x_i$, $i = 0,\dots, |X|-1$, we release
\begin{itemize}
\item two rightbound true variable jobs $\truejob{i,1}{\rb}$, $\truejob{i,2}{\rb}$ at times $6i$ and $6i+1$, respectively,
\item two rightbound false variable jobs $\falsejob{i,1}{\rb}$, $\falsejob{i,2}{\rb}$ at times $6i+3$ and $6i+4$, respectively,
\item one leftbound true variable job $\smash{\truejob{i}{\lb}}$ at time~$6i+4$,
\item one leftbound false variable job $\smash{\falsejob{i}{\lb}}$ at time $6i+1$,
\item two leftbound indefinite variable jobs $\indefjob{i}{\mtrue}$, $\indefjob{i}{\mfalse}$ at times $6i+1$ and $6i+4$, respectively.
\item two leftbound blocking jobs $\blockingjob{i}{\mtrue}$, $\blockingjob{i}{\mfalse}$ at times $6i$ and $6i+3$, respectively.
\item two leftbound dummy jobs $\dummyjob{{i}}{\lb\mtrue}$, $\dummyjob{i}{\lb\mfalse}$ at times $6i+2$ and $6i+5$, respectively.
\item two rightbound dummy jobs $\dummyjob{i}{\rb\mtrue}$, $\dummyjob{i}{\rb\mfalse}$ at times $6i+2$ and $6i+5$, respectively.
\end{itemize}
In the following, we write $\truejs{\rb} = \{ \truejob{i,1}{\rb}, \truejob{i,2}{\rb} \mid x_i\in \set{X}\}$ for the set of rightbound true variable jobs, $\falsejs{\rb} = \{ \falsejob{i,1}{\rb}, \falsejob{i,2}{\rb} \mid x_i\in \set{X}\}$ for the set of rightbound false variable jobs and $\indefjs = \{ \indefjob{i}{\mtrue}, \indefjob{i}{\mfalse} \mid x_i\in \set{X}\}$ for the set of indefinite jobs.

  The compatibility graph~$G_1$ is defined such that
  \begin{itemize}
  \item	each blocking job~$\blockingjob{i}{\mtrue}$ is compatible with the corresponding true variable jobs $\truejob{i,1}{\rb}$ and~$\truejob{i,2}{\rb}$,
  \item each blocking job $\blockingjob{i}{\mfalse}$ is compatible with with the corresponding false variable jobs $\falsejob{i,1}{\rb}$ and~$\falsejob{i,2}{\rb}$,
  \item each indefinite job~$\indefjob{i}{\mtrue}$ is compatible with the corresponding rightbound true variable jobs~$\truejob{i,1}{\rb}$ and~$\truejob{i,2}{\rb}$
  \item each indefinite job $\indefjob{i}{\mfalse}$ is compatible with the corresponding rightbound false variable jobs $\falsejob{i,1}{\rb}$ and~$\falsejob{i,2}{\rb}$.
  \item each dummy job~$h$ is compatible with the opposed jobs released in~$[r_{h}-1, r_{h}+1]$,
  \item none of the remaining pairs of jobs are compatible.
  \end{itemize}
  
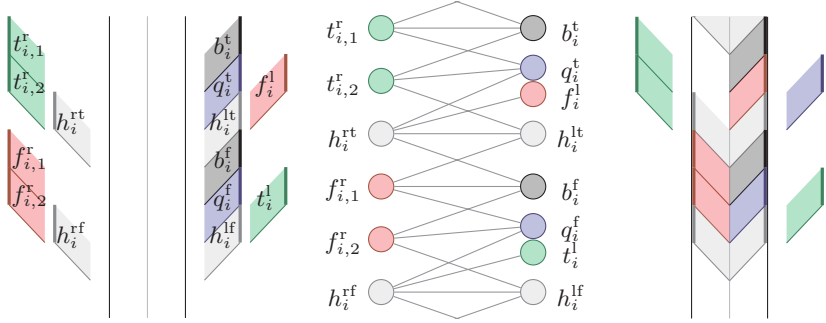
\begin{figure}
  \begin{center}
    \begin{tikzpicture}\label{pic:part1-released}
	 \pgftransformxscale{\procheight}
	 \pgftransformyscale{-\timeunit}
  

	 \draw  (-1,0) -- +(0,8) (1,0) -- +(0,8);
   \draw[draw=black!30](0,0) -- +(0,8);
  
  \upjob{green}{-2.7}{0}{1}{1}{}
  \upjob{green}{-2.7}{1}{1}{1}{}
  \upjob{lightgray}{-1.5}{2}{1}{1}{}
  \upjob{red}{-2.7}{3}{1}{1}{}
  \upjob{red}{-2.7}{4}{1}{1}{}
  \upjob{lightgray}{-1.5}{5}{1}{1}{}

  \begin{scope}[font = \small,anchor=center]
    \node (lt1) at (-3.1, .8){$\truejob{i,1}{\rb}$};
    \node (lt2) at (-3.1, 1.8){$\truejob{i,2}{\rb}$};
    \node (ld1) at (-2., 2.8){$\dummyjob{i}{\rb\mtrue}$};
    \node (lf1) at (-3.1, 3.8){$\falsejob{i,1}{\rb}$};
    \node (lf2) at (-3.1, 4.8){$\falsejob{i,2}{\rb}$};
    \node (ld2) at (-2., 5.8){$\dummyjob{i}{\rb\mfalse}$};
  \end{scope}

  \downjob{black}{1.5}{0}{1}{1}{}
  \downjob{blue}{1.5}{1}{1}{1}{}
  \downjob{red}{2.7}{1}{1}{1}{}
  \downjob{lightgray}{1.5}{2}{1}{1}{}
  \downjob{black}{1.5}{3}{1}{1}{}
  \downjob{blue}{1.5}{4}{1}{1}{}
  \downjob{green}{2.7}{4}{1}{1}{}
  \downjob{lightgray}{1.5}{5}{1}{1}{}

  \begin{scope}[font = \small,anchor=center]
    \node (lb1) at (2., .8){$\blockingjob{i}{\mtrue}$};
    \node (lb2) at (2., 3.8){$\blockingjob{i}{\mfalse}$};
    \node (ld3) at (2., 2.8){$\dummyjob{i}{\lb\mtrue}$};
    \node (ld4) at (2., 5.8){$\dummyjob{i}{\lb\mfalse}$};
    \node (lit) at (2., 1.8){$\indefjob{i}{\mtrue}$};
    \node (lf) at  (3.1, 1.8){$\falsejob{i}{\lb}$};
    \node (lif) at (2., 4.8){$\indefjob{i}{\mfalse}$};
    \node (lt) at (3.1, 4.8){$\truejob{i}{\lb}$};
  \end{scope}
\end{tikzpicture}
\hspace*{.5cm}
\begin{tikzpicture}\label{pic:part1-graph}
  \pgftransformyscale{-.7}
  
  \begin{scope}
  \clip (-1.5,-.5) rectangle (1.5,5.5);

  \gnode{lightgray} {dm2}{-1}{-1}{};
  \gnode{lightgray} {dm4}{1}{-1}{};

  \gnode{green}     {t1}{-1}{0}{}
  \gnode{green}     {t2}{-1}{1}{}
  \gnode{lightgray} {d1}{-1}{2}{};
  \gnode{black}     {b1}{1}{0}{};
  \gnode{blue}      {it}{1}{0.75}{};
  \gnode{red}       {lf}{1}{1.25}{};
  \gnode{lightgray} {d3}{1}{2}{};

  \gnode{black}     {b2}{1}{3}{};
  \gnode{blue}      {if}{1}{3.75}{};
  \gnode{green}      {lt}{1}{4.25}{};
  \gnode{lightgray} {d4}{1}{5}{};
  \gnode{red}       {f1}{-1}{3}{};
  \gnode{red}       {f2}{-1}{4}{};
  \gnode{lightgray} {d2}{-1}{5}{};

  \gnode{green}     {t3}{-1}{6}{};
  \gnode{black}     {b3}{1}{6}{};

  \draw[gray] (dm2) -- (b1)
              (t1) -- (dm4)
              (t1) -- (b1)
              (t2) -- (b1)
              (t2) -- (d3)
              (t1) -- (it)
              (t2) -- (it)
              (d1) -- (it)
              (d1) -- (lf)
              (d1) -- (d3)
              (d1) -- (b2)
              (f1) -- (d3)
              (f1) -- (b2)
              (f2) -- (b2)
              (f2) -- (d4)
              (t3) -- (d4)
              (f1) -- (if)
              (f2) -- (if)
              (d2) -- (if)
              (d2) -- (lt)
              (d2) -- (d4)
              (d2) -- (b3);
  \end{scope}

  \begin{scope}[font = \small,anchor=center, overlay]
    \node (lt1) at (-1.5, .1){$\truejob{i,1}{\rb}$};
    \node (lt2) at (-1.5, 1.1){$\truejob{i,2}{\rb}$};
    \node (lf1) at (-1.5, 3.1){$\falsejob{i,1}{\rb}$};
    \node (lf2) at (-1.5, 4.1){$\falsejob{i,2}{\rb}$};
    \node (ld1) at (-1.5, 2.1){$\dummyjob{i}{\rb\mtrue}$};
    \node (ld2) at (-1.5, 5.1){$\dummyjob{i}{\rb\mfalse}$};
    
    \node (lb1) at (1.5, .1){$\blockingjob{i}{\mtrue}$};
    \node (lb2) at (1.5, 3.1){$\blockingjob{i}{\mfalse}$};
    \node (lit) at (1.5, 0.8){$\indefjob{i}{\mtrue}$};
    \node (llf) at (1.5, 1.3){$\falsejob{i}{\lb}$};
    \node (lif) at (1.5, 3.8){$\indefjob{i}{\mfalse}$};
    \node (llt) at (1.5, 4.3){$\truejob{i}{\lb}$};
    \node (ld1) at (1.5, 2.1){$\dummyjob{i}{\lb\mtrue}$};
    \node (ld2) at (1.5, 5.1){$\dummyjob{i}{\lb\mfalse}$};
  \end{scope}
\end{tikzpicture}
\hspace*{.5cm}
\begin{tikzpicture}\label{pic:part1-sched}
   \pgftransformxscale{\procheight}
   \pgftransformyscale{-\timeunit}
  

  \clip (-2.7,0) rectangle (2.7,8.);

   \draw  (-1,0) -- +(0,8) (1,0) -- +(0,8);
   \draw[draw=black!30](0,0) -- +(0,8);
  
  \upjob{lightgray}{0}{-1}{1}{1}{}

  \upjob{green}{-1.5}{0}{1}{1}{}
  \upjob{green}{-1.5}{1}{1}{1}{}
  \upjob{lightgray}{0}{2}{1}{1}{}
  \upjob{red}{0}{3}{1}{1}{}
  \upjob{red}{0}{4}{1}{1}{}
  \upjob{lightgray}{0}{5}{1}{1}{}

  \downjob{lightgray}{0}{-1}{1}{1}{}

  \downjob{black}{0}{0}{1}{1}{}
  \downjob{lightgray}{0}{2}{1}{1}{}
  \downjob{blue}{1.5}{1}{1}{1}{}
  \downjob{red}{0}{1}{1}{1}{}
  \downjob{black}{0}{3}{1}{1}{}
  \downjob{lightgray}{0}{5}{1}{1}{}
  \downjob{blue}{0}{4}{1}{1}{}
  \downjob{green}{1.5}{4}{1}{1}{}
\end{tikzpicture} 
  \end{center}
  \caption{Released jobs per variable~$x_i$ in $\tpart{1}$, the corresponding
compatibilities given by~$G_1$ and a scheduled example for a true variable assignment.}
  \label{fig:part1}
\end{figure}

\subsubsection*{Jobs released in $\tpart{2}$.}
In the second part (Figure~\ref{fig:part2}), there is room for exactly one indefinite job and one leftbound variable job per variable. This is realized by a set of rightbound blocking jobs~$\blockingjs{2}=\{ \blockingjob{i,1}{},\blockingjob{i,2}{} \mid x_i\in \set{X}\}$ where each blocking job $\blockingjob{i,1}{}$ is released at time~$\tstart{2} + 4i$ and is compatible with the corresponding two indefinite jobs~$\indefjob{i}{\mtrue}$ and~$\indefjob{i}{\mfalse}$. Each blocking job $\blockingjob{i,2}{}$ is released at time $\tstart{2} + 4i+2$ and is compatible with the corresponding two leftbound variable jobs~$\falsejob{i}{\lb}$ and~$\truejob{i}{\lb}$.
The gaps between two subsequent released blocking jobs are closed in both directions by dummy jobs~$\dummyjs{2} = \{\dummyjob{i,1}{\rb}, \dummyjob{i,1}{\lb} \mid x_i \in \set{X}\} \cup \{\dummyjob{i,2}{\rb}, \dummyjob{i,2}{\lb} \mid x_i\in \set{X}\}$ released at times $\tstart{2} + 4i + 1$ and~$\tstart{2} + 4i + 3$, respectively. Each dummy job is compatible with all jobs of~$\indefjs, \truejs{\lb}, \falsejs{\lb}$, or~$\blockingjs{2}$ and the corresponding opposed dummy job released concurrently.

We are now in position to prove Lemma~\ref{lem:well-defined-assignment}.

\label{prf:well-defined-assignment}
\begin{proof}[Proof of Lemma~\ref{lem:well-defined-assignment}]
  By Lemmas \ref{lem:clauses} and \ref{lem:storage}, each rightbound dummy and blocking job of~$\dummyjs{2}$ and~$\blockingjs{2}$ must be scheduled before the end of~$\tpart{2}$ and hence, directly at its release. By the given compatibilities this is also true for the leftbound dummy jobs of~$\dummyjs{2}$. Therefore, there are exactly two open slots per variable~$x_i$, one reserved for the two corresponding indefinite jobs~$\indefjob{i}{\mtrue}, \indefjob{}{\mfalse}$ and one for the two corresponding leftbound variable jobs~$\falsejob{i}{\lb}, \truejob{i}{\lb}$. Since no further space is left, for both pairs exactly one can be scheduled within~$\tpart{2}$. The remaining one must be completed already by the end of~$\tpart{1}$.

  Also, for the first part, we can conclude that no blocking and no dummy job released in~$\tpart{1}$ can start after the end of~$\tpart{1}$.
  Consider now one variable~$x_i$ and assume that no job corresponding to~$x_i$ can start within part~$\tpart{1}$ after~$6i+5$. This assumption holds obviously for~$x_n$. Then,~$\dummyjob{i}{\rb\mfalse}$ and~$\dummyjob{i}{\lb\mfalse}$, the latest released jobs corresponding to~$x_i$, must both start at their release.

  If the leftbound job~$\truejob{i}{\lb}$ is scheduled within part~$\tpart{1}$ it must be scheduled at its release and hence~$\falsejob{i,1}{\rb}$ and~$\falsejob{i,2}{\rb}$ must be postponed to the next parts. In this case, also the second blocking job~$\blockingjob{i}{\mfalse}$ as well as the first two dummy jobs~$\dummyjob{i}{\rb\mtrue}$  and~$\dummyjob{i}{\lb\mtrue}$ are forced to start at their release, consequently also~$\blockingjob{i}{\mtrue}$. In this case it is not possible anymore to schedule~$\indefjob{i}{\mfalse}$ within part~$\tpart{1}$. For this reason, the counter part~$\indefjob{i}{\mtrue}$ must be scheduled at its release time and the leftbound~$\falsejob{i}{\lb}$ must be postponed. With this, there is exactly one free slot for~$\truejob{i,2}{\rb}$ and one for~$\truejob{i,1}{\rb}$.

  If, on the other hand, the leftbound job~$\truejob{i}{\lb}$ is scheduled after part~$\tpart{1}$, we have to schedule~$\falsejob{i}{\lb}$ within part~$\tpart{1}$. Due to the conflicts with~$\dummyjob{i}{\rb\mfalse}$, the start time of~$\falsejob{i}{\lb}$ and the blocking and dummy jobs in between must in particular be scheduled at their release. For that reason~$\indefjob{i}{\mtrue}$ must be postponed and~$\indefjob{i}{\mfalse}$ must be scheduled at its release. Hence, also the rightbound true jobs~$\truejob{i}{\rb}$ and~$\truejob{i}{\rb}$ must be postponed and there are exactly two slots for the two false jobs.

In both cases, the scheduled leftbound jobs ensure that no earlier released variable job can start after~$6(i-1)+5$. Hence, it can be concluded by induction that, for each variable, either all corresponding false jobs or all corresponding true jobs must be scheduled after part~$\tpart{1}$. And since, by Lemmas \ref{lem:clauses} and \ref{lem:storage}, at least~$2n$ rightbound variable jobs must be scheduled within~$\tpart{1}$ the free spots ensure that exactly the two counter parts are scheduled within~$\tpart{1}$.
\end{proof}

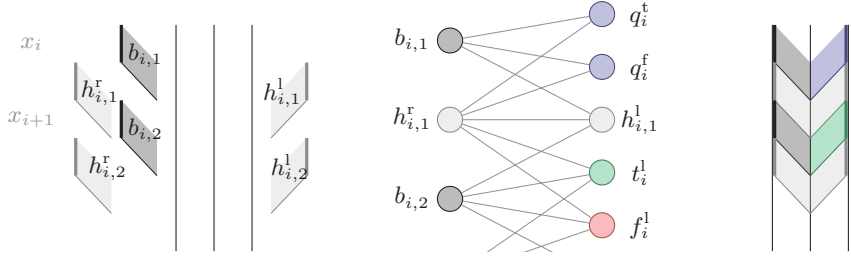
\begin{figure}[t]
  \begin{center}
    \begin{tikzpicture}\label{pic:part2-released}
   \pgftransformxscale{\procheight}
   \pgftransformyscale{-\timeunit}

   \draw  (-1,0) -- +(0,6)  
          (0,0) -- +(0,6)
          (1,0) -- +(0,6)  ;

  \upjob{black}       {-1.5}{0}{1}{1}{}
  \upjob{lightgray}   {-2.7}{1}{1}{1}{}
  \downjob{lightgray} {1.5} {1}{1}{1}{}

  \upjob{black}       {-1.5}{2}{1}{1}{}
  \upjob{lightgray}   {-2.7}{3}{1}{1}{}
  \downjob{lightgray} {1.5} {3}{1}{1}{}

  \begin{scope}[font = \small,anchor=center]
    \node (lbi) at (-1.8, .8){$\blockingjob{i,1}{}$};
    \node (lbip1) at (-1.8, 2.8){$\blockingjob{i,2}{}$};

     \node (lupdi) at (-3., 1.8){$\dummyjob{i,1}{\rb}$};
    \node (lupdip1) at (-2.8, 3.8){$\dummyjob{i,2}{\rb}$};

    \node (ldowndi) at (1.8, 1.8){$\dummyjob{i,1}{\lb}$};
    \node (ldowndip1) at (2., 3.8){$\dummyjob{i,2}{\lb}$};
  \end{scope}
  \begin{scope}[font = \small,color=gray,anchor=center, overlay]
    \node at (-4.8,.5) {$x_i$};
    \node at (-4.8,2.5) {$x_{i+1}$};
  \end{scope}

\end{tikzpicture}
\hspace*{1cm}
\begin{tikzpicture}\label{pic:part2-graph}
  \pgftransformyscale{-.7}

  \begin{scope}
  \clip (-1.5,-.5) rectangle (2,4.5);

  
  \gnode{black}     {b1}{-1}{0.5}{}
  \gnode{blue}      {it}{1}{0}{}
  \gnode{blue}      {if}{1}{1}{}
  \gnode{lightgray} {d1}{-1}{2}{}
  \gnode{lightgray} {d2}{1}{2}{}

  \gnode{black}     {b2}{-1}{3.5}{}
  \gnode{green}      {lt}{1}{3}{}
  \gnode{red}      {lf}{1}{4}{}

  \gnode{lightgray} {d3}{-1}{5}{}
  \gnode{lightgray} {d4}{1}{5}{}

  \draw[gray] (b1) -- (it)
              (b1) -- (if)
              (d1) -- (it)
              (d1) -- (if)
              (b1) -- (d2)
              (d1) -- (d2)
              (b2) -- (d2)
              (b2) -- (lt)
              (b2) -- (lf)
              (d1) -- (lt)
              (d1) -- (lf)
              (d3) -- (lt)
              (d3) -- (lf)
              (b2) -- (d4);
  \end{scope}

  \begin{scope}[font = \small,anchor=center, overlay]
    \node (lbi) at (-1.5, .5){$\blockingjob{i,1}{}$};
    \node (lbip1) at (-1.5, 3.5){$\blockingjob{i,2}{}$};

    \node (lupdi) at (-1.5, 2){$\dummyjob{i,1}{\rb}$};

    \node (ldowndi) at (1.5, 2){$\dummyjob{i,1}{\lb}$};

    \node (li1) at (1.5, 0){$\indefjob{i}{\mtrue}$};
    \node (li2) at (1.5, 1){$\indefjob{i}{\mfalse}$};

    \node (llt) at (1.5, 3){$\truejob{i}{\lb}$};
    \node (llf) at (1.5, 4){$\falsejob{i}{\lb}$};
  \end{scope}
\end{tikzpicture}
\hspace*{1cm}
\begin{tikzpicture}\label{pic:part2-sched}
   \pgftransformxscale{\procheight}
   \pgftransformyscale{-\timeunit}

   \draw  (-1,0) -- +(0,6)  
          (0,0) -- +(0,6)
          (1,0) -- +(0,6)  ;

  \upjob{black}       {0}{0}{1}{1}{}
  \upjob{lightgray}   {0}{1}{1}{1}{}
  \downjob{lightgray} {0}{1}{1}{1}{}
  \downjob{blue}      {0}{0}{1}{1}{}

  \upjob{black}       {0}{2}{1}{1}{}
  \upjob{lightgray}   {0}{3}{1}{1}{}
  \downjob{lightgray} {0}{3}{1}{1}{}
  \downjob{green}      {0}{2}{1}{1}{}

\end{tikzpicture} 
  \end{center}
  \caption{Part $\tpart{2}$ creates a structure of blocking and dummy jobs with respective compatibilities that create space for exactly one indefinite job per variable~$x_i$.}
  \label{fig:part2}
\end{figure}

We can conclude the following claim and hence, Theorem~\ref{thm:graph-hardness}.

\textit{Claim.} There is a satisfying assignment for the given~$\rsat{\leq\!3}{3}$ instance if and only if there is a feasible schedule for the constructed scheduling instance with makespan~$\cmax = \tstart{5}+1$.

\label{prf:graph-hardness}
\begin{proof}[Proof of Theorem~\ref{thm:graph-hardness}]
If there is a schedule with makespan~$\cmax$ we can apply Lemmas~\ref{lem:well-defined-assignment} to \ref{lem:storage}. Within the resulting schedule we can therefore be sure that~$\cln$ rightbound variable jobs are scheduled within the clause part. Since by Lemma~\ref{lem:well-defined-assignment} the assignment of each variable is well defined we get by Lemma~\ref{lem:clauses} a satisfying truth assignment for the clauses.

If on the other hand a satisfying truth assignment is given, the described schedule with demanded makespan can be created in straight-forward manner, by postponing the assignment jobs corresponding to the truth assignment and scheduling all other jobs within the part they are released in (or in part~$\tpart{2}$ in the case of leftbound variable jobs or indefinite jobs).
\end{proof}

\subsection{$\NP$-Hardness of Total Completion Time Minimization}

\corGraphHardnessSum*

We give an analogous reduction as for Theorem~\ref{thm:graph-hardness}. Note, that solutions optimal for the total completion time and those optimal for the total waiting time are equivalent. Hence, it is sufficient to prove the hardness for the latter. The goal is to enforce the same structure as for makespan minimization when minimizing the total waiting time. To do so, we start by calculating an upper bound of  the resulting waiting time.


We can trivially bound the total waiting time of a schedule that achieves a makespan of $\cmax$ by~$W=|\jobs|\cdot\cmax = |\jobs| \cdot (A_5+1)$, where~$\jobs$ is the set of all jobs in our construction.
With this polynomial bound we can extend the construction of a scheduling instance for a given~\rsat{\leq\!3}{3} instance by part~$\tpart{5}$ with~$W+1$ further leftbound blocking jobs $\blockingjs{5}=\{\blockingjob{i}{} \mid i=0, \dots, W-1\}$ with release date~$\tstart{5}+i+1$ for each~$\blockingjob{i}{5}\in\blockingjs{5}$ that are not compatible to any of the previous jobs.

  \textit{Claim.} There is a satisfying truth assignment for the given~\rsat{\leq\!3}{3} instance if and only if there is a feasible schedule for the constructed scheduling instance with total waiting time of at most~$W$.

\label{prf:graph-hardness-sum}
\begin{proof}[Proof of Theorem~\ref{thm:graph-hardness-sum}]
Assume first that there is a satisfying assignment for the \rsat{\leq\!3}{3} instance. In this case, there is a schedule where no job released in the first four parts starts processing after~$\tstart{5}$ and hence the resulting total waiting time does not exceed~$W$. 

Assume on the other hand, that there is a solution for the constructed scheduling instance whose objective does not exceed~$W$. For such a solution, either all jobs released in the first four parts start before~$\tstart{5}$ or their is at least one starting later. In the first case, we get, by Lemmas~\ref{lem:storage} to~\ref{lem:well-defined-assignment}, a schedule together with a satisfying truth assignment with waiting time bounded by~$W$. 

In the second case each postponed job~$j$ with starting time~$\start{j}'$ increases the already existing waiting time by at least an amount of~$(\start{j}'-\tstart{5}) + W+1-(\start{j}'-\tstart{5})=W+1$. Hence, the first case applies.
\end{proof}

\subsection{$\APX$-Hardness}
\label{appendix:apx_hardness}

In this section, we show the $\APX$-hardness of bidirectional scheduling. As for the $\NP$-hardness proof, it is convenient to first prove the $\APX$-hardness for minimizing the makespan before turning to the minimization of the total completion time.
 
\begin{restatable}{theorem}{thmGraphApxHardness}
\label{thm:graph-apx-hardness}
Minimizing the makespan for~$m=1$ with an arbitrary compatibility graph~$G_1$ is $\APX$-hard even if~$\proc{j}=\transit{1}=1$ for each~$j\in\jobs$.
\end{restatable}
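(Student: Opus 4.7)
The plan is a gap-preserving reduction from the $\APX$-hard variant of \noun{Max-3-Sat} studied by Berman et al.~\cite{BermanKS2003} in which every literal occurs exactly twice (so $3|C|=4|X|$), and for which distinguishing satisfiable formulas from those in which at most a $1015/1016$ fraction of the clauses is simultaneously satisfiable is $\NP$-hard.

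I would output exactly the scheduling instance built in the proof of Theorem~\ref{thm:graph-hardness}, and establish a tight linear relationship between the number of clauses left unsatisfied by an optimum assignment and the optimum makespan. The upper bound direction is immediate: given an assignment satisfying $|C|-k$ clauses, run the canonical schedule of Theorem~\ref{thm:graph-hardness} (which places $|C|-k$ rightbound variable jobs in clause gaps of $\tpart{3}$ and $2|X|-|C|$ in $\tpart{4}$) and then sequentially append the $k$ displaced rightbound variable jobs after time $A_5$, yielding $\cmax \leq A_5+1+ck$ for an absolute constant $c$.

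For the matching lower bound, I would re-examine the chain of Lemmas~\ref{lem:well-defined-assignment}--\ref{lem:storage} under the weaker hypothesis $\cmax \leq A_5+1+k_0$ for $k_0<|C|/1016$. The pairwise incompatibility of the blocking jobs in $\blockingjs{1}\cup\cdots\cup\blockingjs{4}$ with almost everything, combined with their unit-spaced release dates, means that any deviation of a structural job from its canonical slot cascades into a makespan penalty of at least one unit (by a local charging argument: the delayed job either forces a later release-date conflict or directly displaces a unique variable job). Hence at most $O(k_0)$ structural deviations can occur, the rigid frame of Theorem~\ref{thm:graph-hardness} remains almost intact, and we can still read off a well-defined truth assignment~$\alpha$ from the schedule. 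By construction of the compatibility edges, the $\tpart{3}$-gap of clause $c_j$ accepts a rightbound variable job if and only if $\alpha$ satisfies $c_j$, so the counting argument of Lemma~\ref{lem:storage} forces at least $k(\alpha)-O(k_0)$ rightbound variable jobs into the overflow region beyond $A_5$, each contributing at least one extra makespan unit because the transit constraint prevents opposing-direction jobs from overlapping there.

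Combining both bounds with $A_5+1 = 12|X|+|C|+1 = 10|C|+1$, the additive gap of $|C|/1016$ in the number of unsatisfiable clauses turns into a multiplicative makespan gap of $1+\Omega(1)$, which yields $\APX$-hardness. The main technical obstacle I anticipate is the robustness claim for the structural lemmas under positive makespan slack; making it precise requires carefully splitting the slack budget $k_0$ into a part absorbed by structural deviations (which must stay $O(k_0)$) and a part absorbed by the overflow variable jobs (which correspond one-to-one to unsatisfied clauses, up to the structural deviations), and then choosing the constants so the first part cannot eat into the $|C|/1016$ gap.
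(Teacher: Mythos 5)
There is a genuine gap, and it sits exactly where the weight of an $\APX$-hardness proof has to lie: in the lower-bound direction. Your plan is to rerun Lemmas~\ref{lem:well-defined-assignment}--\ref{lem:storage} under a relaxed makespan budget $A_5+1+k_0$ and charge every ``structural deviation'' one unit of makespan, so that a well-defined assignment $\alpha$ can still be read off. You flag this yourself as the main obstacle, and it is not a technicality that can be deferred: those lemmas are proved by exact packing (``each blocking job \emph{must} start at its release date, hence\dots''), and under slack the cascade of forced starts collapses. The dangerous schedules are precisely those in which a variable is assigned \emph{inconsistently} --- e.g.\ one true and one false rightbound variable job are both pushed past part $\tpart{2}$ and each used to satisfy a different clause gap in $\tpart{3}$. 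Such a schedule gains satisfied clauses without any single job visibly ``overflowing,'' so there is no well-defined $\alpha$ to extract, and your local charging argument has nothing obvious to charge the gain to. Quantifying the hidden cost of inconsistency is the actual content of the proof, and the sketch does not supply it.

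The paper's proof takes a structurally different route that resolves exactly this issue. First, it modifies the instance: for each variable it releases two additional rightbound \emph{virtual jobs} compatible with the leftbound jobs of that variable, so that a perfectly packed schedule pairs a leftbound with a rightbound job in every slot and has makespan exactly $12|X|+|C|+1$. (Your proposal keeps the unmodified Theorem~\ref{thm:graph-hardness} instance, for which this accounting is not tight.) Second, instead of arguing that the rigid structure survives, it restricts to integral schedules, calls a job \emph{unmatched} if no opposite-direction job runs at the same time, and proves by a global counting argument --- partitioning variables into sets $X_{k,l}$ according to how many of their true/false rightbound jobs reach clause blocking jobs --- that at least $\tilde c$ pairwise incompatible jobs are unmatched, where $\tilde c$ is the number of unsatisfiable clauses. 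Inconsistent variables are handled head-on: each one provably leaves one or two leftbound jobs in part $\tpart{1}$ unmatched, which pays for the extra clause gaps it fills. Since pairwise incompatible unmatched jobs each cost a time unit, the makespan is at least $12|X|+|C|+1+\tilde c$ with no appeal to the canonical structure at all. To repair your proof you would either need to carry out this kind of matching/accounting argument or find a genuinely different way to bound the profit of inconsistent assignments; as written, the reduction is not established.
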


\begin{proof}
We reduce from a specific variant of \noun{Max-3-Sat} which is $\NP$-hard to approximate to within a factor of $1016/1015$, see Berman et al.~\cite{BermanKS2003}. An instance of \noun{Symm-4-Occ-Max-3-Sat} is given by a Boolean formula with a set $C$ of clauses of size three over a set of variables $X$, where both the positive and the negative literal of each variable $x_i \in X$ appears in exactly two clauses. Berman et al.~\cite{BermanKS2003} construct a family of instances of \noun{Symm-4-Occ-Max-3-Sat} with $1016n$ clauses, where $n \in \N$. They show that for any $\delta \in (0,1/2)$, it is $\NP$-hard to distinguish between the ``bad'' instances where at most $(1015+\delta)n$ clauses can be satisfied and the ``good'' instances where at least $(1016-\delta)n$ instances can be satisfied.

Let $\phi$ be a formula of the above family. Based on $\phi$, we use the same construction as in Theorem~\ref{thm:graph-hardness} with one small adaption: In the first part, for each variable $x_i$, $i = 0,\dots, |X|-1$, we release additionally two virtual jobs $v_{i,1}$ and $v_{i,2}$ at times $6i$ and $6i+1$, respectively. Both jobs are compatible with all leftbound blocking, dummy and variable jobs of the same variable. We claim that for this bidirectional scheduling instance the optimal makespan is $12|X|+|C| +1 + \tilde{c}$ if and only if the minimum number of unsatisfied clauses of $\phi$ is $\tilde{c}$. Assuming the correctness of the claim, we derive that for a good instance with $1016n$ clauses, the makespan is at most $12|X| + 1016n + 1 + \delta n$. Using the identity $|X| = 3|C|/4 = 3\cdot 1016 n / 4$, we can bound the makespan from above by $(10160+ \delta)n +1$. For bad instances, on the other hand, the makespan is at least $(10161 - \delta)n$, i.e., the optimal makespan cannot be approximated by a factor of $10161/10160 \approx 1.000098$. 

It is left to prove the correctness of the claim. It is easy to see that the optimal makespan is bounded from above by $12|X|+|C| + 1 + \tilde{c}$ by a small adaption of the arguments of the proof of Theorem~\ref{thm:graph-hardness}. To see this, fix a variable assignment satisfying all but $\tilde{c}$ clauses. In parts one and two (where the variable assignments are fixed) we schedule all jobs as in the proof of Theorem~\ref{thm:graph-hardness} with respect to the variable assignment. Additionally, the leftbound variable jobs not scheduled in the first part, leave a gap in the schedule that is a perfect fit for the additional virtual jobs, see also the right illustration in Figure~\ref{fig:part1}. In the third part, we schedule one satisfying variable for each clause that is satisfied. In the forth part, we schedule any $2|X|-|C|$ variable jobs left over from previous parts. By construction, at the end of the forth part, we are left with $\tilde{c}$ variable jobs (that could not be matched to any clause job in the third part). Scheduling them one after another, we obtain the claimed makespan of $12|X| + |C| + 1 + \tilde{c}$.

To see that $12|X|+|C| + 1 + \tilde{c}$ is lower bound on the optimal makespan, we argue using the concept of matched jobs. First, note that there is always an optimal schedule in which all jobs are processed at an integral point in time. Otherwise, we could move the first job scheduled at a non-integral point in time to the previous integral point in time without violating any constraints. Iterating this process, we obtain a schedule in which all jobs are processed at integral times, as claimed. Given such an integral schedule, we call a job processed at time $t$ \emph{matched}, if it is leftbound and there is another rightbound job processed at time $t$, or vice versa. Otherwise the job is called unmatched.

For the following arguments, fix an integral schedule. We proceed to argue that there are at least $\tilde{c}$ unmatched jobs that are mutually incompatible.

First consider the (clause) blocking jobs released in part three. For $k,l \in \{0,1,2\}$, let $X_{k,l}$ be the set of variables $x_i$ such that $k$ rightbound true variable jobs $\truejob{i,1}{\rb}$, $\truejob{i,2}{\rb}$ are matched to a (clause) blocking job and $l$ rightbound false variable jobs $\falsejob{i,1}{\rb}$, $\falsejob{i,2}{\rb}$ are matched to a (clause) blocking job. 
Intuitively, the sets $X_{1,1}, X_{2,1}, X_{1,2}, X_{2,2}$ contain the variables that are not set consistently according to a well-defined truth assignment.
Using that at most $|C|-\tilde{c}$ clauses of $\phi$ can be satisfied, we derive that at least
\begin{align}\label{eq:unmatched_jobs}
\tilde{c} - |X_{1,1}| - |X_{2,1}| - |X_{1,2}| - 2|X_{2,2}|
\end{align}
(clause) blocking jobs (or rightbound dummy jobs) are unmatched.

For any variable $x_i \in X_{2,2}$, the leftbound blocking jobs $\blockingjob{i}{\mtrue}$, $\blockingjob{i}{\mfalse}$, dummy jobs $\dummyjob{i}{\lb\mtrue}$, $\dummyjob{i}{\lb\mfalse}$, indefinite jobs $\indefjob{i}{\mtrue}$, $\indefjob{i}{\mfalse}$, and variable jobs $\falsejob{i}{\lb}$, $\truejob{i}{\rb}$ are matched by at most the two rightbound dummy jobs $\dummyjob{i}{\rb\mtrue}$, $\dummyjob{i}{\rb\mtrue}$ and the two virtual jobs $v_{i,1}$, $v_{i,2}$ released in part 1 as well as the two blocking jobs $\blockingjob{i,1}{}$, $\blockingjob{i,2}{}$ released in part 2, so that in the end, at least two rightbound jobs are left unmatched. Equivalently, for any variable $x_i \in X_{1,2} \cup X_{2,1}$ at least one of the rightbound jobs above is left unmatched.

For any variable $x_i \in X_{1,1}$, consider the leftbound variable jobs $\falsejob{i}{\lb}$ and $\truejob{i}{\lb}$ as well as the leftbound indefinite jobs $\indefjob{i}{\mtrue}$ and $\indefjob{i}{\mfalse}$. At most one indefinite job and one variable job most can be matched with the blocking jobs $\blockingjob{i,1}{}$ and $\blockingjob{i,2}{}$ released in part 2. The other two jobs, say the true variable job $\truejob{i}{\lb}$ and the indefinite job $\indefjob{i}{\mtrue}$, are only compatible with the rightbound variable jobs, the virtual jobs and the dummy jobs jobs, leaving at least one job unmatched. Using \eqref{eq:unmatched_jobs}, we may conclude that the total number of unmatched jobs is at least $\tilde{c}$.

As argued above, the unmatched jobs are either (clause) blocking jobs released in part three or remainders of the different types of leftbound jobs associated with variables and released in the first part. As none of them are compatible, the makespan is at least $12|X| + |C| + 1 + \tilde{c}$, as claimed.
\end{proof}

We are now ready to prove the $\APX$-hardness of the minimization of the total completion time.

\GraphAPXHardnessSum*

\begin{proof}[Sketch]
Let $\phi$ be a formula with $1016n$ clauses for some $n \in \N$ with $\tilde{c}$ unsatisfiable clauses, as in Berman et al.~\cite{BermanKS2003} (cf.\ proof of Theorem~\ref{thm:graph-apx-hardness}). We use a similar idea as in the proof of Theorem~\ref{thm:graph-hardness-sum}, i.e., we use the same construction as in the reduction for the makespan but add an additional set of $M$ leftbound blocking jobs $B_5 = \{\blockingjob{i}{} | i=0,\dots,M-1\}$ with release date $M = 12|X|+|C|+1 = 10160n+1$.
With similar arguments as before, we can show that there is an optimum schedule in which exactly $\tilde{c}$ (clause) jobs are unmatched before time $M$, with only exactly $\tilde{c}$ incompatible variable jobs remaining unscheduled after time~$2M$.
The sum of completion times of this schedule is $a n^2 + bn\tilde{c} + \tilde{c}(\tilde{c}+1)/2 + \mathcal{O}(n)$ for some constants $a,b\in\mathbb{N}$.

Now consider a ``good'' instance with at most $\delta n$ unsatisfiable clauses. The optimum schedule has a sum of completion times of at most
$$ an^2 + b\delta n^2 + n^2\delta^2/2 + \mathcal{O}(n). $$
On the other hand, a ``bad'' instance with at least $(1-\delta)n$ unsatisfiable clauses leads to a sum of completion times of at least
$$ an^2 + b(1-\delta) n^2 + n^2(1-\delta)^2/2 + \mathcal{O}(n).$$
Since, for $n \to \infty$, good and bad instance cannot be distinguished in polynomial time unless $\mathsf{P}=\NP$~(cf.~\cite{BermanKS2003}), no algorithm can approximate the sum of completion times by a factor better than
$$ \frac{a + b(1-\delta) + (1-\delta)^2}{a + b\delta + \delta^2} \underset{\delta \to 0}{\rightarrow} \frac{a+b+1}{a}, $$
which is constant.
\end{proof}

\section{Proofs of Section~\ref{sec:constant_segments}:\newline Dynamic programs for restricted compatibilities}

In this section we present the dynamic programs for a constant number of compatibility types and a constant number of segments.

\singleSegPoly*
\begin{proof}
  Let~$\jobs^1, \dots, \jobs^{\ctnum}$ be a partition into subsets of invariant compatibility type.
  We consider each subset~$\jobs^c$ ordered non-increasingly by release dates and denote by~$J_i^c$ the%
  ~$i$-th job of~$J^c$ in this order,
  i.e., the $(n_c-i)$-th job to be released.
  Each entry~$T[i_1,t_1, \dots, i_{\ctnum}, t_{\ctnum}; c]$ of our dynamic programming table is designed to hold the minimum sum of completion times that can be achieved when scheduling only the $i_{c'}$ jobs of largest release date of each compatibility type~$c'$, such that $J_{i_{c'}}^{c'}$ is not scheduled before time~$t_{c'}$ and~$J_{i_c}^c$ is the first job that is scheduled.
  We start by setting~$T[0,t_1, \dots, 0, t_{\ctnum}; c]=0$ and define the dependencies between table entries in the following.
  
Let~$C(j,t)=\max\{t,r_j\}+p+\tau_1$ denote the smallest possible completion time of job~$j$ when scheduling it not before~$t$.
  Depending on the types of jobs~$j_1,j_2$ (and in particular of their directions), we can compute in constant time the earliest time~$\theta(j_1,t_1,j_2,t_2)$ not before~$t_1$ that job~$j_1$ can be scheduled at, assuming that~$j_2$ is scheduled earlier at time~$\max\{t_2,r_{j_2}\}$.
  We let~$\delta_{cc'}=1$ if~$c=c'$ and~$\delta_{cc'}=0$ otherwise, abbreviate~$\theta_{c'}=\theta(J_{i_{c'}}^{c'},t_{c'},J_{i_c}^c,t_c)$, and get the following recursive formula for~$i_c>0$:
  \[
  T[i_1,t_1,\dots, i_{\ctnum},t_{\ctnum};c] = 
  \min_{c':i_{c'} \neq 0} \lbrace 
  T[i_1-\delta_{1c}, \theta_1,
  \dots,       
  i_\ctnum-\delta_{\ctnum c}, \theta_\ctnum; c'] + C(J_{i_c}^c,t_c) \rbrace.
  \]
  
  We can fill out our table in order of increasing sums~$\sum i_c$ and finally obtain the desired minimum completion time as~$\min_c T[n_1,0,\dots,n_\ctnum,0;c]$. We can reconstruct the schedule from the dynamic programming table in straightforward manner. It remains to argue that we only need to consider polynomially many times~$t_c$. This is true, since all relevant times are contained in the set~$\{\rel{j} + k\transit{} + \ell\proc{} \mid j, k, \ell \leq n\}$ of cardinality~$\O(n^3)$.
\end{proof}

\begin{restatable}{theorem}{thmIdenticalConstantSegMakespan}\label{thm:identical-constantseg-makespan}
The bidirectional scheduling problem can be solved in polynomial time if~$m$,~$\ctnum$, and $\transit{i}$ are constant for each~$i\in M$, and~$\proc{j}=1$ for each~$j\in\jobs$.
\end{restatable}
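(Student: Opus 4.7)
The approach is a dynamic program that exploits the constant bounds on $m$, $\ctnum$, and the $\transit{i}$ to keep both the state space and the branching per transition polynomial in $n$. I would first refine the partition of $\jobs$ by the triple (compatibility type, source segment, target segment), obtaining at most $\ctnum m^2$ classes, still constant. Jobs within a class are fully interchangeable (identical processing time, identical compatibility sets on every segment, identical route), so by a standard exchange argument an optimal schedule exists in which, on every segment, the class-members are started in order of release date: swapping the identities of two class-members at two start times changes neither feasibility nor the multiset of completion times.

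For the DP I observe that all relevant events occur at integer times on a horizon $T \le \max_j \rel{j} + n(1 + \max_i \transit{i})$, polynomial in the input. A state at integer time $t$ is specified by (i)~for each segment $i$ and each class $c$, a \emph{backlog count} $b_{i,c}\in\{0,\dots,n\}$ recording how many jobs of class $c$ have become available at segment $i$ (either released there or finished on $i\mp 1$) but have not yet started on $i$; and (ii)~for each segment $i$, an \emph{in-transit configuration} listing, for every job currently occupying $i$, its class and its remaining running time. Since $\proc{j}=1$, at each integer time at most one job per direction starts on a given segment, so the in-transit configuration on $i$ describes $\O(\transit{i})=\O(1)$ jobs, each drawn from constantly many classes and with a constant remaining-time value. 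The total state space therefore has size $(n+1)^{\O(m\ctnum)}\cdot \O(1)^m\cdot T$, which is polynomial.

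A transition from $t$ to $t+1$ branches, on each segment independently, over the choice of (at most) one rightbound class and (at most) one leftbound class to start at time $t$. Each choice must respect availability ($b_{i,c}\ge 1$), conflict with the in-transit configuration (no same-direction or incompatible opposite-direction job still occupying $i$), and the cross-segment precedence (a job must have completed its previous segment, which the backlog count on $i$ encodes by construction). When a job finishes its target segment, its completion time $t{+}1$ is added to the accumulated objective; otherwise it moves to the next segment's backlog. Each segment offers $\O(\ctnum^2)$ branching options, so there are $\O(\ctnum^{2m})=\O(1)$ transitions per state, each computable in constant time. Reading off the minimum accumulated cost at $t=T$ over all states with empty backlogs and empty in-transit configurations gives the optimum.

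The main obstacle is verifying that this compact state satisfies the Bellman principle. The crucial point is that the FIFO-within-class restriction, combined with the backlog counts, determines exactly which concrete jobs will be scheduled next in each class, so that the optimal contribution to the objective from any continuation depends on the current state alone and not on the history by which it was reached. Once this is checked, bounding the horizon polynomially, arguing that the in-transit configurations on each segment have constant size, and enumerating transitions in constant time all follow directly from $\proc{j}=1$ together with the constancy of $\transit{i}$, $m$, and $\ctnum$.
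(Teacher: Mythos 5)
Your proposal is correct and follows essentially the same route as the paper's proof: partition the jobs into the constantly many classes $J^c_{s,t}$, observe that $p_j=1$ and integrality confine each segment's occupancy to a constant-size configuration, and run a dynamic program over states consisting of per-class availability counts plus the in-transit configuration of each segment, with constantly many transitions per integer time step. The only differences are cosmetic bookkeeping (you track remaining running times and add $C_j$ upon arrival, while the paper tracks positions and propagates a ``partial completion time''), so no further comparison is needed.
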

\begin{proof}
  Again, we consider subsets of identical jobs. In addition to their conflict type~$c$, we further distinguish jobs by their start and target segments~$s,t$ and form subsets~$J_{s,t}^c$ correspondingly. The number of subsets is bounded by~$\ctnum m^2$.
  Since all release times are integer and since~$p_j=1$, we only need to consider integer points in time. Hence, only~$\tau_i+1$ possible positions need to be considered for a job running on segment~$i$, and no two jobs of the same direction can occupy the same position.
  The state of the system can be fully described by (i) the number of available jobs per segment and~$J_{s,t}^c$, and (ii)~for each position on each segment and each~$J_{s,t}^c$, the fact whether a job of~$J_{s,t}^c$ is occupying this position.
  The number of states is bounded by $\prod_{i=1}^m n^{\ctnum{}m^{2}}\cdot\prod_{i=1}^m 2^{\ctnum{}m^2(\transit{i}+1)}=\mathrm{poly}(n)$.
  
  We define the successors of each state to be all states that can be reached in one time step where not all jobs wait, or by waiting for the next release date.
  This way, the state representation changes from one state to the next.
  The system always makes progress towards the final state where each job has arrived at its target.
  The state graph can thus not have a cycle, and we may consider states in a topological order.
  We formulate a dynamic program that computes for each state the smallest partial completion time to reach the state, where the partial completion time is defined as the sum of completion times of all completed jobs plus 
the current time for each uncompleted job.
  The dynamic program is well-defined as each value only depends on predecessor states.
\end{proof}

\begin{restatable}{corollary}{corIdenticalConstantSegTotalC}\label{cor:identical-constantseg-totalc}
The bidirectional scheduling problem can be solved in polynomial time if $m$ and $\ctnum$ are constant, $\transit{i}=1$ for each~$i\in M$, and~$\proc{j}=0$ for each~$j\in\jobs$.
\end{restatable}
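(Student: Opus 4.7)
The proof adapts the dynamic program of Theorem~\ref{thm:identical-constantseg-makespan}. The only essential change is that under $\proc{j}=0$ multiple jobs traveling in the same direction may enter a segment at the same integer time, so the binary occupation indicators of that proof must be replaced by integer counts in $\{0,\dots,n\}$. Since $\transit{i}=1$, each segment has exactly one transit position, and it is occupied for exactly one time unit. As before, I would partition the jobs by compatibility type $c$ and start--target pair $(s,t)$ into $O(\ctnum m^2)$ subsets $\jobs_{s,t}^c$. A standard exchange argument (using integrality of release dates, processing times, and transit times) shows that there is an optimal schedule whose start times on every segment are integer, so it suffices to drive the DP on integer time steps.

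For the state at integer time~$t$, I would record, for each segment~$i$ and each subset $\jobs_{s,t}^c$, the number of jobs of that subset that have reached segment~$i$ and are waiting to enter. Each counter lies in $\{0,\dots,n\}$ and there are $O(\ctnum m^3)$ counters in total, yielding at most $n^{O(\ctnum m^3)}$ states per time step, polynomial for constant~$m$ and~$\ctnum$. A transition from~$t$ to~$t+1$ specifies, for each segment, how many jobs of each subset currently waiting there are admitted to enter at time~$t$; admitted jobs occupy the segment during $[t,t+1)$, after which they either complete (if $i$ is their target) or join the waiting queue at the adjacent segment. Feasibility on a segment reduces to checking compatibility among the at most~$\ctnum$ types present in opposite directions, which is encoded by a constant-size relation; same-direction jobs are trivially non-conflicting because $\proc{j}=0$.

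The objective is tracked as in Theorem~\ref{thm:identical-constantseg-makespan}: the entry associated with a state stores the minimum \emph{partial completion time}, namely the sum of completion times of already completed jobs plus the current time charged once per uncompleted job. The state graph is acyclic (time is monotone), the horizon is polynomially bounded (e.g.\ by $\max_j r_j + mn$), and each state has at most $n^{O(\ctnum m^2)}$ outgoing transitions, so the overall running time is polynomial. The optimum sum of completion times equals the minimum partial completion time over terminal states in which all counters are zero. The main obstacle is the verification that count-based states are expressive enough and that integer start times are without loss of optimality; both follow from the interchangeability of jobs inside a subset $\jobs_{s,t}^c$ and a routine rounding argument for start times, using that all input parameters are integers.
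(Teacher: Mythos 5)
Your proposal is correct and follows essentially the same route as the paper: the paper's proof simply observes that with integer release dates, $\proc{}=0$ and $\transit{i}=1$ no job is in transit at any integer time, so the occupation part of the state in the dynamic program of Theorem~\ref{thm:identical-constantseg-makespan} disappears and only the per-segment, per-subset counts remain, which is exactly your state space. The one detail to tighten is the time horizon: stepping through every integer up to $\max_j r_j + mn$ is only pseudo-polynomial under binary encoding of release dates, so, as in the proof of Theorem~\ref{thm:identical-constantseg-makespan}, you should allow a transition that jumps idle periods to the next release date rather than iterating unit steps.
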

\begin{proof}
  Since all release dates are integer, at each integer point in time no jobs are running on any segment. We can thus use a simpler version of the dynamic program we introduced in the proof of Theorem~\ref{thm:identical-constantseg-makespan}.
\end{proof}

\end{document}